%% file: main.tex
\documentclass[letterpaper,11pt]{article}
\input{defs}

\usepackage[margin=1in]{geometry}
\usepackage{fullpage}
\usepackage{microtype}
\usepackage{silence}
\DeclareMathOperator{\vol}{\mathbf{vol}}

\newcommand*\samethanks[1][\value{footnote}]{\footnotemark[#1]}

\title{Dynamic Connectivity, Minimum Spanning Tree, and 2-Edge Connectivity with Polylogarithmic Worst-Case Update Time}

\author{Simon Meierhans\thanks{The research leading to these results has received funding from grant no. 200021 204787 of the Swiss National Science Foundation. Simon Meierhans was supported by a Google PhD Fellowship.} \\
ETH Zurich \\
mesimon@inf.ethz.ch 
\and  
Maximilian Probst Gutenberg\samethanks[1] \\
ETH Zurich \\
maximilian.probst@inf.ethz.ch
\and
Yu-Cheng Yeh \\
ETH Zurich \\
yuyehy@student.ethz.ch
}
\date{}

\begin{document}
\pagenumbering{gobble}

\maketitle

\input{abstract}
\vfill

\pagebreak
\tableofcontents
\pagebreak
\pagenumbering{arabic}

\input{intro}
\input{overview}
\input{preliminaries}
\input{MST}
\input{2EdgeConnectivity}
\input{coreGraphFramework}

\newpage
\bibliographystyle{alpha}
\bibliography{refs}

\newpage
\appendix
\crefalias{section}{appendix}
\input{appendix}

\end{document}

%% file: defs.tex
\usepackage[dvipsnames,svgnames,table]{xcolor}
\usepackage{enumitem}
\usepackage{graphicx}
\usepackage{fancybox}
\usepackage{comment}
\usepackage{xcolor}
\usepackage{nameref}
\usepackage[noadjust]{cite}
\usepackage{bbm}
\usepackage{soul}
\usepackage{mathtools}
\usepackage{tikz}
\usetikzlibrary{positioning, backgrounds, arrows.meta}

\definecolor{ForestGreen}{rgb}{0.1333,0.5451,0.1333}
\definecolor{DarkRed}{rgb}{0.8,0,0}
\definecolor{Red}{rgb}{1,0,0}
\usepackage[linktocpage=true,
pagebackref=true,colorlinks,
linkcolor=ForestGreen,citecolor=ForestGreen,
bookmarks,bookmarksopen,bookmarksnumbered]
{hyperref}
\usepackage[nottoc,numbib]{tocbibind}

\usepackage{amsmath}
\usepackage{amssymb}
\usepackage{amsthm}

\usepackage[ruled, noend, linesnumbered]{algorithm2e}

\usepackage{subcaption}

\usepackage{thm-restate}

\usepackage{float}
\usepackage{cleveref}

\newtheorem{theorem}{Theorem}[section]

\newtheorem{corollary}[theorem]{Corollary}
\newtheorem{lemma}[theorem]{Lemma}
\newtheorem{observation}[theorem]{Observation}

\newtheorem{claim}[theorem]{Claim}

\newtheorem{fact}[theorem]{Fact}

\newtheorem{definition}[theorem]{Definition}

\newtheorem*{theorem*}{Theorem}
\newtheorem*{corollary*}{Corollary}
\newtheorem*{conjecture*}{Conjecture}
\newtheorem*{lemma*}{Lemma}
\newtheorem*{thm*}{Theorem}
\newtheorem*{prop*}{Proposition}
\newtheorem*{obs*}{Observation}
\newtheorem*{definition*}{Definition}

\newtheorem*{remark*}{Remark}
\newtheorem*{rec*}{Recommendation}

\newenvironment{fminipage}%
  {\begin{Sbox}\begin{minipage}}%
  {\end{minipage}\end{Sbox}\fbox{\TheSbox}}

\def\defeq{\stackrel{\mathrm{def}}{=}}

\def\ceil#1{\left\lceil #1 \right\rceil}

\newcommand\ww{\boldsymbol{\mathit{w}}}

\renewcommand{\deg}{\operatorname{deg}}

\newcommand\R{\mathbb{R}}

\DeclareMathOperator{\msf}{MSF}

\newcommand{\eps}{\varepsilon}

\renewcommand{\tilde}{\widetilde}

\DeclareFontFamily{U}{mathb}{\hyphenchar\font45}
\DeclareFontShape{U}{mathb}{m}{n}{<5> <6> <7> <8> <9> <10> gen * mathb
<10.95> mathb10 <12> <14.4> <17.28> <20.74> <24.88> mathb12}{}
\DeclareSymbolFont{mathb}{U}{mathb}{m}{n}
\DeclareMathSymbol{\rcirclearrow}{\mathbin}{mathb}{'367}

\renewcommand{\bar}{\overline}

\newif\ifrandom
\randomtrue

\newcommand{\todolater}[1]{}

\DeclareUnicodeCharacter{2113}{$\ell$}

%% file: abstract.tex
\begin{abstract}
We give fully dynamic algorithms for maintaining connectivity, minimum spanning tree, and $2$-edge connectivity of a graph with worst-case polylogarithmic update time. Our algorithms are randomized and succeed with high probability against an adaptive adversary. For the minimum spanning tree and $2$-edge connectivity problems, this improves over the subpolynomial update time bounds obtained by Nanongkai, Saranurak, and Wulff-Nilsen [FOCS'17], Jin and Sun [FOCS'21], and Jin, Sun, and Thorup [SODA'24], respectively.

The only randomized component of our algorithms is the computation of static expander decompositions, and a deterministic algorithm for said problem would directly imply deterministic algorithms for all three problems. This reduction is novel even for the connectivity problem.
\end{abstract}

%% file: intro.tex
\section{Introduction}
\label{sec:intro}

In this article, we study three fundamental problems on dynamic graphs:
\begin{itemize}
    \item \emph{Connectivity:} At its most basic, the connectivity problem asks us to determine if a graph is connected or not. This was the first problem ever to be studied on dynamic graphs, and has been the subject of a long line of research \cite{spira1975finding, chin1978algorithms, harel1982line, frederickson1985data, galil1991fully, HenzingerKing95, holm01, thorup2000near, patrascu2007planning, kapron, WulffNilsen13,wang2015improvedrandomizeddatastructure, gibb2015dynamicgraphconnectivityimproved, kejlberg2016faster, NS17, W17, NSW17, chuzhoy2020deterministic, expander_hierarchy, huang2023fully, meierhans2026dynamic}. Algorithms for the connectivity problem maintain a maximal spanning forest, and the main challenge is to find a replacement edge, or certify that there is none, whenever a forest edge gets deleted. 
    \item \emph{Minimum Spanning Tree (MST):} The minimum spanning forest of a graph is the maximal spanning forest with minimum total weight. By adding $\infty$-weight edges, one can assume w.l.o.g. that $G$ is connected; thus, the problem is typically referred to as the \emph{MST} problem. 
    MSTs are among the first graph-theoretic objects taught to undergraduate students, and the problem has a rich history that is intimately connected to the connectivity problem \cite{frederickson1985data, eppstein90planarmst, EPPSTEIN1994237, it97, holm01, holm2015faster, W17, NSW17}.
    \item \emph{2-Edge Connectivity:} The $2$-edge connectivity problem instead seeks to identify maximal components that stay connected even after removing an edge \cite{westbrook1992bridge, galil1991fully, frederickson1997ambivalent, it97, gibb2015dynamicgraphconnectivityimproved, holm2017bridge, jin2021fully, jin2024fully}. If removing a single edge disconnects a component, we call the edge a \textit{bridge}. This problem has been studied extensively and is usually solved by maintaining a maximal spanning forest, and then additionally recording which forest edges are bridges.
\end{itemize}

Despite the importance of these problems, the fastest algorithms for minimum spanning tree and $2$-edge connectivity still require sub-polynomial worst-case update time. In this article, we resolve this problem by providing $\tilde{O}(1)$ worst-case update time algorithms for all three problems.\footnote{In this article, we use $\tilde{O}(\cdot)$ notation to suppress poly-logarithmic factors in the number of vertices of the graph.}

\begin{restatable}{theorem}{mainThm} 
    \label{thm:main}
    There are fully-dynamic graph algorithms for the Connectivity, Minimum Spanning Tree, and 2-Edge Connectivity problems with $\tilde{O}(1)$ worst-case update time. All algorithms are randomized and withstand an adaptive adversary, and they succeed w.h.p. at each update. For the Connectivity and Minimum Spanning Tree problems, the algorithms report failure.
\end{restatable}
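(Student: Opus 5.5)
The plan is to derive all three results from one deterministic dynamic framework, with the only randomized step being static expander decompositions. The framework would maintain a hierarchy of core graphs (via a dynamic expander hierarchy), each level of which is a sparsified, vertex-reduced proxy for the graph below it. The core object is a data structure with $\tilde{O}(1)$ worst-case update time. It maintains a forest $F$ together with a small core graph $H$ on the endpoints of a set of $\tilde{O}(1)$-recourse ``cut'' forest edges, such that the relevant cycle/cut structure of $G$ (connectivity, minimum-weight replacement edges, bridges) is preserved in $H$ plus $F$. I would build this by recursing on levels with geometrically shrinking sizes. Each level uses a static expander decomposition, computed by a randomized algorithm that succeeds w.h.p. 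Since the adversary only sees the outputs, and the decomposition is recomputed from scratch on fresh randomness, adaptivity is handled: correctness of each decomposition is verified, and failure is reported otherwise. Worst-case bounds come from the standard technique of spreading rebuilds over time, i.e.\ maintaining several staggered copies of each level that are rebuilt in the background.

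For connectivity, once a tree edge is deleted I would search for a replacement edge. Inside an expander (with bounded conductance $\phi = 1/\tilde{O}(1)$), any cut with small volume side has many crossing edges. So a replacement is found by exploring the smaller side within $\tilde{O}(1)$ volume charged to the expander's pruning, or it lies in the core graph at the next level. Recursing gives a spanning forest with $\tilde{O}(1)$ recourse per update. For MST, I would use the classical reduction replacing each level by a ``minimum-weight'' variant. The expander decomposition is done on weight classes (buckets of edges by rank), and the core graph keeps for each pair of tree paths only the lightest edge, in the spirit of Frederickson/Holm et al. compressed clusters. The cycle property then guarantees that the MSF of the core graph plus contracted forest equals the MSF of $G$ restricted accordingly. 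Top trees over $F$ answer path-max queries to check exchanges.

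For 2-edge connectivity I would maintain the spanning forest from above and, for each tree edge, a cover count of non-tree edges whose fundamental cycles pass through it. Bridges are the tree edges with zero cover. Non-tree edges internal to a core cluster are summarised in the core graph by edges between cluster boundary vertices, so each non-tree edge update changes $\tilde{O}(1)$ core edges. Covering is then maintained with a top tree supporting path-add and min queries, as in Holm--de Lichtenberg--Thorup's 2-edge structure but without amortization, since recourse is already worst-case bounded.

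The main obstacle is the worst-case recourse guarantee of the core graph hierarchy under adaptive updates. Specifically, one must show that when a tree edge is deleted, the number of changes propagated upward is $\tilde{O}(1)$ in the worst case, and not merely on average. My first attempt would be expander pruning with worst-case bounds: each update removes $O(1/\phi)$ volume per level. Periodic rebuilds are then de-amortized with $O(\log n)$ staggered copies, and with $O(\log n)$ levels the total is polylogarithmic.
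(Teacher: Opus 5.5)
Your plan has a genuine gap at exactly the point you flag as the main obstacle. The claim that $O(1/\phi)$ work per level over $O(\log n)$ levels is ``polylogarithmic in total'' silently assumes that per-level recourse \emph{adds}. In a hierarchy where each level sparsifies the one below, a change at level $i$ that triggers $\tilde{O}(1)$ changes at level $i+1$ compounds to $(\log n)^{\Theta(\log n)} = n^{\Theta(\log\log n)}$. Such triggering happens, for example, when pruned edges leave or are re-routed in the next level's graph. This snowballing is why Nanongkai--Saranurak--Wulff-Nilsen and Jin--Sun were limited to $o(\log n)$ levels and $n^{o(1)}$ time.

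The paper escapes it through a structural property your construction lacks. Each edge sparsifier removes an edge from $H$ \emph{only} when that edge is deleted from $G$; pruned edges and new heap minima are only ever added. Inserted edges are passed up verbatim. Hence a deletion at layer $i-1$ causes one deletion and at most $\eta$ insertions at layer $i$, and insertions cause no further recourse. So layer $i$ receives at most $(\Lambda+1)\eta$ insertions per update. With rebuild periods $x_i \approx 2^{\Lambda-i}/((\Lambda+1)\eta)$, this keeps $|E(G_i)| \le n - c(G) + m/2^i$ over $\Lambda = O(\log m)$ layers.

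The problem-specific sparsifiers, which carry the real content, are missing or would fail.

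\textbf{MST.} Keeping the lightest edge per pair of tree paths is Frederickson's scheme. With $\tilde{O}(1)$-size pieces there are up to $(m/z)^2 \gg m$ piece pairs, so the rule may discard almost nothing and cannot give $n - c(G) + m/16$ edges. Moreover, the cycle property only discards an edge whose endpoints are joined by \emph{lighter} edges, which tree paths do not guarantee. The missing idea is the paper's peeling recursion:
\begin{enumerate}
\item remove the $|E|/(64\log_2 n)$ heaviest edges and expander-decompose the rest;
\item recurse on the clusters;
\item in the unbalanced case, keep for each vertex outside $X_{\text{large}}$ only its lightest heavy edge into $X_{\text{large}}$, which is valid since $X_{\text{large}}$ is connected by strictly lighter edges;
\item charge these kept edges to component halvings.
\end{enumerate}

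\textbf{2-edge connectivity.} Explicit fundamental-cycle cover counts cannot be maintained in worst-case time, even with $O(1)$ forest recourse. A single swap $e \to e'$ changes the cover path of every non-tree edge whose cycle used $e$, possibly $\Omega(m)$ of them. Your claim that cluster summaries change by only $\tilde{O}(1)$ edges is unproven, and it is exactly where vertex-sparsifier approaches lose super-logarithmic recourse per level. The paper never embeds:
\begin{enumerate}
\item Inside pruned expanders, a tree edge is certified either by volume or by a boundary edge of a small side. In the volume case both sides have volume at least $2\gamma/\phi$, so expansion supplies a second path and nothing is stored. In the small-side case the congestion is below $2\gamma/\phi$.
\item A centroid argument shows only $\tilde{O}(1)$ edges change status per deletion.
\item The sparsifier is $H = (G\setminus N)\cup\msf(G\cap N)$.
\item At the top, mass is placed directly on nested forest paths certified as non-bridges.
\end{enumerate}

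\textbf{Two smaller points.} First, ``exploring the smaller side within $\tilde{O}(1)$ volume'' finds no replacement edge when both sides of the tree cut have large volume. The paper instead obtains connectivity as the special case of MST, with replacements found when small reduced layers are rebuilt. Second, you cannot cheaply verify that clusters are expanders. The paper only checks the crossing-edge count and separately guards cluster connectivity.
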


Furthermore, our algorithms only use randomness to compute static expander decompositions in nearly-linear time; a deterministic algorithm for this problem thus immediately de-randomizes them. Such a reduction to deterministic expander decomposition is novel even for the connectivity problem.  

We next review previous work for fully-dynamic Connectivity, MST, and $2$-Edge Connectivity, and describe fundamental roadblocks for algorithms with poly-logarithmic worst-case update time. In \Cref{sec:overview}, we give an overview of our algorithm and describe how we overcome these challenges.

\subsection{Previous Work}

\paragraph{Connectivity.}

Algorithms for the dynamic connectivity problem maintain a maximal spanning forest. After an edge deletion, we are tasked with finding a replacement edge across the resulting cut or certifying that no such edge exists. Two algorithms are currently known to achieve poly-logarithmic worst-case update time: a Monte-Carlo algorithm by \cite{kapron} that uses uniform sub-sampling and clever bit tricks for finding an edge, and a Las-Vegas algorithm by \cite{meierhans2026dynamic} that reduces the problem to maintaining a spanning tree on an expander, where replacement edges can be found quickly by uniformly sampling edges incident to the smaller tree component. The former algorithm heavily leverages the oblivious adversary assumption, while the latter withstands an adaptive adversary. It remains a major open question to obtain an algorithm with deterministic guarantees. However, both of these algorithms are difficult to de-randomize. In contrast, our novel approach does not require subsampling and only requires randomization to compute a static expander decomposition, one of the key primitives in modern graph algorithms. 

\paragraph{Minimum Spanning Tree.} 

For the MST problem, it seems much harder to leverage randomness to find a replacement edge across a cut, since a specific edge must be identified, namely the one with the smallest weight. The amortized algorithm in \cite{holm01} checks all edges incident to vertices on the smaller side of the cut, but then promotes them to build a hierarchy that prevents extensive searches from occurring frequently. In a major breakthrough \cite{W17, NS17, NSW17}, a dynamic minimum spanning tree algorithm was given, achieving sub-polynomial worst-case update time. The algorithm maintains a hierarchy based on expander decompositions. We give a novel way to leverage a key insight by \cite{W17} to design an alternative hierarchy. While we still rely on expander decompositions, our new approach allows us to decouple their usage from the hierarchy. This allows us to obtain very low recourse, which is necessary to avoid snowballing behaviour that forces sub-logarithmic hierarchy levels, typically resulting in sub-polynomial update times.

\paragraph{2-Edge Connectivity.}
Given an algorithm for fully dynamic Connectivity/MST, one can easily query whether an edge in a maintained spanning tree is a bridge by simply removing it and then checking if a replacement edge is found. However, this strategy is not suitable for quickly answering whether the whole graph is $2$-edge connected. Instead, the seminal results \cite{frederickson1997ambivalent, holm01} suggest embedding edges into a spanning tree $T$, i.e., for each off-tree edge $f = xy$, all edges on the $x$-to-$y$ path in $T$ are marked as covered. It is well-known that a graph is $2$-edge connected if all edges of the tree $T$ are covered.

The key challenge with this approach is that some edges $e$ in $T$ might have high congestion, i.e., $\Omega(m)$ edges $f$ might cover $e$. Now, when $e$ is deleted from $T$ and replaced by another edge $e'$ that previously covered $e$, we have to re-embed all edges $f$ that previously covered $e$, causing high update times. While \cite{holm01} designs a clever approach to amortize such events, their strategy can not mitigate the worst case.

Towards obtaining worst-case guarantees, \cite{chalermsook2021vertex} proposed to build a vertex-sparsifier that removes redundant coverage information. This sparsifier is exploited in \cite{jin2021fully, jin2024fully} to give a sub-polynomial worst-case update time algorithm for the $c$-edge connectivity for all constants $c > 0$. However, again, their vertex-sparsifier is computed directly on an expander decomposition, which can cause (super-)logarithmic recourse at each level. Thus, they need to keep the number of layers of their hierarchy sub-logarithmic, yielding sub-polynomial worst-case update time.

Here, we give an alternative approach that combines and extends these ideas: we decompose the graph into expanders and maintain coverage within the expanders. In the case of high congestion, we do not need to maintain the coverage explicitly, since expanders guarantee that such components are stable. This allows us to maintain the $2$-edge connectivity information of all edges in the expander. Finally, the coverage information of edges that are in the boundary of the expander decomposition can be deferred to the next level of the hierarchy. Thus, instead of having a few changes to the vertex sparsifier after each update, we have a few additional edges whose coverage information is deferred until later. While the change is subtle, we show that the latter changes can be re-incorporated with poly-logarithmic worst-case update time.

\paragraph{Beyond Fully-Dynamic Connectivity, MST and $2$-Edge Connectivity.}

As mentioned above, \cite{jin2021fully} gives an algorithm with sub-polynomial worst-case update time to maintain pairwise $c$-edge connectivity, i.e., for any pair of vertices $u,v$ they can return whether there are at least $c$ edge-disjoint paths between $u$ and $v$ in the underlying graph. While previously, $2$- and $3$-edge connectivity were studied \cite{HenzingerKing95, frederickson1997ambivalent, galil1992fully}, no sub-polynomial worst-case update time algorithm was known.

In the dynamic global mincut problem, the graph has to report the size of the global mincut, i.e., the smallest number of edges that disconnect it. While this problem has been previously studied \cite{thorup2007fully, jin2024fully, de2025tree}, only very recently, an algorithm with sub-polynomial worst-case update time \cite{el2025fully, el2026deterministic} was given for the case where the mincut is of sub-polynomial size, or for returning a $(1+\eps)$-approximation. 

In adjacent works, the biconnectivity problem was studied \cite{henzinger1995fully, henzinger1995certificates, henzinger2000improved, holm01, holm2025fully}, where the algorithm has to output whether there is a vertex in the underlying graph whose removal disconnects the graph. While amortized algorithms with poly-logarithmic update time were given in \cite{holm01, holm2025fully}, it remains an open problem to find an algorithm even with sub-polynomial worst-case update time.

%% file: overview.tex
\section{Overview} 
\label{sec:overview}

\paragraph{Dynamic Algorithms via the Core Graph Framework.}

The dynamic core graph framework suggests interleaving dynamic edge and vertex sparsification to build dynamic algorithms. The first layer might consist of an $n$-vertex graph with $m = \tilde{O}(n)$ edges, the second then works on a graph with at most $n/k$ vertices and $\tilde{O}(n/k)$ edges, and so on. This naturally leads to a hierarchy of graphs with geometrically decreasing sizes, which enables us to re-initialize the smaller graphs more frequently.\footnote{To obtain worst-case update times, one needs to de-amortize re-initializations; however, this can be achieved by standard techniques.} 

For connectivity, recent work \cite{meierhans2026dynamic} has shown vertex sparsification to be rather straightforward, and we observe that their algorithm can be naturally extended to give vertex sparsifiers also for the minimum spanning forest and $2$-edge connectivity problems. We thus focus on presenting our novel edge sparsification algorithms in this overview. 

We next outline our dynamic edge sparsifier for MST in \Cref{subsec:overview_mst_spars} and our dynamic edge sparsifier for 2-edge connectivity in \Cref{subsec:overview_2edge_spars}. The full technical details for the edge sparsifiers are given in \Cref{sec:mst} and \Cref{sec:2edgeconn}, respectively. In \Cref{sec:mainSec}, we show that given these edge sparsifiers as subroutines, it is rather straightforward to develop algorithms that solve the respective problems.

\subsection{Dynamic Edge Sparsifier for Minimum Spanning Tree}
\label{subsec:overview_mst_spars}

\paragraph{Dynamic MST Sparsifiers.}

Let us start by defining the notion of an edge sparsifier for the minimum spanning tree problem. We call a subgraph $H \subseteq G$ a \textit{minimum spanning forest (MSF) sparsifier} if the minimum spanning forest of $H$ and $G$ coincide. Since a minimum spanning tree itself can be computed extremely fast, this concept is superfluous for static graphs. For dynamic graphs, however, maintaining an MST sparsifier with few edges seems very tricky at first.

One might first be tempted to try to extend the recent edge sparsifier for connectivity from Meierhans and Probst Gutenberg \cite{meierhans2026dynamic} or to exploit the techniques by Kapron, King, and Mountjoy \cite{kapron} that have previously been the key to poly-logarithmic worst-case update time for the connectivity problem. But both algorithms attempt to reconnect two connected components $A, B$ of $H$ after the deletion of an edge $f$ with endpoints in $A$ and $B$ by sampling uniformly from the cut $E_G(A, B)$. \cite{kapron} relies on a clever bit trick and the obliviousness of the adversary; \cite{meierhans2026dynamic} reduces to expanders and can therefore assume $|E_G(A, B)| = \tilde{\Omega}(\vol_G(A))$, thus sampling edges incident to $A$ uniformly at random yields a cut edge after only $\tilde{O}(1)$ samples with high probability.

However, note that for the MST problem, we are asking for the edge in $E_G(A, B)$ of \textbf{minimum} weight, not just an arbitrary edge. Since $|E_G(A, B)|$ can be rather large, previous strategies do not extend, and a novel approach needs to be designed.

While it is natural to consider bucketing by edge weights, a naive approach -- bucketing all edges with weight in $[(1+\eps)^i, (1+\eps)^{i+1})$ and maintaining connectivity sparsifiers on these edge sets -- yields sparsifiers that only approximate the MST in weight by a $(1+\eps)$-factor. This provides no improvement over the approximate minimum spanning tree obtained from a connectivity algorithm that returns a spanning forest in a black-box manner. 

\paragraph{Peeling Non-important High-Weight Edges.} 

In our approach, we also partition the edges by weight; however, we use a much more careful algorithm to ensure that the exact MST remains contained in $H$. 

Our approach starts with the following fundamental, but crucial, insight by Wulff-Nilsen \cite{W17}: if we have two buckets $E_\text{low}$ and $E_\text{high}$ containing the low-weight and high-weight edges respectively, then an edge in $E_\text{high}$ can only be in the MST if it crosses between two connected components in the graph $G[E_\text{low}]$. Below is a more formal version of this statement. 

\begin{observation}[\cite{W17}] \label{obs:wn}
    Let $G = (V, E, \ww)$ be a weighted graph, and let $E_{\textnormal{low}}, E_{\textnormal{high}}$ be a partition of $E$ such that $\ww(e) \leq \ww(f)$ for all $e \in E_{\textnormal{low}}, f \in E_{\textnormal{high}}$. Then the minimum spanning tree of $G$ is in $(V, E_{\textnormal{low}} \cup E_{\textnormal{cross}})$ where \begin{equation*}
        E_{\textnormal{cross}} = \{uv \in E_{\textnormal{high}} \mid u,v \textnormal{ in separate connected components of } (V, E_{\textnormal{low}}) \}.
    \end{equation*}
\end{observation}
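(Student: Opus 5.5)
We argue via the cycle property of minimum spanning forests, with care for ties. The statement says "the" minimum spanning tree. If weights are not distinct, we either assume a consistent tie-breaking rule or prove the slightly stronger claim that \emph{some} minimum spanning forest of $G$ lies in $G' = (V, E_{\textnormal{low}} \cup E_{\textnormal{cross}})$. We fix a strict total order on $E$ that refines $\ww$ and ranks every edge of $E_{\textnormal{low}}$ below every edge of $E_{\textnormal{high}}$. The hypothesis $\ww(e) \le \ww(f)$ for $e \in E_{\textnormal{low}}$, $f \in E_{\textnormal{high}}$ makes such an order consistent with $\ww$. Under this order the minimum spanning forest $F$ is unique, and it is a minimum spanning forest for $\ww$.

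The key step is the strict cycle property: if an edge $f$ is the unique maximum of some cycle $C$ in $G$, then $f \notin F$. To see this, suppose $f \in F$. Removing $f$ from $F$ splits its component into two sides. The path $C \setminus \{f\}$ must cross this cut through some edge $g \neq f$. Then $F - f + g$ is a spanning forest that is strictly lighter in the refined order, a contradiction.

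We now apply this to an edge $f = uv \in E_{\textnormal{high}} \setminus E_{\textnormal{cross}}$. By definition, $u$ and $v$ lie in the same connected component of $(V, E_{\textnormal{low}})$, so there is a $u$--$v$ path $P$ using only edges of $E_{\textnormal{low}}$. Every edge of $P$ precedes $f$ in the refined order, so $f$ is the unique maximum of the cycle $P + f$. The cycle property gives $f \notin F$. Hence $F \subseteq E \setminus (E_{\textnormal{high}} \setminus E_{\textnormal{cross}}) = E_{\textnormal{low}} \cup E_{\textnormal{cross}}$, as claimed.

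The only subtle point is the tie-breaking. If some $e \in E_{\textnormal{low}}$ and $f \in E_{\textnormal{high}}$ have equal weight, a different minimum spanning forest could use $f$. The refined order resolves this, and it agrees with whatever canonical tie-breaking the algorithm uses, provided that tie-breaking is compatible with the partition. The rest of the argument is routine.
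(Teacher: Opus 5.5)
Your proof is correct and takes essentially the same route as the paper, which does not prove the observation separately but uses the same cycle-property argument (in the first case of \Cref{clm:sparsifier_initial} and in the overview): an $E_{\textnormal{high}}$ edge whose endpoints are joined by an $E_{\textnormal{low}}$-path is the heaviest edge on a cycle, so it is not in the minimum spanning forest. The only difference is that the paper assumes distinct edge weights without loss of generality, whereas you make tie-breaking explicit through a refined order that places $E_{\textnormal{low}}$ before $E_{\textnormal{high}}$; this is exactly what is needed for ``the'' minimum spanning tree to be well defined when ties are allowed.
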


In \cite{W17} and the subsequent article \cite{NSW17}, the insight above is combined with the expander decomposition framework to produce a collection of smaller sub-problems, onto which they recurse and then combine into a single MSF. However, it is not clear how their rather intricate algorithm can be sped up to have polylogarithmic update time.

Instead, we follow the same insight but give a simple and fast algorithm: we first peel a small fraction of high-weight edges, and then distinguish between two cases. Either this breaks the graph into a bunch of small connected components (the balanced case), or there is a large component after removing the high-weight edges (the unbalanced case).

\paragraph{The Balanced Case.}

We let $E_{\text{high}}$ consist of a $\gamma = c/\log_2 n$ fraction of the edges for a small constant $c > 0$, i.e., $E_{\text{high}}$ contains the $\gamma \cdot |E|$ highest weight edges in $E$. Assume that removing $E_{\text{high}}$ causes the graph to shatter into components $\mathcal{X} = X_1, \ldots X_k$ such that $|X_i| \leq |V|/2$ for all $i$. If this assumption holds repeatedly, iterating this procedure by applying it to each connected component yields singleton vertices after $\log_2 n$ iterations. Then, simply adding all edges that ever end up in a set $E_{\text{high}}$ and cross two components directly yields an MST sparsifier, and since the recursion depth is only $\log_2 n$, at most a $c$ fraction of the edges of the initial graph make their way into the sparsifier. See \Cref{fig:balanced_case} for an illustration of this case. 

However, always reaching this scenario is not only not realistic, but impossible. Each such round increases the density of the remaining connected components, and sooner or later, there is going to be a large connected component $|X_j| \geq |V|/2$. In the next paragraph, we outline our strategy for this crucial case. 

\begin{figure}[H]
        \centering
        \begin{tikzpicture}[
            scale=0.9,
            every node/.style={transform shape},
            vertex/.style={circle, draw=black, fill=white, thick, inner sep=1.8pt},
            blob/.style={draw=gray!20, dashed, line width=2pt},
            heavy edge/.style={red, line width=1.5pt},
            light edge/.style={blue!80, line width=0.8pt},
            discarded/.style={gray!45, dashed, line width=0.8pt}
        ]
    
        \def\placenodes#1{
            \node[vertex] (#1-t1) at (0, 3.1) {};
            \node[vertex] (#1-t2) at (-0.8, 2.3) {};
            \node[vertex] (#1-t3) at (0.8, 2.3) {};
            \node[vertex] (#1-t4) at (0, 1.5) {};
    
            \node[vertex] (#1-l1) at (-1.1, -0.1) {};
            \node[vertex] (#1-l2) at (-2.3, -0.1) {};
            \node[vertex] (#1-l3) at (-1.1, -1.3) {};
            \node[vertex] (#1-l4) at (-2.3, -1.3) {};
    
            \node[vertex] (#1-r1) at (1.1, -0.1) {};
            \node[vertex] (#1-r2) at (2.3, -0.1) {};
            \node[vertex] (#1-r3) at (1.1, -1.3) {}; 
            \node[vertex] (#1-r4) at (2.3, -1.3) {};
            \node[vertex] (#1-r5) at (1.7, -0.7) {};
        }
    
        \def\drawblobs{
            \begin{scope}[on background layer]
                \draw[blob] (0, 2.3) circle (1.25);
                \node[font=\large, text=black] at (-1.5, 3.4) {$X_1$};
                
                \draw[blob] (-1.7, -0.7) circle (1.25);
                \node[font=\large, text=black] at (-3.2, 0.5) {$X_2$};
                
                \draw[blob] (1.7, -0.7) circle (1.25);
                \node[font=\large, text=black] at (3.2, 0.5) {$X_3$};
            \end{scope}
        }
    
        \begin{scope}[shift={(0,0)}]
            \node[font=\bfseries, align=center] at (0, 4.3) {Original Graph $G$};
            \placenodes{A}
            \drawblobs
    
            \draw[heavy edge] (A-t2) -- (A-t3);
            \draw[light edge] (A-t1) -- (A-t2);
            \draw[light edge] (A-t1) -- (A-t3);
            \draw[light edge] (A-t2) -- (A-t4);
            \draw[light edge] (A-t3) -- (A-t4);
    
            \draw[heavy edge] (A-l1) -- (A-l4);
            \draw[light edge] (A-l1) -- (A-l2);
            \draw[light edge] (A-l2) -- (A-l4);
            \draw[light edge] (A-l4) -- (A-l3);
            \draw[light edge] (A-l3) -- (A-l1);
    
            \draw[heavy edge] (A-r1) -- (A-r2);
            \draw[heavy edge] (A-r3) -- (A-r4);
            \draw[light edge] (A-r1) -- (A-r3);
            \draw[light edge] (A-r2) -- (A-r5);
            \draw[light edge] (A-r4) -- (A-r5);
            \draw[light edge] (A-r3) -- (A-r5);
            
            \draw[heavy edge] (A-t4) -- (A-l1);
            \draw[heavy edge] (A-t4) -- (A-r1);
            \draw[heavy edge] (A-t3) -- (A-r2);
            \draw[heavy edge] (A-l3) -- (A-r3);
        \end{scope}
    
        \begin{scope}[shift={(7.5,0)}]
            \node[font=\bfseries, align=center] at (0, 4.3) {Sparsifier $H$};
            \placenodes{B}
            \drawblobs
    
            \draw[discarded] (B-t2) -- (B-t3);
            \draw[light edge] (B-t1) -- (B-t2);
            \draw[light edge] (B-t1) -- (B-t3);
            \draw[light edge] (B-t2) -- (B-t4);
            \draw[light edge] (B-t3) -- (B-t4);
    
            \draw[discarded] (B-l1) -- (B-l4);
            \draw[discarded] (B-l3) -- (B-l1);
            \draw[light edge] (B-l1) -- (B-l2);
            \draw[light edge] (B-l2) -- (B-l4);
            \draw[light edge] (B-l4) -- (B-l3);
    
            \draw[discarded] (B-r1) -- (B-r2);
            \draw[discarded] (B-r3) -- (B-r4);
            \draw[light edge] (B-r1) -- (B-r3); 
            \draw[light edge] (B-r2) -- (B-r5);
            \draw[light edge] (B-r4) -- (B-r5);
            \draw[light edge] (B-r3) -- (B-r5);
    
            \draw[heavy edge] (B-t4) -- (B-l1);
            \draw[heavy edge] (B-t4) -- (B-r1);
            \draw[heavy edge] (B-t3) -- (B-r2);
            \draw[heavy edge] (B-l3) -- (B-r3);
        \end{scope}
    
        \end{tikzpicture}
        \caption{The \emph{balanced} case. The edges in $E_{\text{high}}$ are displayed in red, and the remaining edges in $E \setminus E_{\text{high}}$ are displayed in blue. Notice that all inter-component edges are preserved, and each red edge in a cluster is unused by the sparsifier.}
        \label{fig:balanced_case}
\end{figure}

\paragraph{The Unbalanced Case.}

If the graph does not get shattered into small components, we are instead going to recurse on the largest component, which we call $X_{\text{large}}$, and everything else $X_{\text{small}} = V \setminus X_{\text{large}}$. However, most or all the edges in $E_{\text{high}}$ could be in $E_\text{cross} = E(X_{\text{small}}, X_{\text{large}})$. We therefore make a crucial observation that allows us to sparsify the cut. For every vertex $v \in X_{\text{small}}$, we let $e_v$ be the lowest weight edge in $E(\{v\}, X_{\text{large}})$, i.e., the lowest weight edge from $v$ to the component $X_{\text{large}}$. These edges $\{e_v \mid v \in X_{\text{small}} \}$ contain all edges of $E_\text{high}$ across the cut $(X_{\text{small}}, X_{\text{large}})$ that can end up in the MST. 

To see why this is the case, assume, towards a contradiction, that two such edges $xu$ and $xv$ in the MST are adjacent to a vertex $x \in X_{\text{small}}$, and assume $\ww(xu) > \ww(xv)$ w.l.o.g since we can assume edge weights to be distinct. By construction, $G[X_{\text{large}}] \setminus E_{\text{high}}$ is connected, so there is a path from $u$ to $v$ where every edge has weight less than $\ww(xu)$. But then $xu$ is the highest-weight edge in a cycle of $G$, which contradicts its being in the MST, because it can be swapped with an edge that preserves connectivity and decreases the total weight. See \Cref{fig:unbalanced_case} for an illustration. 

Crucially, every vertex can be on the smaller side of such a cut at most $\log_2 n$ times, and we therefore only add $n \log n$ additional edges to the sparsifier due to the unbalanced case. 

\begin{figure}[H]
    \centering
    \begin{tikzpicture}[
        scale=0.9,
        every node/.style={transform shape},
        vertex/.style={circle, draw=black, fill=white, thick, inner sep=1.8pt},
        blob/.style={draw=gray!20, dashed, line width=2pt},
        heavy edge/.style={red, line width=1.5pt},
        light edge/.style={blue!80, line width=0.8pt},
        discarded/.style={gray!45, dashed, line width=0.8pt}
    ]

    \def\placenodes#1{
        \node[vertex] (#1-u1) at (1.5, 2.1) {};
        \node[vertex] (#1-u2) at (0.4, 1.5) {};
        \node[vertex] (#1-u3) at (0.4, 0.5) {};
        \node[vertex] (#1-u4) at (1.5, -0.1) {};
        \node[vertex] (#1-u5) at (2.6, 0.5) {};
        \node[vertex] (#1-u6) at (2.6, 1.5) {};
        \node[vertex] (#1-u7) at (1.5, 0.9) {};

        \node[vertex, label={[font=\normalsize, text=black]90:$v$}] (#1-v1) at (-0.8, 2.4) {};
        \node[vertex] (#1-v2) at (-1.6, 3.1) {};
        \node[vertex] (#1-v3) at (-2.1, 2.1) {};

        \node[vertex, label={[font=\normalsize, text=black]-90:$w$}] (#1-w1) at (-0.8, -0.4) {};
        \node[vertex] (#1-w2) at (-2.1, -0.1) {};
        \node[vertex] (#1-w3) at (-1.6, -1.1) {};
    }

    \begin{scope}[shift={(0,0)}]
        \node[font=\bfseries, align=center] at (0.5, 4.2) {Original Graph $G$};
        \placenodes{A}
        
        \begin{scope}[on background layer]
            \draw[blob] (1.5, 1.0) circle (1.45);
            \node[font=\large, text=black] at (3.2, 2.6) {$X_{\text{large}}$};
            
            \draw[blob] (-1.5, 2.5) circle (0.95);
            \node[font=\large, text=black] at (-2.8, 3.5) {$X_2$};
            
            \draw[blob] (-1.5, -0.5) circle (0.95);
            \node[font=\large, text=black] at (-2.8, -1.5) {$X_3$};
        \end{scope}

        \draw[heavy edge] (A-u1) -- (A-u6);
        \draw[heavy edge] (A-u2) -- (A-u3);
        \draw[heavy edge] (A-u4) -- (A-u5);
        \draw[heavy edge] (A-u7) -- (A-u4);
        \draw[light edge] (A-u1) -- (A-u2);
        \draw[light edge] (A-u2) -- (A-u4);
        \draw[light edge] (A-u4) -- (A-u6);
        \draw[light edge] (A-u6) -- (A-u5);
        \draw[light edge] (A-u5) -- (A-u3);
        \draw[light edge] (A-u3) -- (A-u1);
        \draw[light edge] (A-u2) -- (A-u6);
        \draw[light edge] (A-u3) -- (A-u4);
        \draw[light edge] (A-u2) -- (A-u7);
        \draw[light edge] (A-u6) -- (A-u7);

        \draw[heavy edge] (A-v2) -- (A-v3);
        \draw[light edge] (A-v1) -- (A-v2);
        \draw[light edge] (A-v1) -- (A-v3);

        \draw[heavy edge] (A-w2) -- (A-w3);
        \draw[light edge] (A-w1) -- (A-w2);
        \draw[light edge] (A-w1) -- (A-w3);

        \draw[heavy edge] (A-v3) -- (A-w2);

        \draw[heavy edge] (A-v1) -- (A-u1);
        \draw[heavy edge] (A-v1) -- (A-u2);
        \draw[heavy edge] (A-w1) -- (A-u4);
        \draw[heavy edge] (A-w1) -- (A-u3);
        \draw[heavy edge] (A-v2) -- (A-u1);
        \draw[heavy edge] (A-w3) -- (A-u4);
    \end{scope}

    \begin{scope}[shift={(7.5,0)}]
        \node[font=\bfseries, align=center] at (0.5, 4.2) {Sparsifier $H$};
        \placenodes{B}
        
        \begin{scope}[on background layer]
            \draw[blob] (1.5, 1.0) circle (1.45);
            \node[font=\large, text=black] at (3.2, 2.6) {$X_{\text{large}}$};
        \end{scope}

        \draw[discarded] (B-u1) -- (B-u6);
        \draw[discarded] (B-u2) -- (B-u3);
        \draw[discarded] (B-u4) -- (B-u5);
        \draw[discarded] (B-u7) -- (B-u4);
        \draw[light edge] (B-u1) -- (B-u2);
        \draw[light edge] (B-u2) -- (B-u4);
        \draw[light edge] (B-u4) -- (B-u6);
        \draw[light edge] (B-u6) -- (B-u5);
        \draw[light edge] (B-u5) -- (B-u3);
        \draw[light edge] (B-u3) -- (B-u1);
        \draw[discarded] (B-u2) -- (B-u6);
        \draw[discarded] (B-u3) -- (B-u4);
        \draw[light edge] (B-u2) -- (B-u7);
        \draw[light edge] (B-u6) -- (B-u7);

        \draw[discarded] (B-v2) -- (B-v3);
        \draw[light edge] (B-v1) -- (B-v2);
        \draw[light edge] (B-v1) -- (B-v3);

        \draw[discarded] (B-w2) -- (B-w3);
        \draw[light edge] (B-w1) -- (B-w2);
        \draw[light edge] (B-w1) -- (B-w3);

        \draw[heavy edge] (B-v3) -- (B-w2);

        \draw[heavy edge] (B-v1) -- (B-u1) node[pos=0.4, below=0pt, text=black, font=\normalsize] {$e_v$};
        \draw[discarded] (B-v1) -- (B-u2);
        \draw[heavy edge] (B-w1) -- (B-u4) node[pos=0.4, above=0pt, text=black, font=\normalsize] {$e_w$};
        \draw[discarded] (B-w1) -- (B-u3);
        \draw[heavy edge] (B-v2) -- (B-u1);
        \draw[heavy edge] (B-w3) -- (B-u4);
    \end{scope}

    \end{tikzpicture}
    \caption{The \emph{unbalanced} case. The edges in $E_{\text{high}}$ are displayed in red, and the remaining edges in $E \setminus E_{\text{high}}$ are displayed in blue. Notice that in the sparsifier, every vertex in $X_\text{small}$ is connected to at most one vertex in $X_\text{large}$ using a red edge.}
    \label{fig:unbalanced_case}
\end{figure}

\paragraph{Dynamic Maintenance. }

Next, we discuss extending this framework to dynamic graphs. For a dynamic sparsifier, edge insertions can just be added to the sparsifier. We therefore focus on edge deletions. This is where expander components become important, because they are a stable version of connected components. 

Towards dynamizing this hierarchy, let us first make the strong simplifying assumption that the connected components that occur remain the same over the course of the update sequence. We will remove this assumption in the next paragraph. If this is the case, the only non-trivial maintenance is the lowest-weight edge connecting every vertex in $X_{\text{small}}$ to $X_{\text{large}}$. We can simply maintain this edge set with a priority queue at each vertex. Furthermore, the only edges that ever get deleted from the sparsifier edges are deleted in $G$, as desired. Next, we show that expander decompositions can be used to maintain stable connected components, which completes our algorithm. 

\paragraph{Expander Decomposition for Stable Connected Components.}

An expander decomposition partitions the vertex set of a graph $G = (V, E)$ into a collection of connected components $\mathcal{X} = \{X_1, \dots, X_k\}$ called expanders, such that at most $\rho \cdot m$ edges cross between components. Additionally, each such component is very well-connected, which makes it impossible to remove a small number of edges that disconnect a large number of vertices from a component $X_i$. This is exploited by expander pruning, which we use to remove an additional $\tilde{O}(1/\rho)$ edges after each deletion, such that every vertex that gets disconnected from $X_i$ becomes a singleton within $G[X_i]$ \cite{meierhans2025expanderpruningpolylogarithmicworstcase}. 

We use this machinery to maintain the sparsifier above, where we now compute an expander decomposition of $G \setminus E_{\text{high}}$. Then, the expanders take the place of the connected components, and we distinguish between the balanced and the unbalanced case as before. In the balanced case, we simply additionally insert the crossing edges of the expander decomposition into the sparsifier. In the unbalanced case, we also add them to the sparsifier, and do not consider these edges when we maintain the cheapest edge connecting vertices from $X_{\text{small}}$ to $X_{\text{large}}$. Choosing $\rho = 1/\tilde{O}(1)$ suffices to ensure that the extra edges added sum to $c' m$.  

\begin{figure}[H]
        \centering
        \begin{tikzpicture}[
            scale=0.9,
            every node/.style={transform shape},
            vertex/.style={circle, draw=black, fill=white, thick, inner sep=1.8pt},
            blob/.style={draw=gray!20, dashed, line width=2pt},
            heavy edge/.style={red, line width=1.5pt},
            light edge/.style={blue!80, line width=0.8pt},
            discarded/.style={gray!45, dashed, line width=0.8pt}
        ]
    
        \def\placenodes#1{
            \node[vertex] (#1-t1) at (0, 3.1) {};
            \node[vertex] (#1-t2) at (-0.8, 2.3) {};
            \node[vertex] (#1-t3) at (0.8, 2.3) {};
            \node[vertex] (#1-t4) at (0, 1.5) {};
    
            \node[vertex] (#1-l1) at (-1.1, -0.1) {};
            \node[vertex] (#1-l2) at (-2.3, -0.1) {};
            \node[vertex] (#1-l3) at (-1.1, -1.3) {};
            \node[vertex] (#1-l4) at (-2.3, -1.3) {};
    
            \node[vertex] (#1-r1) at (1.1, -0.1) {};
            \node[vertex] (#1-r2) at (2.3, -0.1) {};
            \node[vertex] (#1-r3) at (1.1, -1.3) {}; 
            \node[vertex] (#1-r4) at (2.3, -1.3) {};
            \node[vertex] (#1-r5) at (1.7, -0.7) {};
        }
    
        \def\drawblobs{
            \begin{scope}[on background layer]
                \draw[blob] (0, 2.3) circle (1.25);
                \node[font=\large, text=black] at (-1.5, 3.4) {$X_1$};
                
                \draw[blob] (-1.7, -0.7) circle (1.25);
                \node[font=\large, text=black] at (-3.2, 0.5) {$X_2$};
                
                \draw[blob] (1.7, -0.7) circle (1.25);
                \node[font=\large, text=black] at (3.2, 0.5) {$X_3$};
            \end{scope}
        }
    
        \begin{scope}[shift={(0,0)}]
            \node[font=\bfseries, align=center] at (0, 4.3) {Original Graph $G$};
            \placenodes{A}
            \drawblobs
    
            \draw[heavy edge] (A-t2) -- (A-t3);
            \draw[light edge] (A-t1) -- (A-t2);
            \draw[light edge] (A-t1) -- (A-t3);
            \draw[light edge] (A-t2) -- (A-t4);
            \draw[light edge] (A-t3) -- (A-t4);
    
            \draw[heavy edge] (A-l1) -- (A-l4);
            \draw[light edge] (A-l1) -- (A-l2);
            \draw[light edge] (A-l2) -- (A-l4);
            \draw[light edge] (A-l4) -- (A-l3);
            \draw[light edge] (A-l3) -- (A-l1);
    
            \draw[heavy edge] (A-r1) -- (A-r2);
            \draw[heavy edge] (A-r3) -- (A-r4);
            \draw[light edge] (A-r1) -- (A-r3);
            \draw[light edge] (A-r2) -- (A-r5);
            \draw[light edge] (A-r4) -- (A-r5);
            \draw[light edge] (A-r3) -- (A-r5);
            
            \draw[light edge] (A-t4) -- (A-l1);
            \draw[heavy edge] (A-t4) -- (A-r1);
            \draw[light edge] (A-t3) -- (A-r2);
            \draw[heavy edge] (A-l3) -- (A-r3);
        \end{scope}
    
        \begin{scope}[shift={(7.5,0)}]
            \node[font=\bfseries, align=center] at (0, 4.3) {Sparsifier $H$};
            \placenodes{B}
            \drawblobs
    
            \draw[discarded] (B-t2) -- (B-t3);
            \draw[light edge] (B-t1) -- (B-t2);
            \draw[light edge] (B-t1) -- (B-t3);
            \draw[light edge] (B-t2) -- (B-t4);
            \draw[light edge] (B-t3) -- (B-t4);
    
            \draw[discarded] (B-l1) -- (B-l4);
            \draw[discarded] (B-l3) -- (B-l1);
            \draw[light edge] (B-l1) -- (B-l2);
            \draw[light edge] (B-l2) -- (B-l4);
            \draw[light edge] (B-l4) -- (B-l3);
    
            \draw[discarded] (B-r1) -- (B-r2);
            \draw[discarded] (B-r3) -- (B-r4);
            \draw[light edge] (B-r1) -- (B-r3); 
            \draw[light edge] (B-r2) -- (B-r5);
            \draw[light edge] (B-r4) -- (B-r5);
            \draw[light edge] (B-r3) -- (B-r5);
    
            \draw[light edge] (B-t4) -- (B-l1);
            \draw[heavy edge] (B-t4) -- (B-r1);
            \draw[light edge] (B-t3) -- (B-r2);
            \draw[heavy edge] (B-l3) -- (B-r3);
        \end{scope}
    
        \end{tikzpicture}
        \caption{We show the \emph{balanced} case again, but this time, each cluster is an expander. The edges in $E_{\text{high}}$ are displayed in red, and the remaining edges in $E \setminus E_{\text{high}}$ are displayed in blue. Notice that some (but few) blue edges are crossing the components now.}
        \label{fig:balanced_case_expander}
\end{figure}

We now dynamically maintain this sparsifier $H$, while ensuring that the only edges deleted from $H$ are those deleted from $G$. Firstly, we simply add every edge that gets inserted into $G$ to $H$. Secondly, whenever an edge is deleted, we run the expander pruning algorithm on every expander this edge is in. We add the $\tilde{O}(1)$ edges that the expander pruning algorithm suggests to remove to $H$, which ensures that any MST edges among them are preserved. This ensures that we never have to start new recursive branches because the only connected components that appear have their entire volume in $H$. 
Apart from the edges added because of expander pruning, at most $\tilde{O}(1)$ edges are added due to changes in what edge connects $X_{\text{small}}$ to $X_{\text{large}}$ on various levels of the hierarchy. We remark that an edge deletion always only affects one branch of the recursive sparsifier. 

\paragraph{Further sparsifying Sparse Graphs. } If $m$ is small, the $n \log_2 n$ term from the unbalanced case becomes significant. To improve the sparsity in terms of $n$, we then contract logarithmically sized components before applying the sparsifier, which improves the sparsity to $m/16 + n - 1$. This is crucial for our algorithm because we need to obtain ultrasparsifiers to build a hierarchy.

\paragraph{A dynamic MST Algorithm via Vertex Sparsification.} Given such a powerful edge sparsifier, maintaining a minimum spanning tree becomes rather straightforward via the techniques from \cite{meierhans2026dynamic}. At a high level, we would like to just apply the sparsifier a logarithmic number of times, obtaining a sequence of graphs $G_0 = G, G_1, \ldots, G_{\Lambda}$ where $\Lambda = \tilde{O}(1)$, which directly yields the MST. Thereafter, edge deletions of $G$ cause at most $\eta = \tilde{O}(1)$ edges to be added to each sparsifier. The resulting edge insertions (or edge insertions into $G$) each lead to only one additional insertion on the next higher level, which causes only $(\Lambda+1) \cdot \eta$ changes to any level. The rather subtle point that enables us to use a logarithmic number of layers is that insertions do not cause recourse, and an edge deletion at layer $i$ causes at most one edge deletion at layer $i + 1$.

The above changes, of course, cause $G_{\Lambda}$ to no longer be an MST. To fix this, we periodically (re-)initialize the graph $G_i$ whenever $t$ is divisible by $x_i \defeq \ceil{2^{\Lambda - i - 8}/ ((\Lambda+1)\eta)}$. Notice that whenever layer $i$ is (re-)initialized, so are layers $i + 1, \ldots, \Lambda$, and that layer $\Lambda$ is (re-)initialized after every update.

Implementing the algorithm directly as discussed above would lead to prohibitively high update times because the computation of layer $\Lambda$ alone would be linear. These high-index layers, however, consist of a tree and very few additional edges. Therefore, we can apply core graphs, which are a vertex sparsification technique that essentially reduces the size of a graph to the number of off-tree edges. They work via carefully contracting paths with no adjacent off-tree edges in the tree. 

\paragraph{Precomputing in the Background for Worst-Case Update Time. }

As described, the algorithms do not run in poly-logarithmic worst-case update time. This is because occasionally, we have to rebuild a layer with a low index, which is very costly at the time. However, such rebuilds can be pre-computed, and we can therefore amortize the computation over a lot of update steps. This ultimately leads to a worst-case time algorithm for dynamic MST. 

\subsection{Dynamic Edge Sparsifier for 2-Edge Connectivity}
\label{subsec:overview_2edge_spars}

\paragraph{Dynamic $2$-Edge Connectivity Sparsifier.}

To avoid clutter, let us assume for this section that the dynamic input graph $G$ remains connected. As mentioned in \Cref{sec:intro}, a popular approach for maintaining $2$-edge connectivity of $G$ is to maintain a spanning tree $T$ and then to maintain which edges of $T$ are bridges, i.e., edges whose removal disconnects the graph. If $T$ contains no bridges, then the underlying graph $G$ is $2$-edge connected. 

While there is no certificate for being a bridge, there is one for being a non-bridge: each off-tree edge $f = xy$ forms a cycle with the $x$-to-$y$ path in the spanning tree $T$, so all edges on this path are non-bridges. We say that each such edge on the path is \emph{covered} by $f$. 

With this in mind, when we define an edge sparsifier for $2$-edge connectivity, it is crucial that we do not lose essential covering information. We thus define it as an algorithm that, given a graph $G$, maintains a subgraph $H \subseteq G$ along with a set $N \subseteq E(G)$ of non-bridge edges such that:
\begin{itemize}
    \item every edge $e \in N$ indeed has its endpoints $2$-edge connected in $G$, and 
    \item for every edge $xy \in G \setminus N$, $x,y$ are $2$-edge connected in $H$ if they are $2$-edge connected in $G$.
\end{itemize}

While the astute reader might rightfully point out that $2$-edge connectivity can be preserved straightforwardly using a sparsifier $H$ with at most $2n$ edges by using two disjoint spanning forests (see \cite{nagamochi1992linear}), the key challenge here is in obtaining ultra-sparsifier guarantees, i.e., if $m = n - 1 + k$, we still want to reduce the number of edges further to $m' = n - 1 + k/16$ (and ensuring that $H$ and $N$ change by only $\tilde{O}(1)$ edges per update to $G$).\footnote{Note that while this was also required for the MST problem, it was mitigated there by contracting small components, which is not feasible for this problem.} Note that we can assume that $k = \Omega(n)$ as we can greedily remove degree-$1$ vertices from $G$ and contract degree-$2$ vertices. The $2$-edge connectivity information of the obtained graph can be used to recover the $2$-edge connectivity information of the original graph.

\paragraph{Maintaining Coverage Information Implicitly for Expander Graphs.}

Let us denote by $T$ an arbitrary spanning tree of $G$. While the key obstacle towards worst-case algorithms is that edges in $T$ might have high congestion, i.e., be covered by many off-tree edges, and result in expensive updates, this problem can be mitigated on expanders by maintaining coverage implicitly as follows.

Consider a graph $G = (V, E)$ that is guaranteed to remain a $\phi$-expander throughout a sequence of updates. That is, for every set $\emptyset \subset C \subset V$ with $\vol_G(C) = \sum_{v \in C} \deg_G(v) \leq m$, we have that $|E(C, V \setminus C)|/\vol_G(C) \geq \phi$. Now, for any set $C$ with $2/\phi \leq \vol_G(C) \leq m$, we have at least $2$ edges leaving the set $C$, i.e, $|\partial_G C| \geq 2$. Thus, for any two sets $C_1, C_2$ of volume at least $2/\phi$, by the max-flow min-cut theorem, there are at least $2$ edge-disjoint paths starting in $C_1$ and ending in $C_2$. This insight inspires the following case distinction. For each edge $e \in T$, we say it is
\begin{itemize}
    \item \underline{volume-covered:} if both connected components of $T \setminus e$ have volumes at least $2 / \phi$. Thus, we know that $e$ is covered since, of the two edge-disjoint paths between the connected components, at least one does not contain $e$ and forms a cycle containing $e$ with $T$.
    \item \underline{cut-covered:} if the smaller side $C$ in $T \setminus e$ has less than $2/\phi$ volume, we embed the edges $f \in \partial_G C$ into $e$. Since the volume of $C$ is small, this also bounds the congestion on $e$ by $2/\phi$. We say $e$ is cut-covered if some such edge $f$ embeds into $e$, certifying that $e$ is not a bridge.
\end{itemize}
In  \Cref{fig:covered}, we illustrate the above definition. The fundamental cut induced by edge $f_1$ has a small side, and indeed $f_1$ is a bridge. The edge $f_2$, on the other hand, partitions $T$ into two large components which are well-connected in $G$ since it is an expander, and is therefore volume-covered. Finally, $f_3$ also partitions the tree into a small and large component, but this time the edge is cut-covered.

\begin{figure}[ht]
\centering
\begin{tikzpicture}[
    scale=1,
    vnode/.style={circle, draw=black, fill=white, minimum size=0.25cm, inner sep=0pt, font=\sffamily\bfseries},
    grayedge/.style={lightgray, dashed, thin},
    treeedge/.style={black, ultra thick}
]

    \def\n{10}
    \def\xradius{3cm} 
    \def\yradius{1.5cm}
    \foreach \i in {1,...,\n} {
        \pgfmathsetmacro{\angle}{360/\n*(\i-1)}
        \node[vnode] (\i) at ({\xradius * cos(\angle)}, {\yradius * sin(\angle)}) {};
    }

    \draw[grayedge] (1) -- (3); \draw[grayedge] (1) -- (4);
    \draw[grayedge] (1) -- (5); \draw[grayedge] (1) -- (6);
    \draw[grayedge] (2) -- (4); \draw[grayedge] (2) -- (5);
    \draw[grayedge] (2) -- (6);  
    \draw[grayedge] (3) -- (6); \draw[grayedge] (3) -- (8);
    \draw[grayedge] (3) -- (9); 
    \draw[grayedge] (4) -- (8); \draw[grayedge] (4) -- (10);
    \draw[grayedge] (5) -- (8); \draw[grayedge] (5) -- (9);
    \draw[grayedge] (5) -- (10); \draw[grayedge] (6) -- (9);
    \draw[grayedge] (6) -- (10);
    \draw[grayedge] (1) -- (10); \draw[grayedge] (2) -- (3);

    \draw[treeedge, red] (1) -- (2);
    \draw[treeedge] (2) -- (3);
    \draw[treeedge] (3) -- (4);
    \draw[treeedge] (3) -- (5);
    \draw[treeedge] (5) -- (6);
    \draw[treeedge, red] (6) -- (7);
    \draw[treeedge, red] (3) -- (8);
    \draw[treeedge] (8) -- (9);
    \draw[treeedge] (8) -- (10);

    \node[draw=none] at (-3,-.5) {$f_1$};
    \node[draw=none] at (-0.4,0) {$f_2$};
    \node[draw=none] at (2.9,0.7) {$f_3$};

\end{tikzpicture}
\caption{The edge $f_1$ is a bridge, $f_2$ is volume-covered, and $f_3$ is cut-covered.}
\label{fig:covered}
\end{figure}

\paragraph{Our $2$-Edge Connectivity Sparsifier.}

Our algorithm first decomposes the graph into $\phi$-expanders $X_1, X_2, \ldots, X_k$ for some appropriately chosen $\phi = \tilde{\Omega}(1)$, such that less than $m/16$ edges $E_\text{rest}$ are between clusters. We then initialize for each expander $G[X_i]$ a spanning forest $T_i = \textsc{SpanningForest}(G[X_i])$ and determine for each edge in $T_i$ whether it is volume-covered, cut-covered, or a bridge. Edges in $G[X_i]$ that are not in $T_i$ are naturally non-bridges, as they form a cycle with the tree.

Then, we initialize the set $N$ to be all non-bridges of the graphs $G[X_i]$ and set $H = (G \setminus N) \cup \textsc{SpanningForest}(G \cap N)$. We have:
\begin{itemize}
    \item \underline{Correctness of $N$:} every edge $e \in N$ is not a bridge in $G$ since it is not a bridge in some $G[X_i] \subseteq G$.
    \item \ul{For every edge $e = xy$ in $G \setminus N$, $e$ is in $H$ and $2$-edge connected in $H$ if and only if it is $2$-edge connected in $G$:} To see this, observe first that for any graphs $A$ and $B$ spanning the same connected components, adding edge $e$ to both preserves this property: $A \cup e$ and $B \cup e$ have the same connected components. Since $\textsc{SpanningForest}(G \cap N)$ spans the same connected components as $G \cap N$, we have that $H \setminus e$ spans the same connected components as $G \setminus e$. But $e$ is $2$-edge connected in $H$ ($G$) if and only if its endpoints are connected in $H \setminus e$ ($G \setminus e$). The proof follows.
    \item \underline{$H$ is sparse:} Each edge in $(G \setminus N) \setminus E_\text{rest}$ is a bridge in some expander graph $G[X_i]$. Since expanders are vertex-disjoint, each such edge $e$ is also a bridge in $G \setminus E_\text{rest}$. But note that removing a bridge from a graph increases the number of connected components, and thus, the number of edges of any spanning forest by $1$. The forest $\textsc{SpanningForest}(G \cap N)$ has at most $n - |(G \setminus N) \setminus E_\text{rest}| - 1$ edges, so the number of edges in $H$ is at most
    \[
        n - |(G \setminus N) \setminus E_\text{rest}| - 1 + |G \setminus N| \leq n - 1 + |E_\text{rest}| \leq n - 1 + \frac{m}{16}.
    \]
\end{itemize}

\paragraph{Sparsifier Maintenance.}

As for the MST sparsifier, edge insertions to $G$ are straightforward; they can simply be added directly to $H$. We can thus focus on processing deletions to $G$.

First, we use the worst-case expander pruning algorithm from \cite{meierhans2025expanderpruningpolylogarithmicworstcase}, which removes $\tilde{O}(1)$ volume at each time step from some expander $G[X_i]$. We remove these edges from $N$. Each edge pruned from the expander $G[X_i]$ might result in updates to the tree $T_i$ currently spanning $G[X_i]$ and to the status of the edges in $T_i$ on whether they are volume-covered/cut-covered or bridges. While the tree $T_i$ can be repaired to be a spanning tree using our MST algorithm, our algorithm needs to do the latter tasks itself. We prove that only $\tilde{O}(1)$ edges change status, and we can efficiently find a small superset of these edges and re-evaluate their status. These proofs are rather technical, and we refer the reader to \Cref{sec:2edgeconn} for full details.

We update $N$ based on these updates, and as mentioned above, we can bound the number of changes by $\tilde{O}(1)$. Finally, we update $H$ to again equal $(G \setminus N) \cup \textsc{SpanningForest}(G \cap N)$. But note that the set $N$ has changed little, so there are also only $\tilde{O}(1)$ changes to both $G \setminus N$ and to $G \cap N$. Using our dynamic MST algorithm to process these changes, we can update $H$ by changing $\tilde{O}(1)$ edges that we find after $\tilde{O}(1)$ processing time, as desired.

%% file: preliminaries.tex
\section{Preliminaries}

\paragraph{Graphs. }

A graph $G = (V, E)$ is a tuple consisting of a vertex set $V$ and edge set $E \subseteq V \times V$. We sometimes use $n$ and $m$ to refer to the number of vertices and edges, respectively. We let $\deg_G(v)$ denote the combinatorial degree of vertex $v \in G$, i.e., the total number of edges adjacent to $v$. 

\paragraph{Cuts and Subgraphs.} We let $E(A, B) \coloneq \{uv \in E \mid u \in A \text{ and } v \in B\}$ for $A, B \subseteq V$. For notational convenience, we extend this notation to collections of vertex sets $\mathcal{X} = X_1, \ldots, X_k$ as 
\begin{align*}
    E(\mathcal{X}) = \bigcup_{1 \leq i < j \leq k} E(X_i, X_j).
\end{align*}
Furthermore, for $A \subseteq V$, we let $G[A] = (A, E(A, A))$ denote the induced subgraph of $G = (V, E)$ on $A$.  For notational convenience, we sometimes write $G \cup E'$ and $G \setminus E'$ instead of $(V, E \cup E')$ and $(V, E \setminus E')$, respectively.

\paragraph{Connected Components and Minimum Spanning Forests.}

We call a vertex set $A \subseteq V$ connected in $G$ if $E(B, A \setminus B) \neq \emptyset$ for all $B$ such that $\emptyset \subsetneq B \subsetneq A$. If $A$ is additionally maximal, i.e., there is no vertex $v \in V \setminus A$ such that $A \cup \{ v \}$ is connected, then $A$ is a connected component of $G$.

We call a graph a tree if it is connected and contains $|V| - 1$ edges. A forest is a disjoint collection of trees. We call a graph $F$ a maximal spanning forest of another graph $G$ if it is a subgraph and contains $|V| - (\text{\#connected components})$ edges. 

For the minimum spanning tree problem, we additionally associate a weight vector $\ww \in \R^{E}_{\geq 0}$ with $G$, such that the weight of each edge $e$ is $\ww(e)$. We assume that edge weights are unique without loss of generality. Then, the minimum spanning forest is the unique maximal spanning forest with minimum total edge weight.

\paragraph{Dynamic Graphs. } 

In this article, we focus on graphs whose edge sets change over time. If there are only edge insertions or only edge deletions, we call the graph incremental or decremental, respectively. We sometimes use superscripts $G^{(t)}$ to refer to the graph $G$ after the $t$-th update. 

\paragraph{Expanders. }

For $A \subseteq V$, we let $\vol_G(A) = \sum_{a \in A} \deg_G(a)$ denote the volume of $A$. We call a graph $G = (V, E)$ a $\phi$-expander if for any $\emptyset \subsetneq S \subsetneq V$ with $\min\{\vol_G(S), \vol_G(V \setminus S)\} > 0$,
\[
    \frac{|E(S, V \setminus S)|}{\min\{\vol_G(S), \vol_G(V \setminus S)\}} \geq \phi.
\]
In particular, adding singleton vertices to an expander maintains its expanding property. 

\paragraph{Expander decompositions.}

Expanders are particularly useful for maintaining connectivity information, as they are highly connected and stable under edge deletions. Specifically, most edges in an expander are not bridges, and we only need to explicitly check whether a small number of edges are bridges. We will use the following two algorithms.

\begin{definition}[Expander decomposition]
    Given a graph $G = (V, E)$, we call a partition $\mathcal{X} = \{X_1, \ldots X_k\}$ of the vertex set a $\phi$-expander decomposition of quality $q$ if $|E(\mathcal{X})| \leq q \cdot \phi \cdot |E|$ and $G[X_i]$ is a $\phi$-expander for every $i \in [k]$. 
\end{definition}

\begin{lemma}[Expander decomposition algorithm {\cite[Theorem 1.2]{SW19}}]
    \label{lem:expander_decomp}
    Given a parameter $\phi$, we can compute a $\phi$-expander decomposition of quality $q = O(\log^3 m)$ in $O\left(\frac{m \log^4 m}{\phi}\right)$ time. The algorithm succeeds with high probability.
\end{lemma}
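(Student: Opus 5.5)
This statement is not proved from scratch here. The plan is to reduce it directly to Theorem~1.2 of \cite{SW19} and then check that the guarantees stated there match the notions of $\phi$-expander and $\phi$-expander decomposition used in this paper.

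\textbf{Step 1: the cited guarantee.} Theorem~1.2 of \cite{SW19} says the following. Given a graph $G=(V,E)$ with $m$ edges and a parameter $\phi$, their algorithm runs in time $O(m\log^4 m/\phi)$ and, with high probability, outputs a partition $X_1,\dots,X_k$ of $V$ such that:
\begin{itemize}
\item each induced subgraph $G[X_i]$ has conductance at least $\phi$, and
\item the number of inter-cluster edges is $O(\phi m\log^3 m)$.
\end{itemize}

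\textbf{Step 2: matching the expander notion.} Conductance of $G[X_i]$ is measured with volumes taken in $G[X_i]$. This is exactly our requirement that $|E(S,X_i\setminus S)|/\min\{\vol_{G[X_i]}(S),\vol_{G[X_i]}(X_i\setminus S)\}\ge\phi$. Our definition is slightly more permissive, since it only constrains cuts in which both sides have positive volume. Any cluster certified by \cite{SW19} therefore also satisfies our definition. Isolated vertices are a harmless corner case, by the remark after the definition of expanders.

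\textbf{Step 3: matching the quality parameter.} The bound $|E(\mathcal{X})|=O(\phi m\log^3 m)$ is precisely the statement that the decomposition has quality $q=O(\log^3 m)$ in our definition, because we require $|E(\mathcal{X})|\le q\cdot\phi\cdot|E|$.

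\textbf{Where the substance lies.} If one wanted to re-derive the result rather than cite it, the main obstacle is the inner routine of \cite{SW19}. That routine runs the cut-matching game with a nearly-linear-time flow subroutine. The outcome is one of two things:
\begin{itemize}
\item a balanced sparse cut, on which the algorithm recurses, so that the recursion depth is $O(\log m)$; or
\item a certificate that a large part of the graph is a near-expander, which is then repaired into a true expander by the trimming step based on bounded-height push-relabel.
\end{itemize}
Charging the cut edges over the recursion levels gives the $\log^3 m$ quality, and summing the flow computations over those levels gives the $\log^4 m/\phi$ running time. The only randomness is in the cut-matching game, which is why the lemma is stated as succeeding with high probability. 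For our purposes none of this needs to be redone: the proof consists of the citation together with the two consistency checks in Steps~2 and~3.
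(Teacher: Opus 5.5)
Your proposal is correct and takes the same approach as the paper, which gives no proof and simply imports Theorem~1.2 of \cite{SW19}. Your translation of that theorem's guarantees (each $G[X_i]$ has conductance at least $\phi$, and the crossing edges number $O(\phi m\log^3 m)$, since the sum $\sum_i |E(X_i, V\setminus X_i)|$ counts each at most twice) into this paper's notions of $\phi$-expander and quality $q$ is exactly the check the citation leaves implicit; your closing sketch of the internals of \cite{SW19} is not needed, and the argument does not depend on its details.
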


The following lemma is obtained from Theorem 1.3 in \cite{meierhans2025expanderpruningpolylogarithmicworstcase} under the graph reduction proposed in Appendix A of \cite{chen23simple}.

\begin{lemma}[Expander pruning]
    \label{lem:expander_pruning}
    Let $G = (V, E)$ be a $\phi$-expander undergoing at most $\frac{\phi m}{\gamma}$ edge deletions where $\gamma = \tilde{O}(1)$. We can maintain a set $P \subseteq E$ such that
    \begin{itemize}
        \item $G \setminus P$ is a $\frac{\phi}{\gamma}$-expander;
        \item each deletion can be processed in $\frac{\gamma}{\phi^2}$ time;
        \item each deletion increases $P$ by at most $\frac{\gamma}{\phi^2}$ edges.
    \end{itemize}
\end{lemma}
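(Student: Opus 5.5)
The plan is to obtain the lemma by running the worst-case pruning algorithm of \cite[Theorem 1.3]{meierhans2025expanderpruningpolylogarithmicworstcase} on a bounded-degree surrogate of $G$. The surrogate comes from the vertex-splitting reduction of \cite[Appendix A]{chen23simple}, and we then translate the pruned set back to edges of $G$. I am assuming Theorem 1.3 is stated for constant-degree $\phi$-expanders and outputs a growing pruned vertex set with worst-case polylogarithmic$\cdot\mathrm{poly}(1/\phi)$ cost per deletion.

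\textbf{The surrogate.} For every $v \in V$, replace $v$ by a constant-degree $\psi_0$-expander $W_v$ on $\deg_G(v)$ nodes, with $\psi_0 = \Omega(1)$. Attach each original edge $uv$ to a distinct node of $W_u$ and a distinct node of $W_v$. The resulting graph $G'$ has maximum degree $O(1)$ and $|E(G')| = O(m)$. Using the standard composition argument, $G'$ is an $\Omega(\phi)$-expander. A cut $S'$ of $G'$ either splits some gadget $W_v$, and then pays via $\psi_0$ relative to the split part, or is a union of gadgets, and then corresponds to a cut of $G$ with equal crossing edges and volume within a constant factor.

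\textbf{Running the pruning.} A deletion of $uv$ in $G$ is simulated by deleting the single corresponding edge of $G'$. To keep degrees bounded it may be replaced by self-loops if Theorem 1.3 requires fixed degrees. Hence the $\phi m/\gamma$ deletion budget transfers up to constants. Run Theorem 1.3 on $G'$ and maintain its pruned node set $P'$. Define
\[
P = \{\, e \in E : e \text{ is attached to some node of } P' \,\}.
\]
Each node carries at most one original edge, so $|P| \le |P'|$. Therefore both the per-deletion growth of $P$ and the time are bounded by those for $P'$, which are at most $\gamma/\phi^2$ for a suitable $\gamma = \tilde{O}(1)$.

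\textbf{Expansion of $G \setminus P$.} Take a cut $S$ of $G \setminus P$. Lift it to $S' = \bigcup_{v \in S} (W_v \setminus P')$ in $G' \setminus P'$. Gadget-internal edges never cross $S'$. An original edge crosses $S'$ exactly when it crosses $S$ in $G \setminus P$, since edges incident to $P'$ are exactly $P$. Moreover $\vol_{G \setminus P}(S)$ equals the number of unpruned attachment nodes in $S'$, which is $\Theta(\vol_{G' \setminus P'}(S'))$ by bounded degree. The expansion guarantee for $G' \setminus P'$ then yields the $\phi/\gamma$ bound, after absorbing constants into $\gamma$.

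\textbf{Main obstacle.} I expect the main obstacle to be this last step. The unpruned remainder of a gadget need not be connected or expanding, so a node of $W_v \setminus P'$ might have its cheap gadget edges into $P'$. One must check that the guarantee of Theorem 1.3 is stated for $G'[V' \setminus P']$ in a form that applies to arbitrary unions of gadget remainders. If it is instead stated only up to boundary edges to $P'$, I would first try to charge those boundary edges to the pruned set, i.e.\ enlarge $P'$ by neighbours, which costs only an $O(1)$ factor by bounded degree.
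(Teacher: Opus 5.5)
Your proposal follows the paper's route: the paper gives no separate proof and simply invokes Theorem~1.3 of \cite{meierhans2025expanderpruningpolylogarithmicworstcase} on the constant-degree vertex-split graph from Appendix~A of \cite{chen23simple}, reading the pruned edges off the pruned vertex set exactly as you do, and your reconstruction of this is correct. One step needs tightening, and it also removes your ``main obstacle'': $\vol_{G\setminus P}(S)$ is at most, not equal to, the number of unpruned attachment nodes in $S'$, and the $\Theta(\cdot)$ claim can fail, but you only need the one-sided bound $\vol_{G\setminus P}(S)\le \vol_{G'[V'\setminus P']}(S')$ (and likewise for the complement), which holds because each surviving original edge at $S$ is itself an edge of $G'[V'\setminus P']$ at $S'$, so the expansion of $G'[V'\setminus P']$ applies directly to your lifted cut regardless of whether gadget remainders are connected.
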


%% file: MST.tex
\section{An Edge-Sparsifier for Fully-Dynamic Minimum Spanning Forest}
\label{sec:mst}

In this section, we introduce a novel edge sparsifier for the MST problem, which forms the central technical contribution of our work on fully dynamic minimum spanning forests. We state the result formally in the theorem below. In \Cref{subsec:completeMST}, we then show how to leverage vertex sparsification to obtain a fully dynamic minimum spanning forest algorithm.

\begin{restatable}{theorem}{edgeSparsifierMST}\label{thm:edgeSparsifierMST}
Let $G$ be a decremental graph. Then, there is an algorithm that maintains a sparsifier $H$ such that
\begin{enumerate}
    \item at any stage $t$, $\msf(H^{(t)}) = \msf(G^{(t)})$ and
    \item $|E(H^{(0)})| \leq n - \left(\text{\#connected components of }G^{(0)}\right) + m/16$ and after each deletion to $G$, at most $\tilde{O}(1)$ new edges are added to $H$.
\end{enumerate}
The algorithm takes $\tilde{O}(m)$ time to initialize. Every update to $G$ can be processed in $\tilde{O}(1)$ time. The algorithm succeeds with high probability, and explicitly reports failures. 
\end{restatable}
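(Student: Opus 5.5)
We build the sparsifier recursively, following the overview, and then prove the two guarantees separately. Sort the edges by weight, which are distinct by assumption. In a recursive call on a vertex set $U$ with current edge set $E_U$, let $E_{\text{high}}$ be the $\gamma|E_U|$ heaviest edges, with $\gamma = c/\log_2 n$. Compute a $\phi$-expander decomposition $\mathcal{X}$ of $(U, E_U\setminus E_{\text{high}})$ using \Cref{lem:expander_decomp}, with $\phi = 1/\tilde{O}(1)$ chosen so that $|E(\mathcal{X})| \le \gamma|E_U|$. Put $E(\mathcal{X})$ into $H$.

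If every cluster has at most $|U|/2$ vertices (the balanced case), put all of $E_{\text{high}}$ that crosses clusters into $H$, and recurse on each cluster with its induced low edges. Otherwise let $X_{\text{large}}$ be the big cluster and $X_{\text{small}} = U\setminus X_{\text{large}}$. Into $H$ we put:
\begin{itemize}
\item the $E_{\text{high}}$ edges inside $X_{\text{small}}$, or all edges inside $X_{\text{small}}$, handled by recursing on the small clusters;
\item for each $v\in X_{\text{small}}$, the cheapest edge $e_v$ from $v$ to $X_{\text{large}}$ among the non-$E(\mathcal{X})$ edges, kept in a per-vertex heap.
\end{itemize}
We then recurse on $X_{\text{large}}$ with $E_U\setminus E_{\text{high}}$.

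To get the bound $n-\#\text{cc}+m/16$, we first contract components of $\Theta(\log n)$ size, e.g. clusters of a low-weight spanning forest. Their MSF edges go into $H$ verbatim, contributing fewer than $n$ edges. We then run the scheme on the contracted graph of $n/\Theta(\log n)$ vertices. There the unbalanced $e_v$ charge, each vertex being on the small side at most $\log_2 n$ times, is $O(n)$ and is absorbed by the per-level edge charges. The $\gamma$ and $\rho$ charges sum over $O(\log n)$ depth to at most $m/32$ after choosing constants.

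For correctness (item~1) at initialization, we apply \Cref{obs:wn} at each node of the recursion. An MSF edge of $E_{\text{high}}$ must join different components of the low graph. If it crosses clusters it was kept. If it is inside a cluster, that cluster is connected in the low graph, so the edge is not in the MSF. This uses that expanders with nonzero volume are connected, and singletons are trivial. In the unbalanced case we use the cycle argument from the overview: two cross edges $xu, xv$ from a small vertex $x$ close a cycle through the connected low graph $G[X_{\text{large}}]$, so only $e_x$ can survive. One subtlety is that $E(\mathcal{X})$ edges are already in $H$, so excluding them from the heaps is harmless. Since the recursion on $X_{\text{large}}$ and on the small clusters preserves their induced MSFs, and \Cref{obs:wn} composes, $\msf(H)=\msf(G)$.

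For dynamic maintenance under deletions, when an edge $e$ is deleted we locate the unique recursion branch containing it. We run \Cref{lem:expander_pruning} on the affected cluster at each level, where $\phi$ is polylogarithmic so the level count stays $O(\log n)$. We add all newly pruned edges to $H$, which is $\tilde{O}(1)$ per level. We also update the heap of the affected vertex at each level and insert the new minimum $e_v$ into $H$. The pruned edges play the role of inter-cluster edges, and the invariant we prove is that at each node the current clusters, minus pruned edges, are still expanders and hence connected wherever they have volume. Vertices cut off by pruning become singletons whose entire incident volume is already in $H$, so no new recursive branches are needed. The argument from the static case then goes through verbatim with $E(\mathcal{X})\cup P$ in place of $E(\mathcal{X})$.

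The main obstacle is the budget of \Cref{lem:expander_pruning}, which allows only $\phi m/\gamma$ deletions per cluster. I would handle this by reporting failure, or by arguing that the sparsifier is rebuilt by the outer framework before that many deletions occur; the statement allows explicit failure reports. A second delicate point is showing that the unbalanced heap edges stay correct when $X_{\text{large}}$ loses pruned vertices. Those vertices' incident edges are all in $P\subseteq H$, so they need not be tracked. Running time is $\tilde{O}(1)$ per level over $O(\log n)$ levels, and initialization is $\tilde{O}(m)$ because the expander decomposition sizes shrink geometrically along each branch.
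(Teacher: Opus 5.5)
Your dynamic part has a genuine gap. You claim that a vertex cut off by pruning has ``its entire incident volume already in $H$''. That holds only for its low edges: those inside the cluster are in $P$ or deleted, and those leaving it are in $E(\mathcal{X})$.

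High edges are never part of the pruning instance, so they never enter $P$. The high edges inside a cluster were discarded precisely because the cluster was connected by lighter edges. Likewise, the cycle argument behind keeping only $e_v$ breaks once some $w\in X_{\text{large}}$ is cut off, since $w$ is no longer joined to the rest of $X_{\text{large}}$ by lighter edges.

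A minimal failure: let $x$ have a single low edge $xy$ in its cluster (allowed in a $\phi$-expander). Let $x$ also have a discarded high edge $xz$ into the same cluster that is lighter than every edge at $x$ other than $xy$. Deleting $xy$ makes $xz$ the lightest edge at $x$, so $xz\in\msf(G)$. Yet your update rule, which adds only pruned edges and new heap minima, never re-inserts it. Dumping all high edges of an isolated vertex into $H$ is not an option either, since $x$ may have $\Theta(n)$ of them.

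The paper resolves this with self-loops. It adds a self-loop at both endpoints of every high edge before computing the expander decomposition. Expansion then forces every vertex to have at most $1/\phi$ times its low degree in high edges. For each deleted or pruned edge, the paper moves $1/\phi$ not-yet-included high edges at each endpoint into $H$ (and out of the heaps). This gives \Cref{clm:singleton_expander} and \Cref{clm:point_to_remainder}.

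The same bound handles the pruning budget, which you leave open. On each deletion in a still-active cluster, the paper also adds $2\gamma/\phi$ edges incident to the cluster. So when the $\phi m'/\gamma$ allowed deletions are used up, all edges incident to the cluster are already in $H$ and the cluster is abandoned (\Cref{clm:inactive_expander}).

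Neither of your alternatives works. Failures may occur only with small probability (from the expander decomposition), whereas exhausting a budget is under the adversary's control. This standalone theorem also has no outer rebuild. In any case, a cluster with fewer than $\gamma/\phi$ edges exhausts its budget at its first deletion.

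Two smaller issues remain. First, the recursion depth is not $O(\log n)$. An unbalanced step passes $X_{\text{large}}$ on with only a $(1-\gamma)$ factor fewer edges and possibly almost all vertices. So chains of $\Theta(\log^2 n)$ levels occur (\Cref{clm:levels}), and a light edge can be present in all of them. With your per-node crossing budget $\gamma|E_U|$, the total becomes $\Theta(c\,m\log n)$ rather than $m/32$. You need a crossing fraction of $O(1/\log^2 n)$ per node, which the paper obtains from $\phi = c/\log_2^5 n$.

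Second, your contraction for the sparse regime does not say what happens when a deletion hits an MSF edge inside a contracted cluster. The paper first splits every vertex of degree $D>3$ into a path of $D$ vertices joined by $-\infty$ edges. The MSF then has maximum degree $3$, and \Cref{lma:clusteringFrederikson} yields clusters of volume $\tilde{O}(1)$. It then adds all edges incident to a cluster to $H$ as soon as an update touches it. Without the degree reduction, a cluster of $\Theta(\log n)$ vertices can have unbounded volume, and this step has no $\tilde{O}(1)$ recourse bound.
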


We first describe our core sparsification algorithm, which falls short of the sparsity guarantees in terms of the number of vertices $n$ stated in \Cref{thm:edgeSparsifierMST}. Finally, we retrieve the full theorem via a simple reduction in \Cref{sec:proof_msf_sparsifier}. 

\subsection{Constructing the Initial Sparsifier}
\label{subsec:msf_sparsifier_construction}

The construction of the sparsifier is recursive. We therefore use $|V|$ and $|E|$ when referring to the number of vertices and edges, respectively, and reserve the symbols $n$ and $m$ for the input graph. For simplicity, we slightly overload notation and refer to the current graph as $G$, but we remark that it may be a subgraph of the top-level graph. The base case is a graph with a single vertex, which has a trivial sparsifier.  

For simplicity, we assume that the expander decomposition algorithm never fails. We discuss reporting failures at the very end of this section. 

\paragraph{Algorithm Description.}

Let $E_\text{high}$ be the set of the $\frac{|E|}{64 \log_2 n}$ highest weight edges in $E$, and $E_\text{low} = E \setminus E_\text{high}$. We assume that edge weights are unique without loss of generality.

Then, we let $\phi = \frac{c}{\log_2^{5} n}$ where $c > 0$ is a sufficiently small positive constant. We first compute a $\phi$-expander decomposition of $G[E_\text{low}]$ using \Cref{lem:expander_decomp}, where we add a self-loop at each endpoint of every edge in $E_{\text{high}}$.\footnote{These self-loops ensure that for each expander, the number of edges in $E_{\text{low}}$ leaving a vertex is at least $\phi$ times the number of adjacent edges in $E_{\text{high}}$, or most of the volume of the expander is adjacent to this vertex.}
Let $\mathcal{X} = \{X_1, \dots, X_x\}$ be the expander components, and let $X_{\text{large}}$ be the largest expander in terms of vertex count, i.e., $X_{\text{large}} = X_j$ for some $j$ such that $|X_j| \geq |X_i|$ for all $i \neq j$.
We now distinguish two cases based on the size of $X_{\text{large}}$.
\begin{itemize}
    \item If $|X_{\text{large}}| \leq \frac{|V|}{2}$, we say $G$ is \textbf{balanced}. We recursively compute a sparsifier $H_i$ of each expander $G[X_i] \setminus E_{\text{high}}$, then set $H$ as the graph induced by the edge set
    \[
        E(\mathcal{X}) \cup \bigcup_{1 \leq i \leq x} E(H_i).
    \]
    We remark that $E(\mathcal{X}) = E_{\text{low}}(\mathcal{X}) \cup E_{\text{high}}(\mathcal{X})$ contains all crossing edges, not just the crossing edges of the graph $G[E_{\text{low}}]$ that we pass to the expander decomposition.
    \item If $|X_{\text{large}}| > \frac{|V|}{2}$, we say $G$ is \textbf{unbalanced}. We recursively compute a sparsifier $H_\text{small}$ of the graph $G[V \setminus X_{\text{large}}]$ and a sparsifier $H_\text{large}$ of $G[X_{\text{large}}] \setminus E_{\text{high}}$. For each $v \in V \setminus X_{\text{large}}$, let $e_v \in E_\text{high}(\{v\}, X_{\text{large}})$ be the lowest weight edge (if exists) of $E_\text{high}$ between $v$ and a vertex in $X_{\text{large}}$. Finally, we let $H$ be the graph induced by the edge set
    \[
        E(H_\text{small}) \cup E(H_\text{large}) \cup E_\text{low}(\mathcal{X}) \cup \{e_v \mid v \in V \setminus X_{\text{large}}\}.
    \]
\end{itemize}

Next, we show that the computed sparsifier $H$ contains the minimum spanning forest, which follows rather directly from the construction. 

\begin{claim}
    \label{clm:sparsifier_initial}
    The sparsifier $H$ of $G$ constructed above satisfies $\msf(H) = \msf(G)$.
    \begin{proof}
        First, we show that any $e \in E(G) \setminus E(H)$ we removed is not included in $\msf(G)$. We may remove an edge $e$ for three reasons:
        \begin{itemize}
            \item $e \in E_\text{high}$ and both its endpoints are in an expander $X$ whose edges are all in $E_\text{low}$. Since the expander is connected, $e$ must be the heaviest edge in a cycle of $X + e$, so $e \notin \msf(G)$.
            \item $G$ is in the unbalanced case, $e \in E_\text{high}(V \setminus X_{\text{large}}, X_{\text{large}})$, and there is a lighter edge $e' \in E_\text{high}(V \setminus X_{\text{large}}, X_{\text{large}})$ sharing the same endpoint of $e$ in $V \setminus X_{\text{large}}$. Write $e = uv$ and $e' = uv'$ where $u \in V \setminus X_{\text{large}}$ and $v, v' \in X_{\text{large}}$. Since $v$ and $v'$ are connected in $X_{\text{large}}$, $e$ must be the heaviest edge of a cycle in $X_{\text{large}} + \{e, e'\}$, so $e \notin \msf(G)$.
            \item $e$ is removed when we sparsify a subgraph recursively. We can conclude $e \notin \msf(G)$ for the same reasons.
        \end{itemize}
        This implies that $\msf(G) \subseteq H$ and thus $\msf(H) = \msf(G)$. 
    \end{proof}
\end{claim}

Next, we assign a level, or recursion depth, to each subgraph we sparsify for the purpose of analysing our algorithm. The top-level graph is the only level-$0$ subgraph. For each $i > 0$, the level-$i$ subgraphs are constructed from each level-$(i - 1)$ subgraph $G$. If $G$ is in the balanced case, we set each $X_i$ as a level-$i$ subgraph, and if $G$ is in the unbalanced case, we set $G[V \setminus X_{\text{large}}]$ and $G[X_\text{large}]$ as level-$i$ subgraphs.

These subgraphs form the node set of a recursion tree $\mathcal{T}$ naturally, where the root is the top-level graph, and the children of a level-$i$ subgraph are the level-$(i + 1)$ subgraphs it produces. We will refer to the children, parents, ancestors, and descendants of a subgraph in this sense.

\begin{claim}
    \label{clm:levels}
    There are at most $256 \cdot \log_2^2 n$ recursion levels.
\end{claim}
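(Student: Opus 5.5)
We bound the depth of the recursion tree $\mathcal{T}$ by charging each step from a subgraph to its child to one of two potentials: the number of vertices and the number of edges. Fix a root-to-leaf path $G = G_0, G_1, \dots, G_L$ in $\mathcal{T}$. Each step $G_{i} \to G_{i+1}$ is one of three types.
\begin{enumerate}
    \item A \emph{balanced step}: $G_i$ is balanced and $G_{i+1} = G_i[X_j]\setminus E_\text{high}$.
    \item An \emph{unbalanced-small step}: $G_i$ is unbalanced and $G_{i+1} = G_i[V\setminus X_{\text{large}}]$.
    \item An \emph{unbalanced-large step}: $G_i$ is unbalanced and $G_{i+1} = G_i[X_{\text{large}}]\setminus E_\text{high}$.
\end{enumerate}
In the first two types the vertex count at least halves, since $|X_j|\le |V|/2$ in the balanced case and $|V\setminus X_{\text{large}}| < |V|/2$ in the unbalanced case. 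Since vertex counts never increase along the path and start at $n$, there are at most $\log_2 n$ steps of these two types.

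For the unbalanced-large steps we use the edge potential. Along every step, edge counts never increase, because children are subgraphs. In an unbalanced-large step the child is $G_i[X_{\text{large}}]\setminus E_\text{high}$, so it loses all of $E_\text{high}(G_i)$. Thus $|E(G_{i+1})| \le |E(G_i)|(1 - \frac{1}{64\log_2 n})$. Here I take $n$ to be the top-level vertex count used in the definition of $E_\text{high}$. If $n$ is instead read as the current $|V|$, the factor only gets better, since $\log_2|V|\le\log_2 n$.

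Starting from $m \le n^2$ edges (assuming a simple graph, or $\operatorname{poly}(n)$ edges in general), after $k$ such steps the edge count is at most $n^2 e^{-k/(64\log_2 n)}$. Once this drops below $1$, every further subgraph has no edges. An edgeless subgraph that is not a single vertex is balanced, since its expander decomposition consists of singletons, so from then on only halving steps occur. Hence the number of unbalanced-large steps is at most about $64\log_2 n \cdot \ln(n^2) \le 128 \ln 2\cdot \log_2^2 n \cdot \ldots$. More simply, it is at most $64\log_2 n\cdot 2\log_2 n\cdot \ln 2 + 1 \le 128\log_2^2 n + 1$. The caveat is that $E_\text{high}$ has size $\lfloor |E|/(64\log_2 n)\rfloor$, which may be $0$ when $|E| < 64\log_2 n$. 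In that regime I would use the other reading: an edgeless $E_\text{high}$ makes the self-loop gadget irrelevant, and every additional unbalanced-large step then removes at least one edge. I would handle this by rounding $E_\text{high}$ up to at least one edge whenever $|E|\ge 1$. If the paper's convention does not allow that, I would fall back on the bound of $64\log_2 n$ extra steps, since at that point there are fewer than $64\log_2 n$ edges and each unbalanced-large step must strictly decrease the edge count.

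The main obstacle is exactly this low-edge regime: I need an unbalanced-large step to strictly make progress. If $E_\text{high}=\emptyset$ and $X_\text{large}=V$, the child would equal the parent and the recursion would not terminate. So I would check, and if necessary argue, that whenever $E_\text{high}=\emptyset$ the graph $G[X_\text{large}]$ with $X_\text{large}=V$ is handled separately. Either it is treated as a base case, or the ceiling convention makes $|E_\text{high}|\ge 1$. Granting this, summing gives at most $\log_2 n + 128\log_2^2 n + 64\log_2 n \le 256\log_2^2 n$ levels for $n\ge 2$, which is the claimed bound.
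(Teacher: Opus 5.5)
Your argument is correct and is essentially the paper's. There are at most $\log_2 n$ vertex-halving steps (balanced children and the small unbalanced child), plus unbalanced-large steps that each remove a $1/(64\log_2 n)$ fraction of the current edges; the paper counts the latter as at most $64\log_2 n$ steps per halving of the edge count over $\log_2 m \le 2\log_2 n$ halvings, while you use the equivalent exponential-decay estimate. You also note two technicalities that the paper's proof silently assumes: $|E_{\text{high}}|$ must be rounded up, and an edgeless subgraph must terminate rather than be returned as a single vacuous ``expander'', which the paper's definition technically permits. Making these explicit improves on the paper's proof and is not a gap.
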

\begin{proof}
        Consider any root-to-leaf path in the recursion tree. Each vertex in this path corresponds to a subgraph. Now, let $G_i$ be a level-$i$ graph with $n'$ vertices and $m'$ edges. If $G_i$ is balanced, each expander contains at most $\frac{n'}{2}$ vertices. If it is unbalanced, then $G[V \setminus X_{\text{large}}]$ has fewer than $\frac{n'}{2}$ vertices and $G[X_{\text{large}}]$ has at most $m' - \frac{m'}{64 \cdot \log_2 n}$ edges. Since there are $n$ vertices and $m$ edges initially, the vertex decreasing can happen at most $\log_2 n$ times. Until the number of edges in the graph has halved, each iteration of edge decreasing takes away at least a $1/(128 \log_2 n)$ fraction of the initial number of edges. Therefore, the number of such iterations is bounded by $128 \log_2^2 n$. The claim follows. 
\end{proof}

\begin{claim}
    \label{clm:initial_size}
    Initially, $H$ contains at most $\frac{m}{32} + n \log_2 n$ edges.
    \begin{proof}
        Notice that every edge $e \in H$ belongs to the inter-component edges (i.e., $E(\mathcal{X})$ in the balanced case and $E_\text{low}(\mathcal{X}) \cup \{e_v \mid v \in V \setminus X_{\text{large}}\}$ in the unbalanced case) of a subgraph, so we just count those edges over all the subgraphs considered by our recursion.

        First, we bound the edges in $E_\text{low}(\mathcal{X})$ of each such subgraph. By \Cref{lem:expander_decomp}, we have
        \[
            |E_\text{low}(\mathcal{X})| = O\left(\log^3 |E|\right) \cdot \phi \cdot 2|E| < \frac{|E|}{64 \cdot 256 \log_2^2 n}.
        \] 
        Notice that the levelled graphs of the same level are vertex-disjoint, and there are at most $256 \log_2^2 n$ levels by \Cref{clm:levels}. Thus, there are at most
        \[
            256 \log_2^2 n \cdot \frac{m}{64 \cdot 256\log_2^2 n} = \frac{m}{64}
        \]
        such edges in total.

        Next, we bound the remaining edges, which are in $E_\text{high}(\mathcal{X})$ of a levelled graph. In the balanced case, all edges in $E_\text{high}(\mathcal{X})$ are included in $H$, so there are $|E_\text{high}(\mathcal{X})| \leq \frac{|E|}{64 \log_2 n}$ such edges. We partition the edge set in our recursion tree, and there are at most $\log_2 n$ balanced cases on each root-to-leaf path. Thus, the total number of remaining edges in the balanced levelled graphs is at most
        \[
            \log_2 n \cdot \frac{m}{64 \log_2 n} = \frac{m}{64}.
        \]
        In the unbalanced case, we only preserve the edge $e_v$ for each $v \in V \setminus X_{\text{large}}$, so there are at most $|V \setminus X_{\text{large}}| < \frac{|V|}{2}$ such edges. We charge each such edge to its endpoint in $V \setminus X_{\text{large}}$. Since the component of a vertex can only be half in size at most $\log_2 n$ times, every vertex can be charged at most $\log_2 n $ times. Therefore, the total number of such edges is at most $n \log_2 n$.
        
        We conclude that there are at most
        \[
            \frac{m}{64} + \frac{m}{64} + n\log_2 n = \frac{m}{32} + n \log_2 n
        \]
        edges in $H$ in total.
    \end{proof}
\end{claim}

\subsubsection{Processing Edge Deletions}

In this section, we show how to maintain the construction from \Cref{subsec:msf_sparsifier_construction} under edge deletions. The main technical tool enabling the maintenance is worst-case expander pruning, which allows us to maintain an expander throughout a sequence of edge deletions. 

\paragraph{Additional Initialization}

As above, we first do an expander decomposition and obtain the vertex clusters $X_1, \dots, X_x$. For each expander $X_i$, we set up a pruning data structure $\mathcal{P}_{X_i}$ using \Cref{lem:expander_pruning}. The data structure maintains an initially empty edge set $P \subseteq E(X_i)$, such that $(G \setminus P)[X_i \setminus S]$ remains an expander, where $S$ contains all vertices in $(G \setminus P)$ with volume $0$, i.e., isolated vertices. Note that according to \Cref{lem:expander_pruning}, every $\mathcal{P}_{X_i}$ has a deletion limit.

If $G$ is unbalanced, we initialize a data structure $\mathcal{D}_G$ consisting of $|V \setminus X_{\text{large}}|$ min-heaps for each $v \in V \setminus X_{\text{large}}$. Each $\mathcal{D}_G(v)$ stores the edges of $E_\text{high}(\{v\}, X_{\text{large}})$ according to their weights. 

These data structures are maintained for every level of the recursion. 

\paragraph{Handling edge deletions}

Suppose we delete edge $e$ from $G$. We delete $e$ from $H$ (if $e \in H$) and update $H$ recursively. Below, we describe how the update of $H$ is processed for the top-level graph $G$. We will perform these operations for each recursion level.

First, if $G$ is unbalanced, we remove $e = uv$ from the queue $\mathcal{D}_G(w)$ that contains $e$ where $w \in \{u, v\}$. Notice that only edges with one endpoint in $X_{\text{large}}$ end up in a priority queue, and therefore at most one of the two can contain $e$. If $e$ was the lightest edge connecting one of the vertices in $X_{\text{small}} = V \setminus X_{\text{large}}$ to $X_{\text{large}}$, we replace it with the lowest-weight edge in the corresponding priority queue. We add this edge to $H$.

Regardless of whether $G$ is balanced or unbalanced, we then check if $e$ is an edge internal to an expander $G[X_i]$. If this is the case, we inform $\mathcal{P}_{X_i}$ of the deletion of $e$. If $\mathcal{P}_{X_i}$ has not reached the deletion limit, we obtain a set of edges $\Delta P \subseteq X_i$ whose removal preserves the expander. Add all these edges in $\Delta P$ to $H$, and additionally add $\frac{2\gamma}{\phi}$ edges incident to $X_i$ (if they exist) into $H$ where $\gamma = \tilde{O}(1)$ is as given in \Cref{lem:expander_pruning}. If $\mathcal{P}_{X_i}$ reaches its deletion limit, we claim below that by the extra additions, every edge incident to $X_i$ is already included in $H$, so we no longer need to maintain $\mathcal{P}_{X_i}$. We say $X_i$ is \textbf{active} if $\mathcal{P}_{X_i}$ has not reached the limit and $X_i$ is \textbf{inactive} otherwise.

In the unbalanced case, we further add some of the edges in $E_{\text{high}}(X_{\text{small}}, X_{\text{large}})$ to $H$ and remove them from the priority queues. For every vertex $v$ that is incident to either the deleted edge, one of the edges output by expander pruning, or one of the arbitrarily selected additional edges, we remove $1/\phi$ arbitrary additional edges incident to $v$ (if they exist) of $E_{\text{high}}(X_{\text{small}}, X_{\text{large}})$ from their respective heaps and add them to $H$. If this causes the minimum maintained by the heaps to change, we additionally add the new minimum-weight edges. 

\begin{claim}
    \label{clm:singleton_expander}
    Consider an expander component $X_i$, and let $P$ be the set maintained by $\mathcal{P}_{X_i}$. If a vertex $x \in X_i$ becomes a singleton in $G[X_i] \setminus P$, then all remaining edges incident to $x$ are in $H$.
\end{claim}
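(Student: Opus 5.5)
We argue by tracking the moment at which $x$ loses its last non-pruned edge inside $G[X_i]$. That moment is triggered by the deletion of some edge $e$, with $\mathcal{P}_{X_i}$ outputting a set $\Delta P$ in the same step. Just before the step, $x$ had at least one edge of $G[X_i]\setminus P$. Just after it, every such edge is either deleted (it was $e$) or lies in $P$.

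The edges incident to $x$ fall into three classes, which we treat separately.
\begin{itemize}
\item Edges of $G[X_i]$: each such edge is in $P$, and every edge placed in $P$ is added to $H$ when it is pruned. Edges deleted from $G$ no longer matter. So all remaining edges of this class are in $H$.
\item Edges of $E_{\text{low}}$ leaving $X_i$: these are in $E_{\text{low}}(\mathcal{X}) \subseteq H$ in both the balanced and the unbalanced case.
\item Edges of $E_{\text{high}}$: in the balanced case, those crossing components are in $E(\mathcal{X}) \subseteq H$. Those inside $X_i$ belong to the recursive instance on $G[X_i]\setminus E_{\text{high}}$. Such an edge cannot become an MSF edge via $x$ unless it reappears in $H$, and this is the case that needs real work. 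In the unbalanced case, the edges of $E_{\text{high}}(X_{\text{small}},X_{\text{large}})$ at $x$ are exactly those handled by the $1/\phi$ extra additions.
\end{itemize}

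The key quantitative step uses the self-loops. In the decomposition, each endpoint of an $E_{\text{high}}$ edge carries a self-loop, so the volume of $x$ in the instance passed to the expander decomposition is at least $\deg_{E_{\text{low}}}(x)+\deg_{E_{\text{high}}}(x)$. Consider the last edge of $G[X_i]\setminus P$ incident to $x$, removed at the critical step. Before that step, $\{x\}$ was cut off from the rest of the (pruned) expander by at most that one edge plus the edges already in $P$. Since $G[X_i]\setminus P$ is a $\phi/\gamma$-expander by \Cref{lem:expander_pruning}, the expansion of $\{x\}$ with its self-loop volume gives
\[
\deg_{E_{\text{high}}}(x) \le \tfrac{\gamma}{\phi}\,|\partial \{x\}|.
\]
So the number of heavy edges at $x$ is bounded by $\gamma/\phi$ times the number of light edges at $x$ in $G[X_i]$.

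Each time a light edge at $x$ is deleted or pruned, $x$ is incident to the deleted edge or to an output of pruning, and the rule adds $1/\phi$ arbitrary heavy edges at $x$ to $H$. Summed over all light edges at $x$, at least $\deg_{E_{\text{low}}}(x)/\phi$ heavy edges are added. After rescaling constants, which is where the extra $2\gamma/\phi$ edges per step supply the slack, this exhausts all heavy edges at $x$ by the time $x$ becomes a singleton.

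The main obstacle is this counting step. The difficulty is that each pruning event may touch $x$ with only one edge while many heavy edges remain. The plan is to charge every heavy edge at $x$ to one of $x$'s light edges that was removed, using the expansion guarantee on $\{x\}$ just before isolation, where the self-loops make the heavy degree count in the volume. The inactive case is handled by the $2\gamma/\phi$ extra additions per deletion. Over the full deletion budget $\phi m/\gamma$ these add about $2m$ edges, which covers every edge incident to $X_i$.
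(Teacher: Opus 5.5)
\textbf{The counting step has a genuine gap: it does not close.} You bound the heavy degree of $x$ using the pruned graph $G[X_i]\setminus P$. By \Cref{lem:expander_pruning} that graph is only a $\phi/\gamma$-expander, so you get $\deg_{E_{\text{high}}}(x)\le (\gamma/\phi)\, d$, where $d$ is the number of light edges of $x$ inside $X_i$ at initialization. The algorithm adds $1/\phi$ heavy edges at $x$ per deleted or pruned light edge at $x$. That guarantees only $d/\phi$ heavy additions by the time $x$ is isolated, which falls short of your upper bound by a factor of $\gamma$.

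This cannot be fixed by rescaling constants, for two reasons. First, $\gamma=\tilde{O}(1)$ is polylogarithmic, not a constant. Second, the $2\gamma/\phi$ extra edges per step are arbitrary edges incident to $X_i$, with no guarantee that any of them is a heavy edge at $x$. Those extra edges, like your final paragraph, belong to the proof of \Cref{clm:inactive_expander}, not this claim. Your sentence about $\{x\}$ being separated ``by at most that one edge plus the edges already in $P$'' is also off. Edges of $P$ are not in the pruned graph, and several light edges at $x$ can leave in one step through $\Delta P$.

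\textbf{The fix is to use the expansion of the initial cluster rather than the pruned one.} The decomposition guarantees that $G[X_i]$, built from the light edges together with the self-loops at endpoints of $E_{\text{high}}$ edges, is a $\phi$-expander. Apply this to $S=\{x\}$, in the case that $\{x\}$ is the smaller side, which is the situation the self-loop footnote addresses. The cut has $d$ edges and the volume is at least $d+\deg_{E_{\text{high}}}(x)$, so $\deg_{E_{\text{high}}}(x)\le d/\phi$. This matches the $1/\phi$-per-removed-light-edge rule exactly. Since $x$ is a singleton, all $d$ of its light edges in $X_i$ were deleted or pruned. So $d/\phi$ heavy edges at $x$ were added, or they were exhausted. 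Later deletions of heavy edges only help. This is how the paper argues, and it needs no analysis of the critical step at all.

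\textbf{Two smaller points.} Your handling of light edges leaving $X_i$ via $E_{\text{low}}(\mathcal{X})\subseteq H$ is correct, a detail the paper skips. However, heavy edges with both endpoints in $X_i$ are not part of the recursive instance, which is built on $G[X_i]\setminus E_{\text{high}}$. They are exactly the edges dropped from $H$, so they are precisely what the $1/\phi$ rule must cover.
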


\begin{proof}
    Since $x$ becomes a singleton, every edge of $G[X_i] \setminus E_{\text{high}}$ incident to $x$ has either already been deleted from $G$ or placed into the pruned set $P$. Whenever an edge is added to $P$, our algorithm immediately adds it to $H$. Therefore, all remaining edges of $G[X_i] \setminus E_{\text{high}}$ incident to $x$ are in $H$.

    It remains to account for the edges in $E_{\text{high}}$ incident to $x$. Since we additionally add $1 / \phi$ edges adjacent to every endpoint per deleted or pruned edge to $H$, and there are at most $1/\phi$ times the initial degree of $x$ such edges by expansion (because we added a self-loop for each such edge before computing the expander decomposition), these are also all in $H$. This concludes the proof.  
\end{proof}

\begin{claim}
    \label{clm:inactive_expander}
    If an expander $X$ is inactive, every edge incident to $X$ is added to $H$.
    \begin{proof}
        Suppose $X$ has $m'$ edges initially. When $\mathcal{P}_X$ reaches its deletion limit, $X$ underwent $\frac{\phi m'}{\gamma}$ edge deletions. Since $\frac{2\gamma}{\phi}$ edges incident to $X$ are added to $H$ for each deletion, we have already added $\frac{\phi m'}{\gamma} \cdot \frac{2\gamma}{\phi} = 2m'$ edges to $H$, so all edges incident to $X$ are in $H$. By expansion, this also holds for the edges in $E_{\text{high}}$ adjacent to $X$ because every vertex of degree $D$ in $G[X] \setminus E_{\text{high}}$ is incident to at most $D/\phi$ edges in $E_{\text{high}}$ (since we add self-loops for them in our algorithm). These get included in $H$ by the description of our algorithm.
    \end{proof}
\end{claim}

\begin{claim}
\label{clm:point_to_remainder}
    Assume that we are in the unbalanced case. Then, either all edges in data structures $\mathcal{D}(\cdot)$ are adjacent to the same connected component of $G[X_{\textnormal{large}}] \setminus P$, or all edges adjacent to $X_{\textnormal{large}}$ are in $H$.
\end{claim}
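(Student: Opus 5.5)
We distinguish whether $X_{\text{large}}$ is still active or has become inactive. If $X_{\text{large}}$ is inactive, \Cref{clm:inactive_expander} shows that every edge incident to $X_{\text{large}}$ is already in $H$, so the second alternative holds. The rest of the argument therefore assumes $X_{\text{large}}$ is active. In that case \Cref{lem:expander_pruning} says $(G \setminus P)[X_{\text{large}}]$, with its isolated vertices $S$ removed, is a $\frac{\phi}{\gamma}$-expander (with respect to the edges of $G[X_{\text{large}}] \setminus E_{\text{high}}$). A nontrivial expander is connected, so all of $X_{\text{large}} \setminus S$ lies in a single connected component $R$ of $G[X_{\text{large}}] \setminus P$. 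Every other component is then a singleton $\{x\}$ with $x \in S$.

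Next we show that the heaps contain no edge to such a singleton. Fix an edge $vx$ with $v \in X_{\text{small}}$ that is still stored in some $\mathcal{D}(v)$, and suppose for contradiction that $x \in S$. By \Cref{clm:singleton_expander}, all remaining edges incident to $x$ are in $H$, including all remaining $E_{\text{high}}$ edges at $x$. We then argue that every high edge at $x$ reaches $H$ only via the step that removes $1/\phi$ such edges from their heaps. That step is triggered whenever $x$ is incident to a deleted edge or to a pruned edge. So by the time $x$ becomes isolated, the algorithm has already removed every cross edge at $x$ from the heaps. Hence $vx$ cannot still be in $\mathcal{D}(v)$, and every stored heap edge has its $X_{\text{large}}$-endpoint in $R$, which gives the first alternative.

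The main obstacle is the counting in the previous step: showing that the heap-removal triggers at $x$ happen often enough. Each deleted or pruned edge incident to $x$ triggers the removal of $1/\phi$ high edges at $x$. For $x$ to become isolated, all of its $\deg_{G[X_{\text{large}}]\setminus E_{\text{high}}}(x)$ low edges must have been deleted or pruned. Because of the self-loops added in the decomposition, expansion gives that $x$ has at most $\deg(x)/\phi$ incident high edges. Together these mean every high cross edge at $x$ has been drained from its heap by the time $x$ becomes a singleton.

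One subtlety remains: a vertex of $X_{\text{large}}$ whose low degree was zero from the start. The self-loops give such a vertex positive degree in the decomposition graph. Since it belongs to an expander component of size greater than $1$, expansion then forces it to have a low edge. So no vertex is isolated from the outset while still holding heap edges. This is the step where I expect to need the most care, matching the ``$1/\phi$ per incident update'' rule to the degree bound.
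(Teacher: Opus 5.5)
Your proposal is correct and follows essentially the same route as the paper. You reduce to the active case via \Cref{clm:inactive_expander}. You then note that each deleted or pruned edge at $x$ drains $1/\phi$ heap edges at $x$, while the self-loops bound the number of high edges at $x$ by $1/\phi$ times its low degree. Finally, you use expansion to conclude that all non-singleton vertices of $X_{\text{large}}$ lie in one component.

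One sentence in your second paragraph is inaccurate and should be dropped: high edges at $x$ do not enter $H$ only by being removed from heaps. The current heap minima $e_v$ and the extra $2\gamma/\phi$ edges sit in $H$ while possibly still in heaps. This is harmless, because your counting argument in the third paragraph establishes the draining directly without relying on that sentence.
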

\begin{proof}
     We can assume that we are in the active case by \Cref{clm:inactive_expander}. Then, every time an edge gets pruned or deleted, we remove $1/\phi$ edges that are adjacent to the endpoints from the heaps. Again, by expansion and since we add a self-loop for each such edge, all of them are removed once a vertex becomes a singleton. By expansion, every non-singleton vertex is connected.
\end{proof}

\begin{claim}
    \label{clm:msf_maintaining}
    After deleting any edge $e \in E$ and updating $H$, we still have $H \subseteq G$ and $\msf(H) = \msf(G)$. Moreover, $H$ increases by at most $\tilde{O}(1)$ edges.

    \begin{proof}
        We must have $H \subseteq G$ since we delete $e$ from $H$.
        
        Let $f \in E \setminus \{e\}$ be an edge not in $H$. To show $\msf(H) = \msf(G)$, it suffices to show that $f \notin \msf(G)$. By our algorithm, \Cref{clm:singleton_expander}, and \Cref{clm:inactive_expander}, $f$ cannot be incident to a singleton vertex of the maintained expander or in an inactive expander. Thus, $f$ is contained in an unpruned part of an expander or between two of them. If it is between them, it has to be heavy, so applying \Cref{clm:point_to_remainder} and following the proof of \Cref{clm:sparsifier_initial} gives $f \notin \msf(G)$.
        We add at most
        \[
            O\left(1 + |\Delta P| + \frac{2\gamma}{\phi} + \frac{1}{\phi}\left(1 + |\Delta P| + \frac{2\gamma}{\phi}\right)\right) \leq O\left(\frac{1}{\phi} \cdot \left(1 + \frac{\gamma}{\phi^2} + \frac{2\gamma}{\phi}\right)\right) = \tilde{O}(1)
        \]
        edges to $H$ in each recursion level. Since there are $\tilde{O}(1)$ levels, we add $\tilde{O}(1)$ edges to $H$.
    \end{proof}
\end{claim}

\subsection{Bounding the Required Time for Initialization and Updates}

\begin{claim}
    \label{clm:runtime_update}
    After each edge deletion, we need $\tilde{O}(1)$ time to update $H$.
    \begin{proof}
        By \Cref{clm:levels}, at most $\tilde{O}(1)$ many heaps contain $e$. Removing $e$ from each heap costs $O(\log m)$ time. Updating each $\mathcal{P}_{X_i}$ and finding the pruned edges $\Delta P$ can be done in $\frac{\gamma}{\phi^2} = \tilde{O}(1)$ time. The additional pruning and maintenance of the heaps can be done in $\tilde{O}(1/\phi^3)$ time. Finding the next lightest edge $e'$ costs $O(\log m)$ time. All other operations can be done in $\tilde{O}(1)$ time.
    \end{proof}
\end{claim}

\begin{claim}
    \label{clm:runtime}
    We need time $\tilde{O}(m)$ to construct $H$ and initialize each $\mathcal{D}_G$ and $\mathcal{P}_{X_i}$.
    \begin{proof}
        Fix a level $i$ and let $G_1, \dots, G_s$ be the level-$i$ subgraphs (note that they are disjoint). Let each $G_j$ contain $n_j$ vertices and $m_j$ edges. The time we spend on this level includes
        \begin{itemize}
            \item Computing a $\phi$-expander decomposition of each $G_j$. By \Cref{lem:expander_decomp}, this cost time
            \[
                \sum_{1 \leq j \leq s} \tilde{O}\left(\frac{m_j}{\phi}\right) \leq \tilde{O}\left(\frac{m}{\phi}\right) = \tilde{O}(m).
            \]
            \item Removing edges of $E_\text{high}$ in each expander, which can be done in $O(m)$ time by a linear scan.
            \item (For each unbalanced $G_j$) Removing edges of $E_\text{high}(\mathcal{X}) \setminus \{e_v \mid v \in V \setminus X_{\text{large}}\}$. This can be done in $O(m \log m) = \tilde{O}(m)$ time by sorting the edges.
            \item Constructing $\mathcal{D}_{G_j}$ (for each unbalanced $G_j$) and $\mathcal{P}_{X_i}$. They can be done in $\tilde{O}(m)$ time.
        \end{itemize}
        Since there are $\tilde{O}(1)$ levels by \Cref{clm:levels}, the total runtime is $\tilde{O}(m)$.
    \end{proof}
\end{claim}

\subsection{Proof of \texorpdfstring{\Cref{thm:edgeSparsifierMST}}{Main Theorem}}
\label{sec:proof_msf_sparsifier}

We finally assemble the results shown above to prove \Cref{thm:edgeSparsifierMST}, the main theorem of this section. We assume that the graph is initially connected; otherwise, we run the algorithm separately on each component. By \Cref{clm:initial_size}, we initially reduce the edge count of a graph with $m$ edges and $n$ vertices to $m/32 + n \log_2 n$. If $n \log_2 n \leq m/32$, \Cref{thm:edgeSparsifierMST} directly follows from \Cref{clm:sparsifier_initial}, \Cref{clm:msf_maintaining}, and \Cref{clm:runtime}.

We now describe how we handle the very sparse case when $n \log_2 n > m/32$, i.e., $m < 32 \cdot n \log_2 n$. We replace every vertex $v$ with degree $D > 3$ with a collection of $D$ vertices and assign one of the edges of $v$ to each such vertex. Then, we add a path between these vertices, and give all the edges on the path the weight $-\infty$. Clearly, the MST of the graph remains the same after contracting all edges with weight $-\infty$. The new graph has $n' \leq 3m$ vertices and $m' \leq 3m$ edges.

We will now introduce the concept of shattering forests, which allows us to decompose this minimum spanning tree into small chunks. This will then let us work on a graph with fewer vertices, achieving the claimed density bound in this very sparse case.  

\begin{definition}[Shattering Forest] \label{def:shattering_forest}
Given graphs $G = (V, E)$ and $H \subseteq G$, we call a forest $F \subseteq G$ composed of trees $T_1, \ldots, T_\eta$ a $\kappa$-shattering forest with respect to $H$ if $\vol_H(V(T_i)) \leq \kappa$ for all $i \in 1, \ldots, \eta$ and $V(G) = V(F)$. 
\end{definition}

\begin{lemma}[see \cite{frederickson1985data}, Lemma 1]\label{lma:clusteringFrederikson}
For an $n$-vertex connected tree $T$ with maximum degree $3$, and a positive integer $z$ such that $n \geq 3z-2$, there is an algorithm that runs in time $O(n)$ and returns a set $E' \subseteq E(T)$ such that each connected component in $T \setminus E'$ contains at least $z$ and at most $3z-2$ vertices.
\end{lemma}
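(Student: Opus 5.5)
We prove this lemma (Frederickson's clustering) by a single bottom-up greedy pass over $T$ rooted at a vertex of degree at most $2$; such a vertex always exists, for instance any leaf. Because the maximum degree is $3$, every vertex other than the root has at most $2$ children. During a post-order traversal we keep, for each processed vertex $v$, a \emph{pending cluster} $C(v)$: a connected set of vertices in the subtree of $v$ that contains $v$ and has not yet been cut off. We maintain the invariant $|C(v)| \le z-1$.

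To process $v$, start with $C(v)=\{v\}$ and absorb the pending clusters of its at most two children. If $|C(v)| \ge z$ at that point, cut $C(v)$ off as a finished component: add the edge from $v$ to its parent to $E'$, and set the pending cluster passed upward to be empty. Otherwise, pass $C(v)$ up to the parent. A finished component has size at least $z$ by construction. Its size is at most $1 + 2(z-1) = 2z-1 \le 3z-2$ for $z \ge 1$, since each child contributes at most $z-1$ by the invariant. The degree bound is used exactly at this step; the pass itself is one traversal, so it takes $O(n)$ time.

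The only obstacle is the root. Its leftover pending cluster $C(r)$ may have size below $z$. In that case we merge it with an adjacent finished component: remove from $E'$ an edge joining $C(r)$ to a child's component that was cut at that edge. Such an edge exists whenever $C(r) \ne V$, because the root-side pending cluster is connected and touches every cut edge directly below it.

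The merged component then has size at most $(z-1) + (2z-1) = 3z-2$. The root has degree at most $2$, but merging happens at the top of the tree, so the more careful accounting is to merge with the finished component whose cut edge is adjacent to $C(r)$; that component has size at most $2z-1$. This gives exactly the bound $3z-2$.

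If no cut ever occurred, then $C(r)=V$. Since $n \ge 3z-2 \ge z$, the root would have triggered a cut itself, so this case is consistent: we output $E'=\emptyset$, and the single component $V$ has size $n \ge z$. Its size is at most $2z-1$ by the same computation as above.

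This gives every component size in $[z, 3z-2]$, all in linear time.
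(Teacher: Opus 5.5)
Your argument is correct. The paper gives no proof of its own and only cites Frederickson. Your bottom-up greedy is essentially the argument behind that lemma: root at a leaf so every vertex has at most two children, close a cluster once it reaches size $z$ (so it has size at most $2z-1$), and merge an undersized root cluster into an adjacent finished cluster (size at most $(z-1)+(2z-1)=3z-2$).

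Two points are worth tightening. First, the merge edge exists because $C(r)$ contains the parent of each of its non-root vertices. Any tree edge leaving $C(r)$ therefore goes down to a cut child $w$, whose component is the finished cluster $C(w)$. Second, your proof only ever uses $n \ge z$, not the full hypothesis $n \ge 3z-2$.
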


The minimum spanning tree in our graph now has maximum degree $3$, and we can split it into components of size at least $z = 3 \cdot 128 \log_2^2 n$ by \Cref{lma:clusteringFrederikson}. There are $\eta \leq \frac{n'}{z} \leq \frac{m}{128 \log_2^2 n}$ such components, each having a volume at most $9z = \tilde{O}(1)$. We contract each such component, and then run the algorithm described above on this graph. Note that this graph has at most $m' - (n' - \eta) = m' - n' + \eta = m - n + \eta \leq m$ edges. The initial sparsifier on this contracted graph then has at most
\[
    \frac{m}{32} + \frac{n'}{z} \log_2 \frac{n'}{z} \leq \frac{m}{32} + \frac{m}{128 \log_2^2 n} \cdot 4\log_2 n \leq \frac{m}{16}
\]
edges, and uncontracting adds at most $n - 1$ additional edges. 

When returning the sparsifier, we contract the $-\infty$-edges that correspond to a single vertex, and whenever an update touches a component (i.e., an endpoint of a deleted edge lies in a component), we add all edges incident to that component to the final sparsifier, which only adds $\tilde{O}(1)$ edges. Correctness again follows directly from \Cref{clm:sparsifier_initial}, \Cref{clm:msf_maintaining}, and \Cref{clm:runtime}.

We finally address potential failures in the expander decomposition algorithm, since this algorithm only succeeds with high probability. If the expander decomposition is deficient, one of the components is not a $\phi$-expander, or there are too many crossing edges. If there are too many crossing edges, we detect the failure immediately. Otherwise, a component $X_i$ of the expander decomposition could become disconnected, which is the only reason our algorithm would fail to return a proper MST sparsifier. We can avoid this by explicitly maintaining the connectivity of the components with the algorithm of \cite{meierhans2026dynamic}. If the expander decomposition gets derandomized in the future, this step will no longer be necessary. 

This concludes the proof of \Cref{thm:edgeSparsifierMST}.

%% file: 2EdgeConnectivity.tex
\section{An Edge-Sparsifier for Fully-Dynamic 2-Edge Connectivity}
\label{sec:2edgeconn}

In this section, we introduce a novel edge sparsifier for the $2$-edge connectivity problem. This is the key primitive required to solve the fully-dynamic $2$-edge connectivity problem, as we show in \Cref{subsec:complete2EdgeConn}.

\begin{restatable}{theorem}{edgeSparsifier2Conn}
\label{thm:edgeSparsifier2conn}
Let $G$ be a decremental graph. Then, there is an algorithm that maintains a sparsifier $H$ along with a set of $2$-edge connected edges/non-bridges $N \subseteq E(G)$ such that: 
\begin{enumerate}
    \item at any stage $t$, $H^{(t)} \subseteq G^{(t)}$ and $H^{(t)}$ always contains a maximal spanning forest of $G^{(t)}$, and
    \item $|E(H^{(0)})| \leq n - (\text{\#connected components of }G^{(0)}) + m/16$ and after each deletion to $G$, at most $\tilde{O}(1)$ new edges are added to $H$, and
    \item the set $N \subseteq E(G^{(t)})$ is decremental, has at most $\tilde{O}(1)$ edges removed after each edge deletion to $G$, and for each edge $e \in E(G^{(t)})$, we have:
    \begin{itemize}
        \item \underline{if $e \in N^{(t)}$:} then the endpoints of $e$ are $2$-edge connected in $G^{(t)}$, and
        \item \underline{otherwise:} the endpoints of $e$ are $2$-edge connected in $G^{(t)}$ if and only if they are $2$-edge connected in $H^{(t)}$.
    \end{itemize}
\end{enumerate}
The algorithm takes $\tilde{O}(m)$ time to initialize. Every update to $G$ can be processed in $\tilde{O}(1)$ time. The algorithm succeeds with high probability. 
\end{restatable}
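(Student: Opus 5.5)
We follow the blueprint from the overview. First, a preprocessing step handles the very sparse regime. We repeatedly remove degree-$1$ vertices and contract paths of degree-$2$ vertices, so that $m = n-1+k$ with $k = \Omega(n)$. The removed degree-$1$ edges are bridges and are placed in $H$ but not in $N$. Each contracted degree-$2$ path is represented by one edge, and its original edges inherit the $2$-edge connectivity status of that representative. Since all edges of such a path lie on a common cycle or are all bridges, this is consistent. Under deletions we only ever split paths or create new low-degree vertices, and these changes can be propagated with $O(1)$ recourse per deletion. From now on we may therefore assume $n = O(k)$.

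Next, we compute a $\phi$-expander decomposition $X_1,\dots,X_x$ with \Cref{lem:expander_decomp}, choosing $\phi = 1/\tilde{O}(1)$ so that $|E_{\text{rest}}| = |E(\mathcal{X})| \le k/16$. This step is where the bound $n=O(k)$ is used. For each $G[X_i]$ we compute a spanning tree $T_i$ and classify every tree edge as volume-covered, cut-covered or a bridge. Cut-covered edges are handled by embedding $\partial_G C$ for the small side $C$. We set $N$ to be the non-bridges of the $G[X_i]$ and $H = (G\setminus N)\cup\textsc{SpanningForest}(G\cap N)$. Correctness of $N$, the spanning-forest property, the 2-edge-connectivity equivalence for $e\notin N$, and the size bound $n-1+|E_{\text{rest}}|$ all follow from the arguments sketched in the overview. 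The equivalence argument uses that $H\setminus e$ and $G\setminus e$ have the same components. The size bound uses that every edge of $(G\setminus N)\setminus E_{\text{rest}}$ is a bridge of $G\setminus E_{\text{rest}}$, so each such edge reduces the size of the spanning forest of $G\cap N$ by one.

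Under deletions, we run \Cref{lem:expander_pruning} on each $G[X_i]$. Pruned edges and edges at vertices isolated by pruning are removed from $N$ and move into $G\setminus N$, and each expander is restricted to its unpruned part. Once the deletion limit is reached, we remove the whole expander from $N$; to spread this cost over time, we gradually add its edges to $H$ as in \Cref{clm:inactive_expander}. We maintain $T_i$ as a spanning tree of the unpruned expander using the dynamic MST/connectivity machinery, for example \Cref{thm:edgeSparsifierMST} combined with the vertex-sparsification layer. With that in place, $\textsc{SpanningForest}(G\cap N)$ is maintained by the same dynamic MSF routine. Each change to $N$ is one edge moving between $G\setminus N$ and $G\cap N$, so $H$ receives only $\tilde{O}(1)$ insertions per update. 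Since $N$ only shrinks, the required monotonicity holds.

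The main obstacle is bounding, and efficiently finding, the tree edges whose status changes after a pruning step or a tree swap in $T_i$. Monotonicity of $N$ also requires that no edge ever becomes a non-bridge again after leaving $N$, so we remove an edge from $N$ as soon as its certificate is lost. Our plan has three parts.
\begin{itemize}
\item Volume-covered status can only be lost at edges $e$ where a side of $T_i\setminus e$ drops below volume $2/\phi'$, with $\phi'=\phi/\gamma$. We locate these by keeping subtree volumes in a top-tree and searching for edges near the threshold along the $\tilde{O}(1)$ affected paths.
\item Cut-covered edges have small sides of volume at most $2/\phi'$. Each edge $f$ is therefore embedded into at most $O(1/\phi')$ tree edges per endpoint, namely those whose small side contains that endpoint. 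A deletion or swap thus affects only $\tilde{O}(1)$ embeddings, and we recompute them locally by exploring the small sides.
\item A tree swap $e\to e'$ changes fundamental cuts only along the cycle of $e'$. We argue that edges on this cycle whose sides are both large stay volume-covered. Only the $\tilde{O}(1/\phi')$ edges within small-volume distance of the endpoints need re-evaluation, and each of those is re-evaluated in $\tilde{O}(1)$ time.
\end{itemize}
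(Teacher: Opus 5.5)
\paragraph{Where the proposal fails.} The step that fails is your degree-$1$/degree-$2$ preprocessing, specifically the claim that it can be carried along with $O(1)$ recourse. Your edges of a contracted path $P$ inherit the status of its representative. So whenever that representative is in $N$ for the reduced graph, all $\ell_P$ edges of $P$ are in $N$. Deleting a single edge of $P$ turns the other $\ell_P-1$ edges into bridges of $G$ (they become dangling chains), and they must all leave $N$ in that one update. Since $\ell_P$ can be $\Theta(n)$, this violates the requirement in property~3 that only $\tilde{O}(1)$ edges leave $N$ per deletion. Re-contracting vertices that later drop to degree $2$ would also insert new edges into your decremental pruning structures, which they cannot handle.

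\paragraph{The fix.} Drop the preprocessing entirely; nothing in the theorem needs it. Property~2 only asks for the additive term $m/16$, and $\phi = 1/(16q)$ already gives $|E(\mathcal{X})| \le q\phi m = m/16$. The reduction to the regime $k=\Omega(n)$ happens outside the sparsifier, in the Steiner-forest contraction of \Cref{sec:mainSec}. There the path correspondence is kept implicit, so coverage of a contracted path costs a single link-cut path update on $F_\Lambda$.

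\paragraph{The rest matches the paper.} With that step removed, your plan is the paper's proof:
\begin{itemize}
\item an expander decomposition;
\item a maintained minimum spanning forest of each pruned expander $\bar{G}_X$, with the volume-/cut-covered classification at threshold $2\gamma/\phi$;
\item $H=(G\setminus N)\cup\msf(G\cap N)$ with the bridge-counting size bound;
\item a locality principle: only tree edges whose small side contains an endpoint of a deleted, pruned, or swapped edge can change status.
\end{itemize}
The paper makes this locality precise via the centroid subtrees $T_q$ in \Cref{clm:robustToMeasureChange} and \Cref{clm:robustToForestChange}. Your bullets need the equivalent fact, which you only gesture at with \emph{small-volume distance}. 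In a tree component of volume at least $4\gamma/\phi$, two small sides sharing a vertex cannot together cover the component, so they are nested. Hence the small sides through a fixed vertex form a chain of at most $2\gamma/\phi$ sets. This is what bounds every count and search in your three bullets by $\tilde{O}(1)$.

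Your gradual draining of an expander's edges out of $N$ before its pruning limit is reached is a sensible addition. \Cref{sec:2edgeconn} does not discuss that case explicitly.
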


\subsection{The Algorithm}
\label{subsubsec:decremental_expander_algorithm}

\paragraph{Decomposing the Graph into Expanders.}
We use the algorithm from  \Cref{lem:expander_decomp} to compute a $\phi$-expander decomposition $\mathcal{X}$ of $G$ for $\phi = 1/(16q) = \tilde{\Omega}(1)$, which ensures that $\partial_G \mathcal{X}$ contains at most $m/16$ edges of $G$ between clusters. 

We henceforth refer to each expander $X \in \mathcal{X}$ as \emph{large} if $G^{(0)}[X]$ has volume at least $4\gamma/\phi$, and otherwise we say it is \emph{small}. We let $\mathcal{X}^\text{large}$ denote the set of large expanders and $\mathcal{X}^\text{small}$ the set of small expanders. For each $X \in \mathcal{X}^\text{large}$, we initialize a data structure $\mathcal{P}_X$ from \Cref{lem:expander_pruning} on $G[X]$ and let $P_X^{(t)}$ be the set of all edges returned by the data structure after processing the first $t$ deletions to $G$. Each deletion in $G$ only affects at most one induced graph $G[X]$, so $P_Y$ is not affected for all $Y \neq X$ at that time step. 

For $X \in \mathcal{X}^\text{large}$, we define $\bar{G}_X^{(t)} = G^{(t)}[X] \setminus P_X^{(t)}$. Recall that $\bar{G}_X^{(t)}$ remains a $\phi/\gamma$-expander ($\gamma = \tilde{O}(1)$, see \Cref{lem:expander_pruning}). For small expanders $X \in \mathcal{X}^\text{small}$, we simply set $\bar{G}_X^{(t)} = G^{(t)}[X]$. 

For each expander $X \in \mathcal{X}$, we maintain a local maximal spanning tree $\bar{F}_X^{(t)}$ of $\bar{G}_X^{(t)}$. To uniquely define $\bar{F}_X^{(t)}$, we assign edges $e \in \bar{G}_X$ unique weights at initialization, and maintain $\bar{F}_X^{(t)}$ to be $\msf(\bar{G}_X^{(t)})$. Note that an edge in $\bar{F}_X$ is only removed once the edge is deleted by $\bar{G}_X^{(t)}$. Any edge that becomes a bridge, i.e., not $2$-edge connected, will be part of $\bar{F}_X^{(t)}$ at this stage, because it is the only edge connecting the connected components of its endpoints.

\paragraph{Non-Bridge Sets via Covers.}

The main part of our algorithm is to maintain the set $N$ of edges whose endpoints are $2$-edge connected in their expander. More precisely, for each expander $X \in \mathcal{X}$ we let $N_X^{(t)}$ be the set of edges in $\bar{G}^{(t)}_X$ that are $2$-edge connected/non-bridges in $\bar{G}^{(t)}_X$. Finally, we let  $N = \bigcup_{X \in \mathcal{X}} N_X$. Note that once an edge is no longer $2$-edge connected in its expander, it remains a bridge for the rest of the algorithm because the graph is decremental. 

For small expanders $X \in \mathcal{X}^\text{small}$, after any edge deletion to $G[X]$, we can simply update the set $N_X$ to consist of all edges in $\bar{G}_X^{(t)}$ whose endpoints are $2$-edge connected in $\bar{G}_X^{(t)}$. 

Let us next discuss how we maintain the set of $2$-edge connected edges/non-bridges $N_X$ for large expanders $X \in \mathcal{X}^\text{large}$. We need the following definitions.

\begin{definition}[Covered]
    Given a graph $G'$ and a forest $F'$ of $G'$. For a vertex $x$ (and an edge $f$ in $G'$), we denote by $C_{F', x, f}$ the connected component of $F' \setminus f$ that contains $x$. We say an edge $f \in F'$ is
    \begin{itemize}
        \item \emph{volume-covered} if for its endpoints $x,y$, we have $\vol_{G'}(C_{F', x, f}), \vol_{G'}(C_{F', y, f}) \geq 2\gamma/\phi$, and
        \item \emph{cut-covered} if for at least one endpoint $y$ of $f$, we have $\vol_{G'}(C_{F', y, f}) < 2\gamma/\phi$ \textbf{and} there is an edge $f' \in \partial_{G'} C_{F', y, f}$ with $f' \neq f$. We call such an edge $f'$ a \emph{covering edge} for $f$.
    \end{itemize}
    See \Cref{fig:covered} for an illustration. For any stage $t$, we say an edge $f \in G'$ is \emph{covered} if $f \not\in F'$ \textbf{or} $f \in F'$ and $f$ is volume- or cut-covered.
\end{definition}

Henceforth, we let $N_X^{(t)}$ be the set of all edges $f \in \bar{G}_X^{(t)}$ that are covered in $\bar{G}_X^{(t)}$ with forest $\bar{F}^{(t)}_X$. We defer implementation details on how to maintain the covering information until later.

\paragraph{The Sparsifier and Non-bridge Set.}

We output the set $N = \bigcup_{X \in \mathcal{X}} N_X$ to be the set of edges that are non-bridges. Finally, we maintain $H^{(t)} = (G^{(t)} \setminus N^{(t)}) \cup \msf(G^{(t)} \cap N^{(t)})$, where we again use the identifiers of edges as edge weights to uniquely determine the minimum spanning forest. Note that every edge $f$ in some local spanning forest $\bar{F}_X^{(t)}$ and all pruned sets $P_X^{(t)}$ are in $H^{(t)}$.

\subsection{Analysis}

For expander $X \in \mathcal{X}$, stage $t$, vertex $x \in X$, and edge $f$ in $\bar{G}^{(t)}_X$, we denote the connected component $C_{\bar{F}_X^{(t)}, x, f}$ by $C^{(t)}_{x,f}$. We further write $C^{(t)}_x$ to mean the connected component containing $x$ in $\bar{F}_X^{(t)}$.

\subsubsection{Correctness}

We first prove the following fact about expansion.

\begin{fact}\label{prop:expander_center}
    For any $\phi$-expander graph $A$ and disjoint sets $S, T \subseteq V(A)$ with 
    \[
        \vol_A(S), \vol_A(T) \geq \frac{2}{\phi},
    \]
    $S$ and $T$ are $2$-edge connected in $A$, or put differently, there are at least two edge-disjoint paths starting in $S$ and ending in $T$.
    \begin{proof}
    Let $(S^*, T^* \coloneqq V(A) \setminus S^*)$ be any cut that separates $S$ and $T$, i.e., $S \subseteq S^*, T \subseteq T^*$. If $\vol_A(S^*) \leq \vol_A(T^*)$, then
    \[
        |E_A(S^*, T^*)| \geq \phi \cdot \vol_A(S^*) \geq \phi \cdot \vol_A(S) \geq 2.
    \]
    The case where $\vol_A(S^*) > \vol_A(T^*)$ follows analogously. This proves that every $S$-$T$ cut is of size at least $2$, and therefore, by the max-flow min-cut theorem, there are at least two disjoint paths from $S$ to $T$, as desired.
    \end{proof}
\end{fact}

\Cref{prop:expander_center} can be used to establish that endpoints of edges in $N_X^{(t)}$ are $2$-edge connected. We now prove our key invariant.

\begin{restatable}{lemma}{invBridgeIff}\label{inv:nonBridgeIffCovered}
    For any stage $t$, expander $X \in \mathcal{X}$ and an edge $f \in \bar{G}_X^{(t)}$, $f$ is a non-bridge in $\bar{G}_X^{(t)}$ if and only if $f \in N_X^{(t)}$.
\end{restatable}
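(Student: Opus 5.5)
The plan is to separate small and large expanders, and for large ones to prove the two implications between ``covered'' and ``non-bridge'' directly. For $X \in \mathcal{X}^\text{small}$ there is nothing to prove, because $N_X^{(t)}$ is defined as exactly the set of non-bridges of $\bar{G}_X^{(t)}$. So fix $X \in \mathcal{X}^\text{large}$ and write $F = \bar{F}_X^{(t)}$ and $G' = \bar{G}_X^{(t)}$. By \Cref{lem:expander_pruning}, $G'$ is a $\phi/\gamma$-expander. The volume threshold $2\gamma/\phi$ in the definition of covered is therefore exactly $2/(\phi/\gamma)$, so \Cref{prop:expander_center} applies to $G'$ with expansion parameter $\phi/\gamma$. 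We must show that an edge $f = xy \in G'$ is a non-bridge of $G'$ if and only if it is covered with respect to $F$. Throughout we use that $F = \msf(G')$ is a maximal spanning forest of $G'$.

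\textbf{Covered implies non-bridge.} We split into the three ways $f$ can be covered.
\begin{itemize}
\item If $f \notin F$, then $F$ contains an $x$-$y$ path. Together with $f$ it forms a cycle, so $f$ is a non-bridge.
\item If $f \in F$ is volume-covered, then $C_{F,x,f}$ and $C_{F,y,f}$ are disjoint and both have volume at least $2\gamma/\phi$ in $G'$. \Cref{prop:expander_center} gives two edge-disjoint paths from $C_{F,x,f}$ to $C_{F,y,f}$ in $G'$, and at least one of them, say $P$, avoids $f$. Concatenate $P$ with tree paths inside $C_{F,x,f}$ and inside $C_{F,y,f}$; these avoid $f$ since $f$ has been removed from $F$. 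The result is an $x$-$y$ walk in $G' \setminus f$.
\item If $f$ is cut-covered, say via endpoint $y$ and covering edge $f' = uz \neq f$ with $u \in C_{F,y,f}$ and $z \notin C_{F,y,f}$, then $z$ lies in the same connected component of $G'$ as $y$, because $f'$ is an edge of $G'$. Since $F$ spans that component, $z$ lies in the tree of $F$ containing $f$. Removing $f$ splits this tree into $C_{F,y,f}$ and $C_{F,x,f}$, so $z \in C_{F,x,f}$. The walk $y \leadsto u$ (inside $C_{F,y,f}$), then $f'$, then $z \leadsto x$ (inside $C_{F,x,f}$) avoids $f$.
\end{itemize}
In each case $f$ lies on a cycle, so it is a non-bridge.

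\textbf{Non-bridge implies covered.} Suppose $f$ is a non-bridge of $G'$. If $f \notin F$ it is covered by definition, and if it is volume-covered we are done. Otherwise $f \in F$ and some endpoint, say $y$, has $\vol_{G'}(C_{F,y,f}) < 2\gamma/\phi$. Since $f$ is a non-bridge, there is a $y$-$x$ path in $G' \setminus f$. This path starts in $C_{F,y,f}$ and ends at $x \notin C_{F,y,f}$, so it uses some edge $f' \in \partial_{G'} C_{F,y,f}$ with $f' \neq f$. Hence $f$ is cut-covered.

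\textbf{Main obstacle.} The only step requiring care is the volume-covered case. There we must check that \Cref{prop:expander_center} is applied with the pruned expansion $\phi/\gamma$ rather than $\phi$, which is what makes the threshold $2\gamma/\phi$ the right one. We also need that $G'$ may contain isolated vertices without affecting the argument: volumes are unchanged, and the preliminaries note that adding singletons to an expander preserves expansion. Everything else is elementary tree-path reasoning.
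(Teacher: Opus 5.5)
Your proposal is correct and follows the paper's proof essentially step for step. The steps are: small expanders hold by construction; off-forest, volume-covered (via \Cref{prop:expander_center}) and cut-covered edges are shown to be non-bridges; and for the converse, a forest edge with a light side that is not cut-covered must be the only edge leaving that side. The paper phrases that last step as a contrapositive where you argue it directly. Your explicit remark that \Cref{prop:expander_center} must be applied with expansion $\phi/\gamma$, which is what matches the threshold $2\gamma/\phi$, is a detail the paper leaves implicit.
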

\begin{proof}
For $X \in \mathcal{X}^\text{small}$, the claim is true by construction, so assume $X \in \mathcal{X}^\text{large}$ below. 

For $f = xy \in \bar{G}^{(t)}_X \setminus \bar{F}^{(t)}_X$, since $f$ was not added to $\bar{F}^{(t)}_X$, $x$ and $y$ are spanned by $\bar{F}^{(t)}_X$. Thus, we can travel from $x$ to $y$ either along their unique path in $\bar{F}^{(t)}_X$, or along $f$. Thus, $x$ and $y$ are $2$-edge connected. Each such edge is added to $N_X^{(t)}$.

For $f \in \bar{F}^{(t)}_X$, we have $f \in N_X^{(t)}$ if either
\begin{itemize}
    \item \underline{$f$ is \emph{volume-covered}:} Since, by definition, $\vol_{\bar{G}^{(t)}_X}\left(C_{y,f}^{(t)}\right), \vol_{\bar{G}^{(t)}_X}\left(C_{x,f}^{(t)}\right) \geq 2\gamma/\phi$, we have by \Cref{prop:expander_center} that there are at least $2$ edge-disjoint paths in $\bar{G}_X^{(t)}$ between $C_{y,f}^{(t)}$ and $C_{x,f}^{(t)}$. Thus, there is at least one such path $P$ that does not contain $f$. But since both $C_{y,f}^{(t)}$ and $C_{x,f}^{(t)}$ are spanned by $\bar{F}_X^{(t)} \setminus f$, there is an $x$-to-$y$ path in $\left(\bar{F}_X^{(t)} \setminus f\right) \cup P$, so $x$ and $y$ are $2$-edge connected.

    \item \underline{$f$ is \emph{cut-covered}:} Then there is an edge $f' \in \partial_{\bar{G}^{(t)}_X} C^{(t)}_{x,f}$ for $x$ being an endpoint of $f$ with $f' \neq f$. But since $\bar{F}^{(t)}_X$ is a maximal spanning forest and $f'$ is not in $\bar{F}_X^{(t)}$, we have that $\left(\bar{F}^{(t)}_X \setminus f\right) \cup f'$ spans the same connected components and thus contains a path from $x$ to $y$. Thus, again, $x$ and $y$ are $2$-edge connected.
\end{itemize}

Finally, consider that $f \in \bar{F}_X^{(t)}$ is not covered and therefore not in $N^{(t)}_X$. Then $f$ is not volume-covered and at least one endpoint $x$ of $f$ has $\vol_{\bar{G}^{(t)}_X}(C_{x,f}^{(t)}) < 2\gamma/\phi$. But since $f$ is also not cut-covered, it must have $\partial_{\bar{G}^{(t)}_X} C^{(t)}_{x,f} = \{f\}$, i.e., $f$ is the only edge in the cut, implying that $f$ is a bridge.
\end{proof}

Since $N^{(t)} = \bigcup_{X \in \mathcal{X}} N_X^{(t)}$ and $\bar{G}_X^{(t)} \subseteq G^{(t)}$ for every $X \in \mathcal{X}$, the following corollary is immediate.

\begin{corollary}
For any time $t$ and edge $f \in N^{(t)}$, the endpoints of $f$ are $2$-edge connected in $G^{(t)}$.
\end{corollary}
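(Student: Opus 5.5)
The corollary follows directly from \Cref{inv:nonBridgeIffCovered} together with the fact that $2$-edge connectivity is monotone under taking supergraphs.

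First, fix a stage $t$ and an edge $f = xy \in N^{(t)}$. Since $N^{(t)} = \bigcup_{X \in \mathcal{X}} N_X^{(t)}$, there is some expander $X \in \mathcal{X}$ with $f \in N_X^{(t)}$. Recall that $N_X^{(t)}$ is by definition a subset of the edges of $\bar{G}_X^{(t)}$.

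Second, apply the ``if'' direction of \Cref{inv:nonBridgeIffCovered} to $f$: it says $f$ is a non-bridge in $\bar{G}_X^{(t)}$. Being a non-bridge means the endpoints of $f$ remain connected after $f$ is removed. Concretely, there is an $x$-to-$y$ path $Q$ in $\bar{G}_X^{(t)} \setminus f$, so the single edge $f$ and the path $Q$ form two edge-disjoint $x$-$y$ paths in $\bar{G}_X^{(t)}$.

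Third, check that $\bar{G}_X^{(t)} \subseteq G^{(t)}$. For small $X$ we have $\bar{G}_X^{(t)} = G^{(t)}[X]$, and for large $X$ we have $\bar{G}_X^{(t)} = G^{(t)}[X] \setminus P_X^{(t)}$. In both cases $\bar{G}_X^{(t)}$ is a subgraph of $G^{(t)}$. Hence the two edge-disjoint paths above also lie in $G^{(t)}$, and $x,y$ are $2$-edge connected in $G^{(t)}$.

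There is no real obstacle. All the substantive work, namely the volume-covered case via \Cref{prop:expander_center}, is contained in the invariant. The only care needed is that $P_X^{(t)}$ removes edges but never adds any, so $\bar{G}_X^{(t)}$ genuinely sits inside the current graph $G^{(t)}$, not $G^{(0)}$.
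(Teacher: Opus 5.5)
Your proposal is correct and is exactly the paper's argument: the paper also derives the corollary immediately from \Cref{inv:nonBridgeIffCovered}, the identity $N^{(t)} = \bigcup_{X \in \mathcal{X}} N_X^{(t)}$, and the containment $\bar{G}_X^{(t)} \subseteq G^{(t)}$. Your explicit check that pruning only removes edges, so that $\bar{G}_X^{(t)}$ sits inside the current graph $G^{(t)}$, is the one detail the paper leaves implicit.
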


It remains to establish that it can be recovered from $H^{(t)}$ whether an edge $e \in G^{(t)} \setminus N^{(t)}$ is a bridge or $2$-edge connected. 

\begin{lemma}
If $e \in G^{(t)} \setminus N^{(t)}$, then the endpoints of $e$ are $2$-edge connected in $G^{(t)}$ if and only if they are $2$-edge connected in $H^{(t)}$.
\end{lemma}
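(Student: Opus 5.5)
The plan is to mirror the argument sketched in the overview. Fix a stage $t$ and an edge $e = xy \in G^{(t)} \setminus N^{(t)}$. Two-edge connectivity of $x$ and $y$ in a graph $K$ containing $e$ is equivalent to $x$ and $y$ being connected in $K \setminus e$. So it suffices to show two things: $e \in H^{(t)}$, and the graphs $H^{(t)} \setminus e$ and $G^{(t)} \setminus e$ have the same connected components.

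For the first point, recall $H^{(t)} = (G^{(t)} \setminus N^{(t)}) \cup \msf(G^{(t)} \cap N^{(t)})$. Since $e \notin N^{(t)}$, the edge $e$ lies in the first part, so $e \in H^{(t)}$.

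For the second point, decompose
\[
G^{(t)} \setminus e = \bigl((G^{(t)} \setminus N^{(t)}) \setminus e\bigr) \cup (G^{(t)} \cap N^{(t)})
\]
and
\[
H^{(t)} \setminus e = \bigl((G^{(t)} \setminus N^{(t)}) \setminus e\bigr) \cup \msf(G^{(t)} \cap N^{(t)}).
\]
I will use the following elementary fact. If $A$ and $B$ are graphs on the same vertex set with the same connected components, then $A \cup D$ and $B \cup D$ have the same connected components for any edge set $D$. The proof is short: every edge of $A$ joins two vertices that are connected in $B$, and vice versa, so each path in one graph can be rerouted in the other.

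Apply this fact with $A = G^{(t)} \cap N^{(t)}$, $B = \msf(G^{(t)} \cap N^{(t)})$ and $D = (G^{(t)} \setminus N^{(t)}) \setminus e$. A minimum spanning forest is a maximal spanning forest, so it spans exactly the same components as $A$. The fact then gives that $H^{(t)} \setminus e$ and $G^{(t)} \setminus e$ have the same connected components. Hence $x$ and $y$ are connected in $G^{(t)} \setminus e$ if and only if they are connected in $H^{(t)} \setminus e$, which is the claim.

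There is no real obstacle here. The only point needing care is that $e$ itself must not be in $N^{(t)}$; otherwise $\msf(G^{(t)} \cap N^{(t)})$ might contain $e$, and the decomposition above would not isolate it. This is exactly where the hypothesis $e \notin N^{(t)}$ is used. Note that, unlike the corollary before it, this argument does not need \Cref{inv:nonBridgeIffCovered}.
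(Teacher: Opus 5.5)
Your proof is correct and follows essentially the same route as the paper. Both arguments reduce $2$-edge connectivity of the endpoints of $e$ to connectivity after removing $e$, then use that $\msf(G^{(t)} \cap N^{(t)})$ spans the same components as $G^{(t)} \cap N^{(t)}$, so that adding $(G^{(t)} \setminus N^{(t)}) \setminus e$ to both gives $H^{(t)} \setminus e$ and $G^{(t)} \setminus e$ the same components. You are slightly more careful in two places: you state the union fact for an arbitrary edge set $D$ (the paper states it for a single edge but applies it to a set), and you explicitly check that $e \in H^{(t)}$.
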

\begin{proof}
We first recall the following standard graph-theoretic fact.

\begin{fact}
For graphs $A$ and $B \subseteq A$, suppose the connected components of $A$ and $B$ coincide. Then, for any edge $e$, the connected components of graphs $A \cup e$ and $B \cup e$ coincide. 
\end{fact}

Now, since $\msf(G^{(t)} \cap N^{(t)})$ spans the connected components of $G^{(t)} \cap N^{(t)}$ (with minimum weight edges), we get that $H^{(t)} \setminus e = \msf(G^{(t)} \cap N^{(t)}) \cup (G^{(t)} \setminus (N^{(t)} \cup e))$ spans the connected components of 
$(G^{(t)} \cap N^{(t)}) \cup (G^{(t)} \setminus (N^{(t)} \cup e)) = G^{(t)} \setminus e$. For $e = uv$, $u$ and $v$ are connected in $G^{(t)} \setminus e$ if and only if they are connected in $H^{(t)} \setminus e$. The proof follows.
\end{proof}

\subsubsection{Recourse Analysis}
\label{subsubsec:recourse}

It remains to argue about the recourse. To enable a rigorous analysis, we define and prove various properties of forests undergoing changes. Here, we talk about a general vertex measure $\mu$, but the reader can imagine this measure referring to the volume of the underlying graph $G$.  

\begin{definition}[Centroid]
    Assume we have a forest $F' = (V', E')$ and a measure $\mu \colon V' \mapsto \mathbb{N}_{\geq 0}$. Let $C_{F', x}$ be the connected component in $F'$ containing vertex $x$. Then, we let $c_{F', \mu}(C_{F',x})$ denote the centroid of $C = C_{F',x}$, that is, the vertex with the largest identifier such that each connected component of $F'[C] \setminus \{c\}$ contains at most $\mu(C) / 2$ mass from measure $\mu$. 

    We use the shorthand $c_{F', \mu}(x) = c_{F', \mu}(C_{F',x})$, for convenience. For $x \neq c_{F', \mu}(x)$, we denote by $T_{F', \mu, x}$ the subtree containing $x$ obtained from removing the first edge on the $x$-to-$c_{F', \mu}(x)$ path from $F'[C_{F', x}]$. Note that $\mu(T_{F', \mu, x}) \leq \mu(C_{F', x}) / 2$ by definition for each such $x \neq c_{F', \mu}(x)$. When $F'$ and $\mu$ are clear from context, we often omit them in the subscript.
\end{definition}

We use the following two properties of centroids, whose proofs can be found in \Cref{app:centroid}.

\begin{restatable}[Small Subtrees are Robust to Measure Changes]{claim}{robustToMeasureChange}
    \label{clm:robustToMeasureChange}
    Suppose we have a forest $F'$ and a measure $\mu$ such that $\mu(v) \geq \deg_{F'}(v)$. Let $\mu'$ be a measure obtained by increasing or decreasing a single coordinate of $\mu$ by $1$. Then, the set $Q$ of all vertices $q$ in $V'$ where $\mu'(T_{F', \mu', q}) \neq \mu(T_{F', \mu, q})$ and $\mu'(T_{F', \mu', q}) < 2\gamma/\phi$ is of size $\tilde{O}(1)$.
\end{restatable}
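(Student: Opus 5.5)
Write $K := 2\gamma/\phi = \tilde{O}(1)$. The plan is to split the vertices $q$ with $\mu'(T_{\mu',q}) < K$ into two kinds, according to whether the measure change alters the subtree itself or only its mass, and count each kind.

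Let $v$ be the vertex whose measure changed, and let $C$ be its component in $F'$. Components other than $C$ are untouched: their centroids, subtrees $T_q$ and masses are identical under $\mu$ and $\mu'$. So every $q \in Q$ lies in $C$.

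\textbf{Step 1 (same subtree, mass changes).} Suppose $T_{\mu',q} = T_{\mu,q}$ as vertex sets. Then $\mu'(T_{\mu',q}) \neq \mu(T_{\mu,q})$ forces $v \in T_{\mu,q}$, so $q \in T_{\mu',q}$ and $v \in T_{\mu',q}$. This subtree has $\mu'$-mass below $K$. Since $\mu'(u) \ge \deg_{F'}(u) - 1$, all but one of its vertices have positive mass, so it has at most about $K+1$ vertices. Hence $q$ lies within tree distance about $K$ of $v$. Because total degree inside the subtree is also bounded by its mass plus one, I plan to bound the number of such $q$ by the number of vertices in connected sets of mass $< K$ containing $v$. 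The union of all subtrees $T_{\mu',q}$ that contain $v$ is then bounded: these subtrees hang off the path from $v$ to the new centroid and are nested along it. Since they are nested, their union is the largest one, which has fewer than $K+1$ vertices. This gives $O(K)$ such $q$.

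\textbf{Step 2 (the subtree itself changes).} This happens only if the centroid of $C$ moves. A change of $\pm1$ in total mass moves the centroid only to a neighbour, or within a path of vertices. For a centroid $c$, every branch at $c$ has mass at most $\mu(C)/2$, and only an unbalanced branch can flip. I will show that the old centroid $c$ and the new centroid $c'$ are adjacent, or at least that the branch structure changes only along the $c$-to-$c'$ path. The identifier tie-break matters here; if the centroids can be far apart, the path between them consists of branches of essentially zero mass.

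For $q$ whose first edge towards the centroid differs between $c$ and $c'$, $q$ lies in the part of the tree between the two centroids or in the swapped branch. The subtrees $T_{\mu',q}$ in question either contain $c$ or are branches hanging off the $c$–$c'$ path. A subtree containing $c$ has mass about $\mu(C)/2$, which exceeds $K$ unless $C$ itself is small; if $\mu(C) < 2K$ the whole component has $\tilde{O}(1)$ vertices and we are done. Branches hanging off the path are unchanged subtrees with unchanged mass, so they do not count.

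\textbf{Main obstacle.} The delicate part is the centroid movement in Step 2. Measures may be zero, for example at isolated vertices, and the "largest identifier" tie-break means the centroid could jump along a long path of zero-mass vertices. The hypothesis $\mu(v) \ge \deg_{F'}(v)$ should rule this out: interior path vertices have degree at least $2$, so they carry mass, and mass bounds path length. I would use this to argue that any subtree with mass below $K$ spans $O(K)$ vertices. I would also show that the new centroid lies on the path towards $v$ at a position where the crossing mass changes by $1$, so any subtree that changes and is small has at most $K$ vertices near that path. Summing the two cases gives $|Q| = O(K) = \tilde{O}(1)$.
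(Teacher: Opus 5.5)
Your proposal is correct and is essentially the paper's argument. In a large component, any $q$ whose subtree changes lies on the $c$-to-$c'$ path, so $T_{\mu',q}$ contains the old centroid $c$ and has $\mu'$-mass at least $\mu(C)/2-1\geq 2\gamma/\phi$; hence every $q\in Q$ keeps its subtree, has $v\in T_{\mu',q}$, lies on the $v$-to-$c'$ path, and is counted by nestedness exactly as in the paper. The centroid-adjacency and tie-breaking concerns in your last paragraph are unnecessary, because this containment argument works no matter how far $c'$ is from $c$.
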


\begin{restatable}[Small Subtrees are Robust to Forest Changes]{claim}{robustToForestChange}
    \label{clm:robustToForestChange}
    Consider any forest $F'$ and measure $\mu$ such that $\mu(v) \geq \deg_{F'}(v)$. Consider the update of inserting or deleting an edge $f$ into/from $F'$. Let $F''$ be the resulting forest. Then, the set $Q$ of all vertices $q$ in $V'$ where $\mu(T_{F', \mu, q}) \neq \mu(T_{F'', \mu, q})$ and $\mu(T_{F'', \mu, q}) < 2\gamma/\phi$ is of size $\tilde{O}(1)$.
\end{restatable}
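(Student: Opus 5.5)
The approach is to localize which subtrees $T_{F',q}$ and $T_{F'',q}$ can differ, using the structure of centroids. Inserting or deleting $f=uv$ only affects the components containing $u$ and $v$. In the deletion case, a single component $C$ splits into $C_u$ and $C_v$. In the insertion case, two components merge.

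We treat deletion; insertion is the same argument with the roles of $F'$ and $F''$ swapped, since the claim's condition $\mu(T_{F'',q})<2\gamma/\phi$ is not symmetric and needs separate care. For a vertex $q$ with $T_{F'',q}$ small, we ask when $T_{F'',q}\neq T_{F',q}$ can hold. Let $c$ be the old centroid of $C$ and let $c_u,c_v$ be the new centroids of $C_u,C_v$.

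Suppose $q\in C_u$. Then $T_{F'',q}$ is the component of $q$ after cutting the first edge on the $q$-to-$c_u$ path, and $T_{F',q}$ is the component after cutting the first edge on the $q$-to-$c$ path (in $F'$). These two subtrees coincide whenever two conditions hold. First, the $q$-to-$c_u$ and $q$-to-$c$ paths must start with the same edge. Second, the resulting side of $q$ must not contain $f$, so that $f$'s removal leaves it untouched.

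It follows that $q$ can be bad only in two situations: $q$ lies on, or hangs next to, the path $\pi$ between $c$, $c_u$ and $u$ (so that the first edges differ), or $T_{F'',q}$ contains $u$. I would bound the number of vertices of the second kind directly. If $T_{F'',q}$ contains $u$ and has measure $<2\gamma/\phi$, then, because $\mu(v)\ge\deg_{F'}(v)$, the subtree spans fewer than $2\gamma/\phi$ vertices. Hence $q$ lies within distance $2\gamma/\phi$ of $u$ inside a ball of measure $<2\gamma/\phi$. This gives only $\tilde O(1)$ candidates, since $\gamma/\phi=\tilde O(1)$.

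For the first kind, the vertices $q$ whose first edge toward $c_u$ differs from their first edge toward $c$ are exactly the vertices on the $c$-to-$c_u$ path, excluding the endpoint-related ones. Any such $q$ whose new subtree is small has $T_{F'',q}$ containing everything beyond $q$ in the direction away from $c_u$. Along that path these subtrees are nested and strictly growing in measure, and every vertex carries $\mu\ge\deg\ge 1$. Therefore at most $2\gamma/\phi$ path vertices can have a subtree of measure below $2\gamma/\phi$. Off-path vertices have the same first edge in both cases, so they reduce to the second kind. The same counting applies within $C_v$.

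The main obstacle is the nesting step: showing rigorously that subtrees $T_{F'',q}$ for $q$ along the $c$–$c_u$ path are monotone in measure, and handling the degenerate cases where $c_u$ coincides with $c$ or $q$ is itself a centroid. I would resolve these by fixing the direction toward $c_u$ and observing that each such subtree properly contains the previous one. I would also note that the degree lower bound gives every vertex positive measure, which forces the measure to grow by at least one per step.
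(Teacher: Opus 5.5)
Your deletion argument is correct, and its counting engine is the paper's. Root $F''[C_u]$ at $c_u$. The subtrees $T_{F'',\mu,q}$ along an ancestor chain are nested and gain at least one unit of $\mu$ per step, because $\mu \geq \deg_{F'} \geq 1$ on the component of $f$. So each chain has at most $2\gamma/\phi$ members with small subtree.

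The decomposition differs, though. The paper uses a single chain. It asserts that every $q \in C_u$ off the $u$-to-$c_u$ path satisfies $T_{F',\mu,q} = T_{F'',\mu,q}$. That assertion tacitly relies on the old centroid $c$ lying on the $u$-to-$c_u$ path or in $C_v$: cutting off the mass of $C_v$ pushes the centroid away from $u$, which follows from the two centroid conditions together with $\mu(C_v) \geq 1$.

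You instead charge each bad $q \in C_u$ to one of two ancestor chains:
\begin{itemize}
\item the chain of $u$, when $u \in T_{F'',\mu,q}$;
\item the chain of $c$ (or of $u$ again if $c \in C_v$), when the first cut edges differ, i.e.\ $q$ lies on the $c$-to-$c_u$ path.
\end{itemize}
So you never need to know where the centroid moves, at the cost of a factor $2$. This also makes insertion uniform. With $c_A, c_B$ the old centroids of the two merged trees, every bad $q$ is an ancestor of $u$, $v$, $c_A$ or $c_B$ in $F''$ rooted at the new centroid. The paper's single-path localization needs an extra case there: if the new centroid lands in the tree of $v$, the vertices strictly between $u$ and $c_A$ change their subtree without lying on the $u$-to-new-centroid path.

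Two spots need repair.

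First, your insertion sentence is wrong as written. Swapping $F'$ and $F''$ turns the insertion into a deletion from $F''$. The deletion argument then bounds the vertices whose subtree is small in $F'$, whereas the claim measures smallness in $F''$. Do not swap; run your decomposition directly on the merged tree as above. Vertices of a non-singleton old tree have $\mu \geq 1$, and a singleton old tree contributes one vertex.

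Second, in your second kind, ``within distance $2\gamma/\phi$ of $u$'' does not bound the number of candidates, since many vertices can be close to $u$. What does bound it is that, for $q \neq c_u$, $u \in T_{F'',\mu,q}$ holds if and only if $q$ lies on the $u$-to-$c_u$ path. These subtrees therefore form a chain, and the nesting count of your first kind applies; this is precisely the paper's argument.
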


We can now argue about the number of edges that need to be checked for whether they are covered after each update to $\bar{G}_X$. 

\begin{claim}
    \label{clm:littleVolumeInSmallComponents}
    For any $X \in \mathcal{X}$, any edge deletion in $\bar{G}_X$ has at most $\tilde{O}(1)$ edges changing their covering status. That is, they transition between being covered (volume- or cut-covered) and uncovered, or they enter/leave $\bar{F}_X$.
\end{claim}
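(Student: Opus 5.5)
We fix a large expander $X \in \mathcal{X}^\text{large}$; for small expanders $\bar{G}_X$ has volume less than $4\gamma/\phi = \tilde{O}(1)$, so every edge there trivially changes status at most $\tilde{O}(1)$ times per update. A deletion $e$ in $\bar{G}_X$ consists of the deleted edge together with the pruned edges $\Delta P$, which number at most $\gamma/\phi^2 = \tilde{O}(1)$ by \Cref{lem:expander_pruning}. We process these removals one at a time; since $\tilde{O}(1)$ times $\tilde{O}(1)$ is still $\tilde{O}(1)$, it suffices to bound the changes caused by a single edge removal.

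A single edge removal from $\bar{G}_X$ changes the two quantities that determine covering status. First, it changes the measure $\mu = \vol_{\bar{G}_X}$ by decreasing two coordinates by one each. Second, if the removed edge lies in $\bar{F}_X = \msf(\bar{G}_X)$, it changes the forest by at most one deletion and one insertion (the MSF replacement edge). So $\bar{F}_X$ itself gains or loses at most two edges. We also note that $\mu(v) = \deg_{\bar{G}_X}(v) \geq \deg_{\bar{F}_X}(v)$, so the hypotheses of \Cref{clm:robustToMeasureChange} and \Cref{clm:robustToForestChange} hold. We decompose the removal into at most four elementary steps: two unit measure decrements, one forest deletion, and one forest insertion. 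After each step we apply the corresponding claim. This gives a set $Q$ of $\tilde{O}(1)$ vertices $q$ whose centroid subtree $T_q$ changes while having (new) measure below $2\gamma/\phi$. We apply the claims symmetrically, swapping old and new, to also catch subtrees that were small before the step.

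Next we relate covering status to centroid subtrees. For a forest edge $f = xy$, the side of $f$ not containing the centroid is exactly $T_x$ (or $T_y$) for the endpoint farther from the centroid. So $f$ is volume-covered iff that $T$ has measure at least $2\gamma/\phi$. If instead this subtree is small, then its cut-covered status is determined by $\partial T$ within a set of volume below $2\gamma/\phi$. Now suppose $f$'s status changes. Then one of two things holds. Either its small-side subtree $T_q$ changed and is small before or after, so $q \in Q$. Or $T_q$ is unchanged and small, but the removed edge lies in $\partial T_q$ or inside it, so $T_q$ contains an endpoint of the removed edge.

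We bound the second case next. The subtrees $T_q$ with $\mu(T_q) < 2\gamma/\phi$ containing a fixed vertex $u$ are nested along the path from $u$ to the centroid. Each such subtree contains the next vertex on this path, and the path has at most $2\gamma/\phi$ vertices (each has $\mu \geq 1$, ignoring isolated vertices, whose incident edges are absent). So there are $\tilde{O}(1)$ such edges per endpoint. Subtlety: one edge $f$ might have small subtrees on both sides when the centroid is an endpoint. We handle this by treating both endpoints as $q$ candidates, which costs only a factor of two.

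The main obstacle is that the relevant side of $f$ must coincide with $T_q$ consistently before and after the update, even when the centroid moves. I would first try to absorb centroid movement into $Q$ directly, since \Cref{clm:robustToMeasureChange} and \Cref{clm:robustToForestChange} are phrased in terms of $T_{F',\mu,q}$ and therefore already account for centroid shifts. Any edge whose small side flips from one endpoint's subtree to the other's then involves a $T_q$ that changed while small. Summing over steps, $\tilde{O}(1)$ edges change status.
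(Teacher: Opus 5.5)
Your proposal is correct and follows essentially the paper's route. You emulate the deletion by at most one forest deletion, one forest insertion, and two unit volume decrements, apply \Cref{clm:robustToMeasureChange} and \Cref{clm:robustToForestChange} in both directions to obtain the $\tilde{O}(1)$-size set $Q$, and charge each forest edge whose status changes to a vertex of $Q$ or to a short prefix of the path from an endpoint of the deleted edge toward the centroid. Charging $f$ to its endpoint farther from the centroid is slightly tidier than the paper's degree bound on vertices of $Q$ and its separate handling of centroid-incident edges and small components, since that endpoint's $T_q$ is the lighter side of $f$ and the two sides share the same cut in $\bar{G}_X$; it is not a different argument, though.
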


\begin{proof}
    Let $e = uv$ be the edge deleted from $\bar{G}_X$ and $\bar{G}_X' = \bar{G}_X \setminus \{e\}$. Let $\bar{F}'_X$ denote the updated minimum spanning forest in $\bar{G}_X'$, which is obtained from $\bar{F}_X$ by removing $e$ (if $e \in \bar{F}_X$) and possibly inserting a replacement edge $e'$. We henceforth use $c(x)$ as shorthand for $c_{\bar{F}_X, \vol_{\bar{G}_X}}(x)$ and $c'(x)$ for $c_{\bar{F}'_X, \vol_{\bar{G}'_X}}(x)$. Similarly, we use $T_x$ as shorthand for $T_{\bar{F}_X, \vol_{\bar{G}_X}, x}$ and define $T'_x$ with respect to $\bar{F}_X'$ and $\bar{G}_X'$.

    Deleting $e$ results in it being removed and possibly replaced by an edge $e'$ in $\bar{F}_X$. We can emulate this process by at most one deletion from $\bar{F}_X$, one insertion into $\bar{F}_X$, and two decreases in the volume measure $\vol_{\bar{G}_X}$ at the endpoints. Thus, we have from \Cref{clm:robustToMeasureChange} and \Cref{clm:robustToForestChange}, that the number of vertices $q \in Q$ with $T'_q \neq T_q$\footnote{The monotonicity of single updates ensures that $T'_{q} \neq T_q$ results in $\vol_{\bar{G}'_X}(T'_q) \neq \vol_{\bar{G}_X}(T_{q})$, and so this is implied in \Cref{clm:robustToMeasureChange} and \Cref{clm:robustToForestChange}, but here we need to be slightly more careful.} or $\vol_{\bar{G}'_X}(T'_{q}) \neq \vol_{\bar{G}_X}(T_{q})$ \textbf{and} $\vol_{\bar{G}'_X}(T'_q) < 2\gamma/\phi$ is of size $\tilde{O}(1)$.

    We argue by cases:
    \begin{itemize}
        \item \underline{Edge $e$ and replacement edge $e'$:} we count them as edges that change their status; however, they contribute $O(1)$ edges. All edges $f$ considered below are taken to be $f \neq e, e'$.
        \item \underline{Change in volume-covered status:} Let an edge $f$ be volume-covered in $\bar{G}_X$ but not in $\bar{G}_X'$. Then, at least one endpoint $q$ has
        \[
            \vol_{\bar{G}_X'}\left(C_{\bar{F}_X', q, f}\right) < \frac{2\gamma}{\phi} \leq \vol_{\bar{G}_X}\left(C_{\bar{F}_X, q, f}\right).
        \]
        We may assume that $\vol_{\bar{G}'_X}\left(C_{\bar{F}'_X, q}\right) \geq 4\gamma/\phi$. Otherwise, trivially, there can be at most $\tilde{O}(1)$ edges affected in $C_{\bar{F}_X', q}$. Deleting $e$ only affects the components containing its endpoints, so at most two such components can undergo such changes.
    
        Now, given the assumption, $q$ is either the centroid of the connected component, or $C_{\bar{F}_X', q, f} = T'_q$ (there is only a unique edge that can be removed such that the remaining volume is mostly in the other component, namely, the first edge on the path towards the centroid). Thus, $q \in Q$. But $Q$ is of size $\tilde{O}(1)$ and each vertex $q \in Q$ has $\deg_{\bar{G}'}(q) \leq \vol_{\bar{G}'_X}(T'_q) < 2\gamma/\phi$. Thus, there are at most $\tilde{O}(1) \cdot 2\gamma/\phi = \tilde{O}(1)$ such edges. Edges incident to the centroid only need to be considered if the centroid has degree less than $2\gamma/\phi$ in $\bar{G}'_X$, and thus this contributes at most another $\tilde{O}(1)$ edges.

        For edges $f$ that were previously not volume-covered, but are volume-covered in $\bar{F}'_X$ and $\bar{G}'_X$, an identical argument applies by reversing the updates (transitioning from $\bar{G}'_X$ back to $\bar{G}_X$), yielding at most $\tilde{O}(1)$ edges.

        \item \underline{Change in cut-covered status:} recall that edge $f$ is cut-covered in $\bar{G}_X$ if for at least one endpoint $p$ of $f$, we have $\vol_{\bar{G}_X}\left(C_{\bar{F}_X, p, f}\right) < 2\gamma/\phi$ \textbf{and} there is an edge $f' \in \partial_{\bar{G}_X} C_{F', p, f}$ and $f' \neq f$. It is thus immediate from the arguments from the volume-covered case to see that any such edge $f$ has an endpoint in $Q$ or in one of at most two centroids of degree at most $2\gamma/\phi$, and thus there are at most $\tilde{O}(1)$ such edges. The same goes for edges that are newly cut-covering.
    \end{itemize}
    Since there is at most one edge $e'$ joining $\bar{F}_X$, all other edges in $\bar{G}_X \setminus \bar{F}_X$ are also present in $\bar{G}_X' \setminus \bar{F}_X'$. The claim follows. 
\end{proof}

The following corollary is immediate from \Cref{clm:littleVolumeInSmallComponents} and the fact that a single deletion to $G$ can only affect one of the expanders $G[X]$ for $X \in \mathcal{X}$, which in turn can result in at most $\tilde{O}(1)$ deletions from $\bar{G}_X$ by \Cref{lem:expander_pruning}.

\begin{corollary}
    \label{cor:fewStatusUpdates}
    For any stage $t > 0$, at most $\tilde{O}(1)$ edges have a different covering status at time $t$ than at time $t-1$. That is, they are volume-covered or cut-covered at one time but no longer at the other, or joined $\bar{F}_X$ for some $X \in \mathcal{X}$.
\end{corollary}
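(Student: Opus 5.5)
The plan is to reduce \Cref{cor:fewStatusUpdates} to \Cref{clm:littleVolumeInSmallComponents} by checking that one deletion to $G$ becomes only $\tilde{O}(1)$ elementary deletions to the graphs $\bar{G}_X$, all inside a single expander.

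\emph{Step 1: locate the affected expander.} Consider the deletion at stage $t$ of an edge $e = uv$ from $G$. If $e \in \partial_G \mathcal{X}$, then no graph $G[X]$ is affected. In that case $\bar{G}_X^{(t)} = \bar{G}_X^{(t-1)}$ and $\bar{F}_X^{(t)} = \bar{F}_X^{(t-1)}$ for every $X$. Hence the covering status of every edge other than $e$ itself is unchanged, and at most $O(1)$ edges change status. Otherwise $e$ lies in exactly one $G[X]$, because the clusters of $\mathcal{X}$ are vertex-disjoint. As the paper already notes, $P_Y$, $\bar{G}_Y$, and hence $\bar{F}_Y$ are untouched for every $Y \neq X$.

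\emph{Step 2: decompose the update of $\bar{G}_X$ into single-edge deletions.} For small $X$, $\bar{G}_X = G[X]$ loses exactly the edge $e$. For large $X$, the data structure $\mathcal{P}_X$ of \Cref{lem:expander_pruning} adds at most $\gamma/\phi^2 = \tilde{O}(1)$ edges to $P_X$. So the transition $\bar{G}_X^{(t-1)} \to \bar{G}_X^{(t)}$ removes $e$ (if present) together with the set $\Delta P_X$, which is $1 + \tilde{O}(1)$ edges in total. We process these removals one at a time, in an arbitrary fixed order, producing a chain of intermediate graphs. After each single removal we update the forest to the unique $\msf$ of the current graph. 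This is consistent because $\bar{F}_X$ is defined purely as $\msf(\bar{G}_X)$ under fixed initial weights, so the final forest of the chain is exactly $\bar{F}_X^{(t)}$ regardless of the order.

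\emph{Step 3: apply the claim and sum.} Apply \Cref{clm:littleVolumeInSmallComponents} to each intermediate single deletion; each one changes the status of $\tilde{O}(1)$ edges. An edge whose status differs between times $t-1$ and $t$ must have changed status in at least one intermediate step. Summing over the $\tilde{O}(1)$ steps therefore gives a $\tilde{O}(1)\cdot\tilde{O}(1) = \tilde{O}(1)$ bound. This also covers edges that joined $\bar{F}_X$, since the claim counts forest entries as status changes.

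The main obstacle is justifying the application of \Cref{clm:littleVolumeInSmallComponents} to the intermediate graphs. Its proof relies on \Cref{clm:robustToMeasureChange} and \Cref{clm:robustToForestChange}, which only need $\mu(v) \geq \deg_{F'}(v)$. That inequality holds for $\mu = \vol$ of the current graph and any spanning forest of it, so it holds at every intermediate step. Expansion of the intermediate graphs is not needed by these arguments: it enters only through the correctness in \Cref{inv:nonBridgeIffCovered}, not through the counting. I would verify this dependency explicitly, and then the corollary follows.
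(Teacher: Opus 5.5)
Your proposal is correct and follows the paper's argument. A deletion in $G$ affects at most one $G[X]$, which by \Cref{lem:expander_pruning} amounts to $\tilde{O}(1)$ single-edge deletions from $\bar{G}_X$; \Cref{clm:littleVolumeInSmallComponents} is then applied to each and the bounds summed. Your extra care (boundary edges, order-independence via uniqueness of the $\msf$, and checking that the claim's proof never uses expansion of the intermediate graphs) makes explicit what the paper's one-line ``immediate'' proof leaves implicit.
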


Our central claim for the recourse is now implied, since only edges that change their covering status can become bridges.

\begin{corollary}
    \label{cor:NisStable}
    At any time $t$, at most $\tilde{O}(1)$ edges are removed from $N$.
\end{corollary}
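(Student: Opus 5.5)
The plan is to combine \Cref{inv:nonBridgeIffCovered} with \Cref{cor:fewStatusUpdates}, using that $N = \bigcup_{X \in \mathcal{X}} N_X$. An edge $f$ can leave $N$ between stages $t-1$ and $t$ for only three reasons, and I will bound each in turn: (i) $f$ is deleted from $G$; (ii) $f$ is no longer in $\bar{G}_X$, because it was pruned into $P_X$; (iii) $f$ stays in $\bar{G}_X^{(t)}$ but its membership in $N_X$ changes.

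First fix the update at time $t$. A single deletion to $G$ touches at most one induced graph $G[X]$. Reason (i) therefore contributes one edge. For reason (ii), if $X$ is large, \Cref{lem:expander_pruning} adds at most $\gamma/\phi^2 = \tilde{O}(1)$ edges to $P_X$. If $X$ is small, nothing is pruned. Either way, at most $\tilde{O}(1)$ edges leave $\bar{G}_X$, and hence leave $N_X \subseteq \bar{G}_X$.

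Reason (iii) is the main step. Suppose $f \in \bar{G}_X^{(t-1)} \cap \bar{G}_X^{(t)}$ with $f \in N_X^{(t-1)}$ and $f \notin N_X^{(t)}$. By \Cref{inv:nonBridgeIffCovered}, $f$ was covered at time $t-1$ and is uncovered at time $t$. An uncovered edge must lie in $\bar{F}_X^{(t)}$. So there are two possibilities. Either $f$ was already in $\bar{F}_X^{(t-1)}$ and lost its volume- or cut-covered status, or $f$ joined $\bar{F}_X$ as a replacement edge. Both count as a change of covering status in the sense of \Cref{cor:fewStatusUpdates}, which bounds them by $\tilde{O}(1)$.

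For small expanders, $N_X$ is recomputed directly. Here I would argue that all affected edges lie in $G[X]$, whose volume is below $4\gamma/\phi = \tilde{O}(1)$, so at most $\tilde{O}(1)$ edges change. Summing the three sources gives $\tilde{O}(1)$ removals from $N$ per stage.

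The main obstacle is a subtlety in how \Cref{cor:fewStatusUpdates} is invoked. One deletion in $G$ induces a sequence of $\tilde{O}(1)$ deletions in $\bar{G}_X$: the deleted edge plus the pruned edges. \Cref{clm:littleVolumeInSmallComponents} must be applied to each step of this sequence. An edge whose status flips back and forth across intermediate steps is then counted at most once per step, so the total is still $\tilde{O}(1)\cdot\tilde{O}(1)$.

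Finally, since $G$ is decremental, being a bridge of $\bar{G}_X$ is monotone: once an edge is a bridge, it stays one. Together with \Cref{inv:nonBridgeIffCovered}, this shows $N$ is decremental, which matches the statement of \Cref{thm:edgeSparsifier2conn}.
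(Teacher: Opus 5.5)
Your proof is correct and follows the paper's route. The paper derives this corollary in one line from \Cref{inv:nonBridgeIffCovered} and \Cref{cor:fewStatusUpdates}: only edges whose covering status changes can leave $N$. Your three-way split makes explicit what that line leaves implicit, namely edges that leave $N$ by being deleted or pruned out of $\bar{G}_X$, and the small-expander case. Likewise, applying \Cref{clm:littleVolumeInSmallComponents} step by step over the $\tilde{O}(1)$ induced deletions is exactly how the paper obtains \Cref{cor:fewStatusUpdates}.
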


We can next prove the upper bound on the number of edges in the sparsifier. 

\begin{lemma}
    $|E(H^{(0)})| \leq n - (\text{\#connected components of }G^{(0)}) + m/16$, and after each deletion to $G$, at most $\tilde{O}(1)$ new edges are added to $H$.
\end{lemma}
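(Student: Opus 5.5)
We split the proof into the initial size bound and the per-deletion bound.

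\textbf{Initial size.} We follow the counting from the overview. Write $E_\text{rest} = \partial_G \mathcal{X}$. The choice $\phi = 1/(16q)$ and \Cref{lem:expander_decomp} give $|E_\text{rest}| \leq m/16$. Since $H^{(0)} = (G \setminus N) \cup \msf(G \cap N)$, we have $|E(H^{(0)})| \leq |G \setminus N| + |\msf(G \cap N)|$. Every edge $f \in (G \setminus N)\setminus E_\text{rest}$ lies inside some $G[X]$, where $\bar{G}_X^{(0)} = G^{(0)}[X]$ because $P_X^{(0)} = \emptyset$. By \Cref{inv:nonBridgeIffCovered}, such an $f$ is a bridge of $G[X]$. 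The clusters are vertex-disjoint, so $f$ is also a bridge of $G' \defeq G \setminus E_\text{rest} = \bigcup_X G[X]$. Let $B = (G\setminus N)\setminus E_\text{rest}$ be this set of bridges. Deleting all of $B$ from $G'$ increases the number of connected components by exactly $|B|$, since bridges of a graph remain bridges after other bridges are removed. Moreover, $G \cap N \subseteq G' \setminus B$, so $G \cap N$ has at least $\mathrm{cc}(G') + |B|$ components. Hence $|\msf(G\cap N)| \leq n - \mathrm{cc}(G') - |B|$.

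It remains to relate $\mathrm{cc}(G')$ to $\mathrm{cc}(G)$. Adding back the edges of $E_\text{rest}$ reduces the component count by at most $|E_\text{rest}|$, but this naive bound would double count. To avoid that, we do the accounting differently: $(G \cap N) \cup B$ spans exactly the components of $G'$, so $\msf(G\cap N)\cup B$ is a forest of at most $n - \mathrm{cc}(G')$ edges. Adding $E_\text{rest}$ then gives at most
\[
n - \mathrm{cc}(G') + |E_\text{rest}|
\]
edges. To get the $n - \mathrm{cc}(G^{(0)})$ term, we instead bound $|\msf(G\cap N)\cup B| + |E_\text{rest}|$ by taking a spanning forest of $G$ that extends $\msf(G \cap N)\cup B$; this uses $n - \mathrm{cc}(G)$ edges in total. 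The remaining edges of $E_\text{rest}$ not used in that forest number at most $|E_\text{rest}| \leq m/16$. This is the step that needs the most care. The standard trick is to extend the forest $\msf(G\cap N)\cup B$ using only $E_\text{rest}$ edges to a spanning forest of $G$, and then charge every leftover $E_\text{rest}$ edge to the $m/16$ budget. The result is $|E(H^{(0)})| \leq n - \mathrm{cc}(G^{(0)}) + m/16$.

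\textbf{Per-deletion additions.} An edge enters $H$ only in two ways:
\begin{itemize}
\item It leaves $N$, so it becomes part of $G \setminus N$. By \Cref{cor:NisStable}, at most $\tilde{O}(1)$ edges leave $N$ per deletion.
\item It enters $\msf(G \cap N)$.
\end{itemize}
For the second case, $G \cap N$ changes per deletion by the deleted edge plus the $\tilde{O}(1)$ edges removed from $N$; note that $N$ is decremental, so nothing is added to it. Each single deletion from a graph changes its minimum spanning forest by inserting at most one replacement edge. So at most $\tilde{O}(1)$ edges enter $\msf(G \cap N)$.

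Pruned edges $P_X$ leave $\bar{G}_X$, and therefore leave $N_X$. They are already counted among the $\tilde{O}(1)$ removals from $N$ through \Cref{lem:expander_pruning} and \Cref{cor:fewStatusUpdates}. Summing the two cases gives $\tilde{O}(1)$ new edges in $H$ per deletion.
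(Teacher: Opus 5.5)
Your proof is correct and follows the paper's argument. Bridges of the clusters remain bridges of $G' = \bigcup_X G[X]$, so $\msf(G \cap N)$ has at most $n - \mathrm{cc}(G') - |B|$ edges; the per-deletion bound follows from \Cref{cor:NisStable} plus the fact that each deletion from $G \cap N$ adds at most one replacement edge to its minimum spanning forest.

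The step you flagged as needing care is immediate. Since $G'$ is a spanning subgraph of $G$, we have $\mathrm{cc}(G') \geq \mathrm{cc}(G)$, so your displayed bound $n - \mathrm{cc}(G') + |E_\text{rest}|$ already gives $n - \mathrm{cc}(G^{(0)}) + m/16$. There is no double counting to avoid, and the forest-extension argument, while valid, is unnecessary.
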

\begin{proof}
    Below, we write $G = G^{(0)}$, $H = H^{(0)}$, and $N_X = N_X^{(0)}$ for each cluster $X \in \mathcal{X}$. For a graph $G'$, let $c(G')$ be the number of its connected components.
    
    Clearly, $G \setminus \partial_{G} \mathcal{X} = \bigcup_{X \in \mathcal{X}} G[X]$, and by the disjointness of expanders, every bridge in some $G[X]$ is also a bridge in $G \setminus \partial_{G} \mathcal{X}$. By \Cref{inv:nonBridgeIffCovered}, for each $X \in \mathcal{X}$, the set of bridges in $G[X]$ is $G[X] \setminus N_X$. Thus, $G \setminus \partial_{G} \mathcal{X}$ contains at least $b = |\bigcup_{X \in \mathcal{X}} G[X] \setminus N_X|$ bridges. 

    Next, note that
    \[
        G \cap N = \bigcup_{X \in \mathcal{X}} G[X] \cap N_X = (G \setminus \partial_{G} \mathcal{X}) \setminus \left(\bigcup_{X \in \mathcal{X}} G[X] \setminus N_X\right).
    \]
    Since removing a bridge increases the number of components by $1$, we conclude that
    \[
        c(G \cap N) = c(G \setminus \partial_{G} \mathcal{X}) + b \geq c(G) + b.
    \]
    Thus, the number of edges in $\msf(G \cap N)$ is at most $n - (c(G) + b)$. 

    Recall that $H = (G \setminus N) \cup \msf(G \cap N)$. It is now easy to see that 
    \begin{align*}
        |E(H)| &\leq |E(G \setminus N)| + |E(\msf(G \cap N))| \\
        &\leq  |\partial_{G} \mathcal{X}| + \left|\bigcup_{X \in \mathcal{X}} G[X] \setminus N_X\right| + |E(\msf(G \cap N))| \\
        &\leq m/16 + b + n - c(G) - b = m/16 + n - c(G),
    \end{align*}
    where we use in the second inequality $G \setminus N = G \setminus ( \bigcup_{X \in \mathcal{X}} N_X) = \partial_{G} \mathcal{X} \cup (\bigcup_{X \in \mathcal{X}} G[X] \setminus N_X)$, and in the last inequality the upper bound on $|\partial_{G} \mathcal{X}|$ from our choice of $\phi$.

    Finally, by \Cref{cor:NisStable}, the set $N$ shrinks by at most $\tilde{O}(1)$ edges per deletion in $G$, and no edge removed from $N$ can be added again. Recall that $H^{(t)} = (G^{(t)} \setminus N^{(t)}) \cup \msf(G^{(t)} \cap N^{(t)})$. The above recourse bound immediately bounds the number of changes in $G^{(t)} \setminus N^{(t)}$ by $\tilde{O}(1)$, and since each deletion in a graph can cause the minimum spanning tree to change by at most one edge, $\msf(G^{(t)} \cap N^{(t)})$ also undergoes at most $\tilde{O}(1)$ changes. Thus, $H^{(t)}$ has at most $\tilde{O}(1)$ new edges present.
\end{proof}

\paragraph{Runtime Analysis.}

Finally, let us bound the runtime of the algorithm.

\begin{lemma}
The algorithm requires $\tilde{O}(m)$ time to initialize and then processes each edge update to $G$ in worst-case time $\tilde{O}(1)$. The algorithm succeeds with high probability.
\end{lemma}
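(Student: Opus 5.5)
The plan is to bound initialization and update time separately, reusing the existing primitives: \Cref{lem:expander_decomp}, \Cref{lem:expander_pruning}, and the decremental MSF machinery for $\bar{F}_X$ and $\msf(G\cap N)$.

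\textbf{Initialization.} Computing the $\phi$-expander decomposition with $\phi = 1/(16q)$ takes $\tilde{O}(m/\phi) = \tilde{O}(m)$ time and succeeds w.h.p. Initializing $\mathcal{P}_X$ for each large expander is linear in $|E(G[X])|$. Each $\bar{F}_X^{(0)} = \msf(\bar{G}_X^{(0)})$ is computed statically, and a dynamic MSF structure is started on it; for this I would use the decremental MSF obtained from \Cref{thm:edgeSparsifierMST} together with the vertex-sparsification hierarchy of \Cref{subsec:completeMST}, or any $\tilde{O}(1)$ worst-case MSF routine with $\tilde{O}(m)$ initialization. Covering statuses are computed statically: root each tree, compute subtree volumes in $O(m)$, and read off volume-coverage directly. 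For cut-coverage, the small side $C$ has volume below $2\gamma/\phi$, so scanning $\partial C$ by exploring $C$ costs $\tilde{O}(1)$ per edge, or $\tilde{O}(m)$ in total. Small expanders are handled by brute force. Finally, $N^{(0)}$, $H^{(0)}$ and $\msf(G\cap N)$ are built in $\tilde{O}(m)$ time, and a second dynamic MSF structure is started on $G\cap N$.

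\textbf{Updates.} A deletion of $e$ affects at most one $G[X]$. If $X$ is small, its volume is below $4\gamma/\phi = \tilde{O}(1)$, so recomputing bridges of $\bar{G}_X$ from scratch costs $\tilde{O}(1)$. If $X$ is large, $\mathcal{P}_X$ returns $\tilde{O}(1)$ pruned edges in $\tilde{O}(1)$ time. Each such edge becomes a deletion from $\bar{G}_X$, which the MSF structure processes in $\tilde{O}(1)$, reporting the replacement edge $e'$, if any. The main obstacle is finding the $\tilde{O}(1)$ edges whose status changes (\Cref{clm:littleVolumeInSmallComponents}) without scanning. I would maintain top trees, or Euler tour trees, over $\bar{F}_X$ augmented with vertex volumes, so that component volume, subtree volume and centroid can be queried in $O(\log n)$. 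The proof of \Cref{clm:littleVolumeInSmallComponents} shows every affected edge is incident to a vertex of $Q$ or to a low-degree centroid, and these lie within small-volume subtrees near the endpoints of $e$ and $e'$. Concretely, starting from each endpoint of $e$, $e'$, I would explore the small side of the tree, i.e.\ subtrees $T'_q$ of volume $< 2\gamma/\phi$, which are visited in $\tilde{O}(1)$ time because their volume is bounded. I would also check centroids via top-tree queries. For each candidate edge $f$, volume-coverage is rechecked with two component-volume queries, and cut-coverage by scanning $\partial C$ of the small side, which has $\tilde{O}(1)$ volume. The subtle point is arguing that the vertices of $Q$, as characterised in the appendix proofs of \Cref{clm:robustToMeasureChange,clm:robustToForestChange}, all lie on tree paths from the update endpoints toward centroids within bounded volume. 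I would extract this from those proofs and otherwise fall back to walking these paths with the top tree.

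\textbf{Maintaining $H$.} Each status change moves an edge between $N$ and its complement, giving $\tilde{O}(1)$ such moves by \Cref{cor:NisStable}. These become $\tilde{O}(1)$ deletions to the MSF structure on $G\cap N$; that set is decremental, so \Cref{thm:edgeSparsifierMST}-style tools apply. Each change is reflected in $H$ in $\tilde{O}(1)$ time. Summing over steps gives $\tilde{O}(1)$ worst-case time per update. Randomness enters only through \Cref{lem:expander_decomp} and the MSF subroutine, so the algorithm succeeds w.h.p., and a union bound over polynomially many steps completes the argument.
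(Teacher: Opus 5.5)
Your proposal is correct and follows essentially the same route as the paper. Both use expander decomposition and pruning for the expanders, brute-force recomputation on small expanders, the MSF algorithm of \Cref{thm:main} on each $\bar{G}_X$ and on $G \cap N$, and a dynamic-tree structure maintaining centroids and subtree volumes (top trees in your case, link-cut trees in the paper's), so that the constructive proofs of \Cref{clm:robustToMeasureChange}, \Cref{clm:robustToForestChange} and \Cref{cor:fewStatusUpdates} locate the $\tilde{O}(1)$ status changes and \Cref{cor:NisStable} bounds the work for $H$.
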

\begin{proof}
By \Cref{lem:expander_decomp} and \Cref{lem:expander_pruning}, we can initialize and update all expanders in the claimed runtime. Since  \Cref{lem:expander_decomp} is randomized and only succeeds with high probability, so does our algorithm. Whenever an expander $X \in \mathcal{X}^\text{small}$ is affected, we can rerun a static algorithm to compute all bridges/non-bridges, which runs in time $\tilde{O}(1)$ since the graph is small. 

For $X \in \mathcal{X}^\text{large}$, we can update all forests $\bar{F}_X$ efficiently using the minimum spanning forest algorithm from \Cref{thm:main}. Using a link-cut tree data structure for each forest $\bar{F}_X$, we can maintain the centroid of each component and query in $O(\log n)$ time for each vertex $x$ the size of $T_x$ (if $x$ is not the centroid) and the incident volume to $T_x$. Thus, the constructive proofs from \Cref{clm:robustToMeasureChange}, \Cref{clm:robustToForestChange}, and \Cref{cor:fewStatusUpdates} allow us to find all $\tilde{O}(1)$ edges that change their covering status in time $\tilde{O}(1)$ and compute the new covering status. This allows us to maintain the set $N$ efficiently. 

Finally, to maintain the sparsifier $H$, we use the fact that by \Cref{cor:NisStable}, the set $N$ changes by at most $\tilde{O}(1)$ edges, which results in at most $\tilde{O}(1)$ changes to the graphs $G \setminus N$ and $G \cap N$. From \Cref{thm:main}, we can thus maintain the minimum spanning forest on the latter graph with $\tilde{O}(1)$ worst-case update time per time step. 
\end{proof}

%% file: coreGraphFramework.tex
\section{Fully-Dynamic Algorithms via Edge Sparsifiers}
\label{sec:mainSec}

In this section, we prove our main theorem using our novel edge sparsifiers from \Cref{sec:mst} and \Cref{sec:2edgeconn}. 

\mainThm*

We first describe in \Cref{subsec:completeMST} how to obtain a fully-dynamic minimum spanning forest algorithm by giving a framework that interleaves vertex sparsification with the edge sparsifier from \Cref{thm:edgeSparsifierMST}. We then show in \Cref{subsec:complete2EdgeConn} how to adapt this framework to get fully-dynamic $2$-edge connectivity.

\subsection{Fully-Dynamic Minimum Spanning Tree}
\label{subsec:completeMST}

To simplify the exposition, we first describe an algorithm that achieves only \emph{amortized} update time $\tilde{O}(1)$. We later discuss how to de-amortize the algorithm using standard techniques.

The algorithm maintains a hierarchy over $\Lambda + 1$ layers for $\Lambda = \ceil{\log_2 m} + 4$.
At initialization and after every update, we update the hierarchy one layer after another in increasing order of $i$, i.e., we process layer $i$ before processing layer $i+1$. We let time $t$ refer to the number of updates to $G$; that is, initialization takes place at time $0$, and the processing after the $t$-th update to $G$ takes place at time $t$. We define for $0 \leq i \leq \Lambda$, $x_i \defeq \ceil{2^{\Lambda - i - 8}/ ((\Lambda+1)\eta)}$ where $\eta = \tilde{O}(1)$ is a universal upper bound on the recourse obtained by algorithms in \Cref{thm:edgeSparsifierMST} or \Cref{thm:edgeSparsifier2conn}. Then, we \emph{(re-)initialize} layer $i$ at every time step $t$ divisible by $x_i$. That is, we remove all information stored at level $i$ and recompute it. By construction, if we (re-)initialize layer $i$, then we also (re-)initialize all layers $j > i$.

For each layer $0 \leq i \leq \Lambda$, we maintain a graph $G_i$. Intuitively, as $i$ increases, $G_i$ gets closer to a spanning forest. Formally, we maintain the following invariant.

\begin{restatable}{invariant}{keyInv}\label{inv:keyInv}
At any time, $G = G_0 \supseteq G_1 \supseteq G_2 \supseteq \cdots \supseteq G_{\Lambda}$, and for any layer $i \in [\Lambda]$, we have:
\begin{enumerate}
    \item $|E(G_i)| \leq n-\text{(\#connected components in $G$)}  + m/2^i$, and
    \item $G_i$ spans $G$, i.e., $\text{(\#connected components in $G_i$)} = \text{(\#connected components in $G$)}$.
\end{enumerate}
\end{restatable}

To this end, we use the edge sparsifiers developed in the first part of the article, and achieve vertex sparsification simply by contracting paths to off-forest edges and periodic rebuilds (it is this part that requires amortization). We have at all times $G_0 = G$ and $F_0 = (V, \emptyset)$. More concretely, at time $t$, at each layer $i = 1, \ldots, \Lambda$, 
\begin{itemize}
    \item \underline{If $t$ is divisible by $x_i$:} $F_i$ is (re-)initialized to be a  minimum spanning forest of $G_{i-1}$ (i.e., a forest spanning the connected components of $G_{i-1}$ with minimum weight). Let $\bar{E}_{i} = E(G_{i-1}) \setminus F_{i}$ be the off-forest edges. Let $V(\bar{E}_{i})$ denote the set of endpoints of $\bar{E}_{i}$. Let $B_i$ be the set of branching vertices of $V(\bar{E}_i)$ in $F_i$, i.e., the minimal set such that for every $u,v \in V(\bar{E}_i)$, their least common ancestor (we can arbitrarily assign $F_i$ a root vertex) in $F_i$ (if one exists) is in $B_i$.

    The algorithm computes the Steiner forest $\bar{F}_i$ of $F_i$ with respect to $B_i \cup V(\bar{E}_i)$, that is, the forest obtained from $F_i$ by greedily removing degree-1 vertices that are not in $B_i \cup V(\bar{E}_i)$ and then replacing maximal path segments in $F_i$ that are internally vertex-disjoint with $B_i \cup V(\bar{E}_i)$ by edges between the path's endpoints, taking the maximum weight of any edge on the path. For every edge $e \in F_i$, we denote by $\bar{e}$ the edge in $\bar{F}_i$ that was obtained from contracting the unique path containing $e$, and define $\bar{e}$ to be $\bot$ if no such path exists. For convenience, we define for $e \in \bar{E}_i$, $\bar{e} = e$.

    The algorithm then computes the reduced graph $\bar{G}_i = \bar{F}_i \cup \bar{E}_i$.  Finally, we (re-)initialize the edge sparsifier data structure $\mathcal{D}_i$ from \Cref{thm:edgeSparsifierMST} on $\bar{G}_i$ and forest $\bar{F}_i$ to maintain a sparsifier $\bar{H}_i$. Then, $G_i$ is (re-)initialized to $G_i = F_i \cup (\bar{H}_i \setminus \bar{F}_i)$. 
    
    \item \underline{Otherwise:} We distinguish by updates:
    \begin{itemize}
        \item \underline{for every edge $e$ inserted into $G_{i-1}$:} the edge $e$ is also added to $G_i$.
        \item \underline{for every edge $e$ deleted from $G_{i-1}$:} we first remove $e$ from $G_i$ if exists. Then,
        \begin{itemize}
            \item \underline{if $e \in F_i$:} then we remove it from $F_i$ and remove $\bar{e}$ from $\bar{G}_i$ (and thus $\bar{F}_i$).
            \item \underline{if $e \in G_{i-1} \setminus F_i$:} we remove $e$ from $\bar{G}_{i}$.
        \end{itemize}
        We forward the updates to $\bar{G}_i$ to $\mathcal{D}_i$ and add the additional edges added to the sparsifier $\bar{H}_i$ to $G_i$.  
    \end{itemize}
\end{itemize}

This completes the high-level overview of our algorithm. We next prove that \Cref{inv:keyInv} remains correct, and analyze the runtime of the above algorithm.

\paragraph{Correctness of the Algorithm.} We start by establishing our invariant.

\keyInv*
\begin{proof}
    We have $G_0 = G$ and $F_0 = (V, \emptyset)$ by definition. The property $G_{i-1} \supseteq G_{i}$ is straightforward from the definition of the algorithm. The fact that $G_i$ spans $G$ follows from our definition of $G_i$, $\bar{G}_i$, and the property from \Cref{thm:edgeSparsifierMST} that preserves the MSF of the input graph.

    We prove the invariant by induction on the time $t$ and the layer $i$. Let us consider any layer $i$. Let us again distinguish between (re-)initialization and normal time steps and prove slightly strengthened claims:
\begin{itemize}
    \item \underline{$t$ divisible by $x_i$:} By induction, the graph $G_{i-1}$ satisfies 
    \[|E(G_{i-1})| \leq n-\text{(\#connected components in $G_{i - 1}$)}  + m/2^{i-1}.\]
    Thus, there are at most $m/2^{i-1}$ off-forest edges $\bar{E}_i$ as any spanning forest $F_i$ of $G_{i-1}$ has $n-\text{(\#connected components in $G_{i - 1}$)}$ edges.
    
    We can therefore upper bound the number of endpoints $V(\bar{E}_i)$ trivially by $2\cdot m/2^{i-1}$, and it is well-known that the branching set of a spanning forest is at most as large as the underlying set. Thus, $|V(\bar{E}_i) \cup B_i| \leq 4 \cdot m/2^{i - 1}$. 
    
    Since $\bar{F}_i$ is a forest over vertex set $V(\bar{E}_i) \cup B_i$, it has at most $4 \cdot m/2^{i-1} - 1$ edges. Thus, $\bar{G}_i = \bar{F}_i \cup \bar{E}_i$ consists of at most $m_i = 5 \cdot m/2^{i-1}$ edges. Thus, $\bar{H}_i \setminus \bar{F}_i$ contains at most $m_i/16 < m/2^i$ edges. Since $G_i$ is set to $F_i \cup (\bar{H}_i \setminus \bar{F}_i)$, we have 
    \[
        |E(G_{i})| \leq n-\text{(\#connected components in $G_i$)} + m/2^i.
    \]
    \item \underline{Otherwise:} first, note that deleting an edge in $G_i$ increases the number of connected components by at most $1$, so this update cannot make the inequality untrue.

    If $e$ is inserted into $G$, it is simply added to every graph $G_i$ that is not (re-)initialized. Thus, there is only a single edge added to $G_i$. On the other hand, a deletion in $G_{i-1}$ results in at most one deletion in $\bar{G}_i$. By definition of $\eta$, this results in at most $\eta$ additional edges in $\bar{H}_i$, which are then added to $G_i$. But note that each deletion in $G_{i-1}$ results in only a single deletion in $G_{i}$. Thus, for $0 < j \leq i$, a deletion in $G$ causes at most a single deletion in each $G_{j-1}$, which causes $\eta$ insertions into $\bar{H}_{j}$, which are then added to $G_j$. Note that here we implicitly exploit that if $t$ is not divisible by $x_i$, then it is not divisible by $x_j$ for $j \leq i$. Thus, in either case, a single update to $G$ causes at most $(i+1)\eta \leq (\Lambda+1)\eta$ edge insertions to $G_i$.

    Since at most $t - \lfloor t / x_i\rfloor x_i \leq x_i$ time steps have passed since the last (re-)initialization of $G_i$, we conclude 
    \begin{align*}
        |E(G_i)| &\leq n-\text{(\#connected components in $G_i$)} + m_i / 16 + x_i \cdot (\Lambda+1)\eta \\&\leq n-\text{(\#connected components in $G_i$)} + m/2^{i}.
    \end{align*}
    Note that we can assume $x_i > 1$ (since otherwise, layer $i$ is (re-)initialized at every time step), which yields $x_i \cdot (\Lambda + 1)\eta \leq 2^{\Lambda - i - 7} \leq \frac{1}{4} \cdot \frac{m}{2^i}$. \qedhere 
\end{itemize}
\end{proof}

Next, we prove that each graph $G_i$ contains the minimum spanning forest of $G$.

\begin{claim}\label{clm:mstContained}
    At any stage, for any layer $i$, $\msf(G_i) = \msf(G)$.
\end{claim}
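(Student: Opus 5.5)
We prove the claim by induction on the layer $i$; the case $i = 0$ is trivial since $G_0 = G$. Since $G_i \subseteq G_{i-1}$ and, inductively, $\msf(G_{i-1}) = \msf(G)$, it suffices to show $\msf(G_i) = \msf(G_{i-1})$. For this we use the standard cycle-property characterisation. If $G_i \subseteq G_{i-1}$ and every edge of $\msf(G_{i-1})$ lies in $G_i$, then $\msf(G_i) = \msf(G_{i-1})$. Hence the goal reduces to showing $\msf(G_{i-1}) \subseteq G_i$ at every time. We split $G_{i-1}$ into the edges of the current $F_i$ and the off-forest edges $G_{i-1} \setminus F_i$.

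The forest edges are handled directly. $F_i$ is only ever shrunk between re-initializations, and $G_i \supseteq F_i$ holds at initialization, while deletions remove edges from both. So every current edge of $F_i$ lies in $G_i$. For the off-forest edges, we relate $G_{i-1}$ to the reduced graph $\bar{G}_i$ via the map $e \mapsto \bar{e}$.

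\textbf{Step 1 (static).} At a re-initialization, we show that for $f \in \bar{E}_i$ we have $f \in \msf(G_{i-1})$ iff $f \in \msf(\bar{G}_i)$. Pruning degree-one Steiner-irrelevant vertices does not affect cycles through off-forest edges. Contracting a path to an edge of maximal weight preserves, for every cycle in $F_i \cup \bar{E}_i$, the identity of its heaviest edge up to the map $e \mapsto \bar{e}$. Since edges outside $F_i$ cannot join the MSF except by replacing $F_i$-paths, the cycle property gives the equivalence.

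\textbf{Step 2 (dynamic).} We maintain the correspondence under deletions. A deletion of $e \in F_i$ deletes $\bar{e}$. The $F_i$-path represented by $\bar{e}$ becomes disconnected, so a replacement in $G_{i-1}$ is an off-forest edge whose fundamental cycle in the pre-deletion forest contained $e$, i.e.\ whose cycle in $\bar{G}_i$ contained $\bar{e}$. Using weights where $\bar{e}$ carries the max weight of its path, an off-forest edge $f$ is in $\msf(G_{i-1})$ iff $f$ is lightest among non-forest edges crossing the resulting cuts. The same cut structure appears in $\bar{G}_i$ because $F_i \setminus \{\text{deleted}\}$ and $\bar{F}_i \setminus \{\text{deleted}\}$ induce the same partition on $B_i \cup V(\bar{E}_i)$. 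Inserted edges are added to $G_i$ directly, so they are trivially present.

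With both steps, \Cref{thm:edgeSparsifierMST} gives $\msf(\bar{H}_i) = \msf(\bar{G}_i)$. Hence every off-forest edge of $\msf(G_{i-1})$ lies in $\bar{H}_i \setminus \bar{F}_i \subseteq G_i$, completing the induction.

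The main obstacle is Step 2: showing that the path contraction commutes with MSF membership after several deletions of forest edges. Contracted edges then represent broken paths, and the "max weight on the path" surrogate must still certify exactly the right cut comparisons. My first approach is to phrase everything through cuts. An off-forest edge $f$ is in the MSF iff, for the partition induced by the surviving forest together with lighter off-forest edges, $f$ is the minimum edge across some component boundary. I would then check that this partition, restricted to terminals, is identical in $G_{i-1}$ and $\bar{G}_i$.
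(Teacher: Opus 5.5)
Your reduction to $\msf(G_{i-1}) \subseteq G_i$, your handling of forest edges and inserted edges, and your final appeal to $\msf(\bar{H}_i) = \msf(\bar{G}_i)$ follow the paper's proof. Your Step 1 (cycles of $F_i \cup \bar{E}_i$ correspond to cycles of $\bar{G}_i$ with the same heaviest edge under $e \mapsto \bar{e}$) is exactly what the paper compresses into ``from the construction of $\bar{G}_i$''.

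The gap is Step 2. You leave it as a plan, and the plan rests on a premise you never establish. Any characterization of the form ``an off-forest edge $f$ is in $\msf(G_{i-1})$ iff it is lightest across the cuts created by the deleted forest edges'' is valid only if every surviving edge of $F_i$ lies in $\msf(G_{i-1})$. This includes your refined version using the partition induced by the surviving forest plus lighter off-forest edges. That premise needs $F_i = \msf(G_{i-1})$ at the rebuild, which you never invoke. Even then it fails once an edge is inserted into $G$ between rebuilds: a light inserted edge can evict an $F_i$-edge from $\msf(G) = \msf(G_{i-1})$. For the same reason, your claimed ``iff'' between $G_{i-1}$ and $\bar{G}_i$ is false in general, because inserted edges belong to $G_{i-1}$ but not to $\bar{G}_i$.

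The missing observation is that Step 2 is unnecessary, since only one direction is needed. Suppose $f \in \bar{E}_i$ is present in $G_{i-1}$ but absent from $G_i$. Then $f \notin \bar{H}_i$, hence $f \notin \msf(\bar{H}_i) = \msf(\bar{G}_i)$, so $f$ is the heaviest edge on some cycle of the current $\bar{G}_i$.

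Your Step 1 lifting now applies verbatim at every time step. The algorithm removes $\bar{e}$ from $\bar{G}_i$ as soon as any edge of its path is deleted, so every surviving contracted edge stands for a path still intact in $F_i \subseteq G_{i-1}$. Your worry about contracted edges representing broken paths never arises. Replacing each $\bar{e}$ on the cycle by its path gives a cycle of $G_{i-1}$ on which $f$ is still heaviest, because $\bar{e}$ carries the maximum weight of its path and weights are distinct. Hence $f \notin \msf(G_{i-1}) = \msf(G)$.

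Insertions into $G_{i-1}$ only create additional cycles, so they cannot break this direction. The argument does not even require $F_i$ to be a minimum spanning forest.
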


\begin{proof}
    We prove the claim by induction on the time $t$ and layer $i$. Whenever $G_i$ is (re-)initialized, we compute $F_i = \msf(G_{i-1})$ and add $F_i$ to $G_i$. Thus, clearly, at times $t$ divisible by $x_i$, we have $\msf(G_i) = F_i = \msf(G_{i - 1}) = \msf(G)$, where we use the induction hypothesis in the last equality.

    For any other time $t$, we may have two cases. If a new edge $e$ is inserted into $G^{(t-1)}$ to obtain $G^{(t)}$, we promptly add it to $G_0, G_1, \ldots, G_i$. Then
    \[
        \msf(G^{(t)}) = \msf(G^{(t-1)} \cup \{e\}) = \msf(G^{(t-1)}_i \cup \{e\}) = \msf(G^{(t)}_i),
    \]
    as desired. Otherwise, an edge $e$ is deleted from $G^{(t-1)}$ to obtain $G^{(t)}$. But note that the edges not inserted into $G_i$ are only the edges in the original set $\bar{E}_i$ that have since not been added by the data structure $\mathcal{D}_i$ to $\bar{H}_i$. From the construction of $\bar{G}_i$ and the property of \Cref{thm:edgeSparsifierMST} that we always have $\msf(\bar{G}_i) = \msf(\bar{H}_i)$, we get that, again, the minimum spanning forest of $G$ is in $G_i$.    
\end{proof}

We immediately get that the forest $F_{\Lambda}$ is a minimum spanning forest.

\begin{corollary}
At any stage, $F_{\Lambda}$ is a minimum spanning forest of $G$.
\end{corollary}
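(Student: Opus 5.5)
The corollary follows by combining \Cref{clm:mstContained} at layer $\Lambda$ with a structural observation about $G_\Lambda$. I will show that at every stage $F_\Lambda$ is a subgraph of $G_\Lambda$ that spans the connected components of $G$, and that $G_\Lambda$ contains no off-forest edges beyond $F_\Lambda$. Then $G_\Lambda = F_\Lambda$ is itself a forest, so $\msf(G_\Lambda) = F_\Lambda$, and \Cref{clm:mstContained} gives $\msf(G) = \msf(G_\Lambda) = F_\Lambda$.

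The first step uses the schedule. Since $\Lambda = \ceil{\log_2 m} + 4$, we have $x_\Lambda = \ceil{2^{-8}/((\Lambda+1)\eta)} = 1$. Hence layer $\Lambda$ (and every layer $j$ with $x_j = 1$) is re-initialized at every time step $t$, after the lower layers have been processed. Consequently, at every stage $F_\Lambda$ is freshly computed as $\msf(G_{\Lambda-1})$. By \Cref{clm:mstContained} applied to layer $\Lambda-1$, this equals $\msf(G)$. This already proves the corollary, and it does not need the sparsity argument at all.

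For robustness, I would also note the consistency check via \Cref{inv:keyInv}: at layer $\Lambda$ the number of off-forest edges is at most $m/2^\Lambda < 1$, so $G_\Lambda$ consists of exactly $n - (\#\text{components of } G)$ edges spanning $G$. It is therefore a spanning forest of $G$ containing $F_\Lambda$, so the two are equal.

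The only delicate point is confirming that $x_\Lambda = 1$ really forces a rebuild at every step, so that $F_\Lambda$ is never a stale forest that has merely had edges deleted. This follows from the ceiling in the definition of $x_i$, together with the rule that layers are processed in increasing order of $i$ after each update. Once that is in place, the induction already carried out in \Cref{clm:mstContained} finishes the argument.
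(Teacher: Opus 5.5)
Your proof is correct, but your main argument takes a different route from the paper's.

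\textbf{The paper's route.} The paper's one-line proof works at layer $\Lambda$ itself. \Cref{inv:keyInv} says $G_\Lambda$ spans $G$ and has at most $n-(\#\text{components of } G)+m/2^\Lambda$ edges, where $m/2^\Lambda\le 1/16$. Hence $G_\Lambda$ is already a spanning forest and coincides with $F_\Lambda$. Then \Cref{clm:mstContained} at layer $\Lambda$ gives $F_\Lambda=G_\Lambda=\msf(G_\Lambda)=\msf(G)$. This is exactly your ``consistency check.''

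\textbf{Your route.} Your primary argument bypasses the sparsity bound entirely. Since $x_\Lambda=\ceil{2^{-8}/((\Lambda+1)\eta)}=1$, layer $\Lambda$ is rebuilt after every update, after layer $\Lambda-1$ has been processed. So $F_\Lambda=\msf(G_{\Lambda-1})$, and \Cref{clm:mstContained} at layer $\Lambda-1$ finishes the proof. This is more economical: it needs no edge counting. It also does not use the choice $\Lambda=\ceil{\log_2 m}+4$, since the exponent in $x_\Lambda$ is $-8$ regardless, so your ``Since $\Lambda=\ldots$'' is superfluous.

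\textbf{What each buys.} The paper's route certifies the stronger fact that the whole top-layer graph collapses to the forest, $G_\Lambda=F_\Lambda$. That fact is reused later in the correctness argument for $2$-edge connectivity queries. Both routes quietly rely on $F_\Lambda$ being freshly rebuilt at every stage. The paper needs this to identify $F_\Lambda$ with the spanning forest $G_\Lambda$; otherwise a replacement edge could sit in $G_\Lambda\setminus F_\Lambda$. So your emphasis on $x_\Lambda=1$ is well placed.
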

\begin{proof}
    Follows directly from the second property of \Cref{inv:keyInv} and \Cref{clm:mstContained}.
\end{proof}

\paragraph{Runtime Analysis.}

We first bound the size of instances.

\begin{claim}\label{clm:instanceSize}
    At any stage, for any layer $0 \leq i \leq \Lambda$, $G_i \setminus F_i$ contains at most $m/2^{i} + x_i$ edges.
\end{claim}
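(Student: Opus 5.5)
We prove the claim by fixing a layer $i \geq 1$ and tracking $G_i \setminus F_i$ from the most recent (re-)initialization of layer $i$. For $i = 0$ the claim is trivial: $F_0$ is empty and $G_0 = G$ has at most $m$ edges, which we take as the reference size. For $i \geq 1$, at a time $t$ divisible by $x_i$ we have $G_i = F_i \cup (\bar{H}_i \setminus \bar{F}_i)$. Hence $G_i \setminus F_i = \bar{H}_i \setminus \bar{F}_i$, so it consists only of sparsifier edges that are off-forest edges of $\bar{G}_i$. The counting from the proof of \Cref{inv:keyInv} applies here. The graph $\bar{G}_i$ has at most $m_i = 5m/2^{i-1}$ edges, and $\bar{H}_i$ has at most $n(\bar{G}_i) - c(\bar{G}_i) + m_i/16$ edges. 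Since $\bar{F}_i \subseteq \bar{H}_i$ is a maximal spanning forest of $\bar{G}_i$ with exactly $n(\bar{G}_i) - c(\bar{G}_i)$ edges, we get $|\bar{H}_i \setminus \bar{F}_i| \leq m_i/16 < m/2^i$ right after (re-)initialization.

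The main step is to bound the growth of $G_i \setminus F_i$ between re-initializations, and here we would rely on the structure of the update rules rather than the full edge count of $G_i$. An edge enters $G_i \setminus F_i$ in only two ways: it is inserted into $G_{i-1}$ and forwarded to $G_i$, or it is added to $\bar{H}_i$ by $\mathcal{D}_i$ after a deletion forwarded to $\bar{G}_i$. Deletions only remove edges from $G_i \setminus F_i$. A deleted forest edge $e \in F_i$ leaves both $G_i$ and $F_i$. Forest edges are never re-added to $F_i$ between re-initializations; $F_i$ is only rebuilt at re-initialization. So every edge that enters $G_i$ after the re-initialization stays off-forest, and it suffices to count insertions.

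The argument from \Cref{inv:keyInv} shows that a single update to $G$ causes at most $(i+1)\eta \leq (\Lambda+1)\eta$ insertions into $G_i$. At most $x_i$ updates have happened since the last re-initialization, so the increase is at most $x_i(\Lambda+1)\eta$. Substituting $x_i = \lceil 2^{\Lambda-i-8}/((\Lambda+1)\eta)\rceil$ gives $x_i(\Lambda+1)\eta \leq 2^{\Lambda-i-8} + (\Lambda+1)\eta$, so the increase is $O(2^{\Lambda-i})$. We expect the main obstacle to be making this match the stated form $m/2^i + x_i$ exactly. We would split the count so that the term proportional to $2^{\Lambda-i-8} \leq \tfrac14 m/2^i$ is absorbed together with $m_i/16$ into $m/2^i$, since $2^\Lambda = \Theta(m)$. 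The slack $x_i$ would then pay for the rounding, with at most one extra non-charged edge per elapsed time step. If this accounting proves too tight, we would fall back on charging directly via the bound of \Cref{inv:keyInv}: there we already showed $|E(G_i)| - |F_i| \leq m/2^i$ when $F_i$ is maximal, and when $F_i$ has lost edges, the number of lost forest edges is at most the number of elapsed deletions, which is at most $x_i$.
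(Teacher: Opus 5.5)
Your fallback in the last sentence is exactly the paper's proof. It combines $|E(G_i)| \leq n - c(G_i) + m/2^i$ from \Cref{inv:keyInv} with $|F_i| \geq n - c(G_i) - x_i$, using that $F_i$ was maximal at the last (re-)initialization and never gains edges afterwards. There is one imprecision, which the paper's write-up also makes implicitly. The deficit $n - c(G_i) - |F_i|$ grows not only when forest edges are deleted, but also when an insertion merges two components of $G$; after such an insertion $F_i$ is no longer maximal, although it lost nothing. Each update contributes at most one unit to this deficit, so you should bound it by the number of elapsed \emph{updates}, not deletions. That number is still at most $x_i$, so the conclusion stands.

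Your main route is a genuinely different argument. Instead of invoking \Cref{inv:keyInv} at the current time, you count $E(G_i)\setminus F_i$ directly: the at most $m_i/16 = \frac{5}{8}\cdot m/2^i$ edges of $\bar H_i \setminus \bar F_i$ present at (re-)initialization, plus the edges inserted into $G_i$ since then. That decomposition is sound, avoids the component-count issue above, and actually yields the sharper bound $|E(G_i) \setminus F_i| < m/2^i$.

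Your closing accounting, however, fails as written. After expanding the ceiling, the leftover term is $(\Lambda+1)\eta$, a full step's worth of insertions. It is covered neither by $x_i$ (which may be $1$) nor by $m/2^i$ (which is below $1$ for $i$ near $\Lambda$), and ``one extra non-charged edge per elapsed time step'' does not describe it. The fix is to count exactly. At stage $t$, layer $i$ was last (re-)initialized at time $\lfloor t/x_i \rfloor x_i$, so at most $x_i - 1$ updates have been processed since. Since $\lceil a \rceil - 1 < a$ and $2^{\Lambda} < 32m$, this gives $(x_i - 1)(\Lambda+1)\eta < 2^{\Lambda-i-8} < \frac{1}{8}\cdot m/2^i$. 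Alternatively, split into two cases as in the proof of \Cref{inv:keyInv}: if $x_i = 1$ there is no growth, and if $x_i \geq 2$ then $x_i(\Lambda+1)\eta \leq 2^{\Lambda-i-7} \leq \frac14 \cdot m/2^i$. Either way the total stays below $m/2^i$.

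The paper's route is shorter because all the counting already sits inside the invariant. Yours is self-contained and gives a slightly stronger bound.
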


\begin{proof}
    By \Cref{inv:keyInv}, at all times we have $|E(G_i)| \leq n - c(G_i) + m/2^i$. At the most recent (re-)initialization, $F_i$ is a spanning forest of $G_{i - 1}$ (and thus spans $G_i$), so it has $n - c(G_i)$ edges where $c(G_i)$ is the number of components of $G_i$. Over the at most $x_i$ update steps since that (re-)initialization, each update to $G$ results in at most one edge deletion in $F_i$ and no insertions into $F_i$, so $|F_i| \geq n - c(G_i) - x_i$. Therefore,
    \[
        |E(G_i) \setminus F_i| = |E(G_i)| - |F_i| \leq \left(n - c(G_i) + \frac{m}{2^i}\right) - \left(n - c(G_i) - x_i\right) = \frac{m}{2^i} + x_i. \qedhere
    \]
\end{proof}

\begin{lemma}
    \label{lma:runtime}
    There is a fully-dynamic graph algorithm for the \textup{\textsc{MSF}} problem with $\tilde{O}(1)$ amortized update time. The algorithm is Las-Vegas randomized, i.e., it succeeds at each update with high probability, and reports in the case of failure. The algorithm works against an adaptive adversary.
\end{lemma}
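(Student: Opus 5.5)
Correctness is already settled: by \Cref{inv:keyInv} and \Cref{clm:mstContained}, $F_\Lambda$ is always a minimum spanning forest of $G$. So the proof of \Cref{lma:runtime} is a running-time analysis: bound the amortized cost of all (re-)initializations and of all ordinary updates per layer, then sum over the $\Lambda+1 = O(\log m)$ layers. Failure detection and adaptivity are inherited from \Cref{thm:edgeSparsifierMST}. It reports failures explicitly, and its only randomness is the static expander decomposition, which is recomputed at every rebuild and whose output is never used to choose the updates. I will also assume w.l.o.g. that $m$ is an upper bound on the number of edges kept fixed over a phase; when it changes by a factor $2$ we rebuild globally, which is amortized in the standard way.

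\emph{Step 1: cost of (re-)initializing layer $i$.} By \Cref{clm:instanceSize} applied to layer $i-1$, $G_{i-1}\setminus F_{i-1}$ has at most $m/2^{i-1}+x_{i-1}$ edges. The rebuild of layer $i$ must not touch all of $G_{i-1}$, since that can have $\Theta(n)$ forest edges. Instead I work on the compressed representation $\bar G_{i-1} = \bar F_{i-1}\cup \bar E_{i-1}$, together with the edges added since the last rebuild of layer $i-1$; this has size $O(m/2^{i-1}+x_{i-1})$. I compute its minimum spanning forest statically, and map contracted path-edges $\bar e$ back to $F_{i-1}$ through the stored pointers $e\mapsto \bar e$. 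For the maximum-weight path edges I use link-cut trees or top trees on $F_{i-1}$. This produces $F_i$ implicitly as $F_{i-1}$ with $O(m/2^{i-1}+x_{i-1})$ modifications. The branching set $B_i$ and the Steiner forest $\bar F_i$ can then be obtained from a virtual-tree (LCA-based) construction over $V(\bar E_i)$ in $\tilde O(|\bar E_i|)$ time. Initializing $\mathcal{D}_i$ via \Cref{thm:edgeSparsifierMST} on $\bar G_i$, which has $O(m/2^{i-1})$ edges, costs $\tilde O(m/2^{i})$. So a rebuild of layer $i$ costs $\tilde O(m/2^i + x_{i-1})$.

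\emph{Step 2: amortization.} Layer $i$ is rebuilt every $x_i = \lceil 2^{\Lambda-i-8}/((\Lambda+1)\eta)\rceil$ steps. Since $2^{\Lambda}=\Theta(m)$, we have $m/2^i = \Theta(2^{\Lambda-i})$, so the amortized cost is
\[
\tilde O\bigl((m/2^i + x_{i-1})/x_i\bigr) = \tilde O\bigl((\Lambda+1)\eta\bigr)=\tilde O(1).
\]
When $x_i=1$ the layer is small, with $m/2^i=\tilde O(1)$, so rebuilding every step costs $\tilde O(1)$ anyway.

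\emph{Step 3: ordinary updates.} By the argument in the proof of \Cref{inv:keyInv}, each update to $G$ causes at most $(\Lambda+1)\eta$ insertions and at most one deletion in each $G_i$. A deletion in $G_{i-1}$ becomes one deletion in $\bar G_i$ (removing $\bar e$ or $e$). That deletion is processed by $\mathcal{D}_i$ in $\tilde O(1)$ time and adds at most $\eta$ edges. Insertions are just appended. Summing over layers gives $\tilde O(1)$ per update.

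I expect the main obstacle to be Step 1: rebuilding layer $i$ in time proportional to the \emph{compressed} size rather than $n$. This needs the data structures for $F_i$, the pointers $\bar e$, and the path-maximum weights to be represented so that layer $i$ can be derived from layer $i-1$'s compressed graph without ever scanning the full forest. My first attempt is to keep one dynamic-tree structure per layer, initialized by copying only the $O(m/2^{i-1})$ modified parts of layer $i-1$ (persistent or shared representations). If that fails, I fall back to rebuilding layer $i$ directly on $\bar G_{i-1}$, treating its contracted edges as the forest.
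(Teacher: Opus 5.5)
Your outline matches the paper's proof: a persistent copy of $F_{i-1}$'s dynamic tree, a rebuild on a compressed graph of size $\tilde O(m/2^i)$, amortization over $x_i$ steps, $\tilde O(1)$ work per layer for ordinary updates, and forwarded failure reports. There is a gap in Step~1, which you rightly call the crux: you rebuild from the wrong compressed graph.

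$\bar G_{i-1}$ is the \emph{decremental} input of $\mathcal D_{i-1}$. When a forest edge $e\in F_{i-1}$ is deleted, the algorithm removes the whole contracted edge $\bar e$, even though the two surviving pieces of its path remain in $F_{i-1}\subseteq G_{i-1}$. This is harmless for $\mathcal D_{i-1}$, which only becomes more conservative. But it means $\bar G_{i-1}$ no longer encodes the connectivity of $F_{i-1}$. Moreover, inserted edges may end at interior path vertices or pruned pendant vertices that $\bar G_{i-1}$ does not contain.

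Here is a concrete failure. Let the relevant part of $G_{i-1}$ be the path $a,x,y,b$ of $F_{i-1}$, contracted to $\bar e=ab$, plus a heavy off-forest edge $g=ab$. Delete $xy$, so $\bar e$ vanishes and $g$ is added to $\bar H_{i-1}$. Then insert a light edge $h=xb$. Your graph is $\{g,h\}$ and its MSF is $\{g,h\}$. Applying this to $F_{i-1}$ gives $\{ax,yb,g,h\}$, which contains the cycle $a,x,b$. The correct forest keeps $ax,yb,h$ and drops $g$. Your fallback, running layer $i$ on $\bar G_{i-1}$ as its own graph, has the same defect.

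The fix is what the paper does, and it uses the tool you already apply to build $\bar F_i$. Do not reuse $\bar G_{i-1}$. Instead:
\begin{enumerate}
\item Build a fresh virtual tree of the \emph{current} $F_{i-1}$ with respect to the endpoints of the current off-forest edges $E(G_{i-1})\setminus F_{i-1}$. There are at most $m/2^{i-1}+x_{i-1}$ of these by \Cref{clm:instanceSize}, and the tree is obtained via LCA and path-maximum queries on the persistent copy of $F_{i-1}$'s dynamic tree.
\item Compute the MSF of this reduced graph.
\item Apply the resulting $O(m/2^{i-1}+x_{i-1})$ cuts and links to the copy to obtain $F_i$.
\end{enumerate}
Then take $\bar E_i=E(G_{i-1})\setminus F_i$, not the off-forest edges of $\bar G_{i-1}$. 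The latter include edges sparsified away at layer $i-1$, which would break $G_i\subseteq G_{i-1}$. With this replacement, your rebuild bound $\tilde O(m/2^i+x_{i-1})$ and your Steps~2 and~3 go through as written.
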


\begin{proof}
    To implement the above framework efficiently, we maintain each forest $F_i$ in a link-cut tree data structure where each edge update requires $O(\log n)$ worst-case update time. This allows us to query the least common ancestor of any two vertices in $O(\log n)$ worst-case time.

    Let us first compute the cost of a (re-)initialization at layer $i$. We can make a copy of the link-cut data structure of $F_{i-1}$ in $\tilde{O}(1)$ worst-case time (using persistence \cite{driscoll1986making}). We can then compute the minimum spanning forest of $G_i$ as follows: we find the branching set $B$ of $V(E(G_{i-1}\setminus F_{i-1}))$, and compute the reduced forest $\bar{F}_{i}$ where degree $1$ vertices that are not in $B \cup V(E(G_{i-1}\setminus F_{i-1}))$ are greedily removed and maximal path segments that are internally vertex-disjoint to $B \cup V(E(G_{i-1}\setminus F_{i-1}))$ are contracted, taking the maximum weight of any edge on the path. We can construct this graph in $\tilde{O}(m/2^{i})$ time since $|E(G_{i-1}\setminus F_{i-1})| \leq m/2^{i}$ by \Cref{inv:keyInv} and the fact that $F_{i-1}$ is initially spanning all connected components and has undergone at most $x_{i-1} \leq m/2^{i-1}$ edge deletions (which bounds the number of connected components that it spans). The new minimum spanning forest $F_i$ can then be obtained by constructing the minimum spanning forest on the reduced graph and making appropriate changes to $F_{i-1}$. Again, this can be implemented straightforwardly in time $\tilde{O}(m/2^i)$. The same argument upper bounds the size of the reduced graph $\bar{G}_i$ that is constructed and input to the algorithm from \Cref{thm:edgeSparsifierMST}, which again only requires $\tilde{O}(m/2^i)$ initialization time. Since layer $i$ is (re-)initialized every $x_i$ time steps, the amortized update time is $\tilde{O}(m/2^i \cdot 1/x_i) = \tilde{O}(1)$, as desired.

    Further, each update at layer $i$ costs at most $\tilde{O}(1)$ time. There are only this many updates to $G_{i}$ for a single update to $G$, and each update is almost immediately forwarded to the data structure $\mathcal{D}_i$, which processes each update in time $\tilde{O}(1)$ by \Cref{thm:edgeSparsifierMST}.

    Since each edge sparsification algorithm from \Cref{thm:edgeSparsifierMST} succeeds with high probability or reports failure, so does our algorithm, as we can simply forward the reporting.
\end{proof}

\paragraph{De-amortization.} 

Lastly, we claim that \Cref{lma:runtime} can be de-amortized to achieve a worst-case $\tilde{O}(1)$ update time. Standard de-amortization techniques suggest mitigating the issue of high rebuild costs at a time step by initializing in the background, running at each stage only $\tilde{O}(1)$ operations, and only (re-)initializing the data structures once the rebuild is completed. This requires planning for delays, which causes us to use shorter rebuild periods $x_i$ to enforce the properties above. This increases the runtime only by a constant factor. We refer the reader to \cite{meierhans2026dynamic} for a more thorough discussion of such techniques and how they can be applied to this framework.

\subsection{Fully-Dynamic 2-Edge Connectivity}
\label{subsec:complete2EdgeConn}

For $2$-Edge-Connectivity, we suggest the same algorithm as presented in \Cref{subsec:completeMST} with two simple modifications:
\begin{enumerate}
    \item When (re-)initializing a layer $i$, instead of letting $F_i$ be the minimum spanning forest of $G_i$, we let $F_{i}$ be an arbitrary spanning forest of $G_{i-1}$ that extends $F_{i-1}$, i.e., $F_{i-1} \subseteq F_i$.
    \item For the edge sparsification data structure  $\mathcal{D}_i$, we take the algorithm from \Cref{thm:edgeSparsifier2conn} instead of \Cref{thm:edgeSparsifierMST}. 
\end{enumerate}
It is not hard to see that the proof of \Cref{inv:keyInv} and the runtime analysis in \Cref{lma:runtime} (and its de-amortization) are unaffected by these changes. Additionally, by (re-)initializing $F_i$ as an extension of $F_{i-1}$, we have that $F_0 = (V, \emptyset) \subseteq F_1 \subseteq F_2 \subseteq \ldots \subseteq F_{\Lambda}$ at all times.

\paragraph{Implementing $2$-Edge Connectivity Queries.}

To implement queries, we use the nestedness property of these forests: for each data structure $\mathcal{D}_i$, we obtain, along with the sparsifier $\bar{H}_i$, a set $\bar{N}_i$ of all edges that are non-bridges in $\bar{G}_i$. Recall that $\bar{G}_i = \bar{F}_i \cup \bar{E}_i$. Each edge $e =xy \in \bar{F}_i$ corresponds to a path in $F_i$, and thus in $F_{\Lambda}$ by nestedness. Recall that $\bar{E}_i \subseteq E$. Now, for each edge $\bar{e} = xy \in \bar{N}_i$:
\begin{itemize}
    \item \underline{if $\bar{e} \in \bar{F}_i$:} then we mark all edges on the $x$-to-$y$ path in $F_{\Lambda}$ as covered. 
    \item \underline{otherwise:} we have $\bar{e} \in \bar{E}_i \subseteq E$, and we simply cover $e = \bar{e}$. 
\end{itemize}

Upon query, we return that $G$ is $2$-edge connected if and only if $G$ is connected and every edge in $F_{\Lambda}$ is covered. Upon query for arbitrary vertices $u,v \in V$, we return that $u,v$ are $2$-edge connected if and only if there is a $u$-to-$v$ path in $F_{\Lambda}$ and every edge on the path is covered.

We next prove the correctness of our query implementation.

\paragraph{Correctness of the Query.} We first prove the following basic claim on the usefulness of the covering information on reduced graphs $\bar{G}_i$.

\begin{claim}\label{clm:reducedGraphCorrespondence}
For every layer $i$, and edge $e \in G_i$, we have that if $\bar{e} = \bot$, then $e$ is a bridge in $G_i$, and otherwise, $e$ is $2$-edge connected in $G_i$ if and only if $\bar{e}$ is $2$-edge connected in $\bar{G}_i$.
\end{claim}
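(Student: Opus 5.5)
I will prove \Cref{clm:reducedGraphCorrespondence} by treating the reduction from $G_i$ to $\bar{G}_i$ as a sequence of elementary operations that each preserve $2$-edge connectivity: (a) deleting edges of $F_i$ that get removed entirely, and (b) contracting maximal internal paths of $F_i$ into single edges. The plan is to argue at the level of cuts. An edge $e=xy$ is a bridge in a graph iff there is a cut whose only crossing edge is $e$. I then relate cuts of $G_i$ to cuts of $\bar{G}_i$.

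\textbf{Step 1: the case $\bar{e}=\bot$.} Such an edge $e\in F_i$ lies in a subtree that was removed when greedily deleting degree-$1$ vertices outside $B_i\cup V(\bar{E}_i)$. Let $e=xy$ with $y$ on the pruned side. The pruned side $S$ is the component of $F_i\setminus e$ containing $y$, and it contains no endpoint of an off-forest edge of $\bar{E}_i$. I first argue this for the (re-)initialization time. Since $G_i=F_i\cup(\bar{H}_i\setminus\bar{F}_i)$ and $\bar{H}_i\setminus\bar{F}_i\subseteq\bar{E}_i$, every edge of $G_i$ leaving $S$ is either in $F_i$ or in $\bar{E}_i$. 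Edges of $\bar{E}_i$ have no endpoint in $S$, and the only $F_i$-edge leaving $S$ is $e$. Hence $\partial_{G_i}S=\{e\}$, so $e$ is a bridge.

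\textbf{Extending Step 1 to later times.} Between rebuilds, two kinds of change occur. First, inserted edges (from $G$, or new sparsifier edges) could touch $S$. Sparsifier additions lie in $\bar{E}_i$ or are images of $F_i$-paths, so they never touch pruned vertices. For insertions into $G$, I will need the convention that an inserted edge becomes an off-forest edge of $\bar{G}_i$ whose endpoints are made terminals (splitting the corresponding contracted edge). This re-defines $\bar{e}$ for the affected $F_i$-edges, so the claim stays consistent. I expect this bookkeeping to need a short invariant stated for all times, not only at rebuilds. Second, deletions only shrink the boundary of $S$, which keeps $e$ a bridge.

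\textbf{Step 2: the case $\bar{e}\neq\bot$.} Let $W=B_i\cup V(\bar{E}_i)$, and consider the graph $G_i'$ obtained from $G_i$ by removing all pruned subtrees. By Step 1, those subtrees attach only through bridges. Removing them therefore does not change the $2$-edge connectivity status of any remaining edge, since a pruned subtree cannot lie on a cycle through a remaining edge.

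Next, $\bar{G}_i$ is obtained from $G_i'$ by suppressing degree-$2$ vertices outside $W$. Such vertices have degree $2$ in $G_i'$, because no $\bar{E}_i$ edge touches them. Suppressing a degree-$2$ vertex $v$ with incident edges $e_1,e_2$ into a single edge $\bar{e}$ preserves cycles bijectively: a cycle uses $e_1$ iff it uses $e_2$ iff the image uses $\bar{e}$. Thus $e_1$ (or $e_2$) is a non-bridge iff $\bar{e}$ is a non-bridge. Iterating this along each maximal path shows that an $F_i$-edge $e$ is $2$-edge connected in $G_i$ iff $\bar{e}$ is $2$-edge connected in $\bar{G}_i$. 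For $e\in\bar{E}_i$ we have $\bar{e}=e$, and cycle correspondence again gives equivalence.

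\textbf{Main obstacle: keeping $G_i$ and $\bar{G}_i$ aligned.} The main obstacle is making this correspondence hold at every stage, not just at rebuilds. $G_i$ contains only those $\bar{E}_i$-edges currently in $\bar{H}_i$, while $\bar{G}_i$ contains all of $\bar{E}_i$. So I will apply the suppression argument to $F_i\cup\bar{E}_i$ versus $\bar{G}_i$, and separately invoke the sparsifier guarantee of \Cref{thm:edgeSparsifier2conn} for edges outside $\bar{N}_i$. A deletion of an $F_i$-edge $e$ deletes $\bar{e}$ together with the whole contracted path. The other edges on that path then become bridges in both graphs, and their images are now $\bot$ or absent, consistent with Step 1.
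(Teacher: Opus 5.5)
Your static core is the paper's proof. Pruned pendant subtrees hang off the forest by bridges; your $\partial S=\{e\}$ argument is the cut form of the paper's iterated degree-$1$ observation. Suppressing degree-$2$ non-terminals then gives a bijection on cycles. Two steps fail, though.

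\textbf{The graphs do not match.} The graph that is reduced to $\bar G_i$ is $G_{i-1}=F_i\cup\bar E_i$ at (re-)initialization, and that is the graph the paper's proof works in. Your Step~2 asserts that $\bar G_i$ arises from the pruned $G_i$ by suppression. It does not: $G_i$ keeps only the off-forest edges $\bar H_i\setminus\bar F_i$, so suppression yields $\bar F_i\cup(\bar H_i\setminus\bar F_i)$.

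You notice this at the end, but invoking \Cref{thm:edgeSparsifier2conn} for edges outside $\bar N_i$ cannot rescue the $G_i$-version, because that version is false for $\bar e\in\bar N_i$. Suppose $\bar G_i$ is a single cycle $C$, consisting of one off-forest edge $f$ and a path of $\bar F_i$, and the expander decomposition inside $\mathcal D_i$ puts $C$ into one cluster. Then all of $C$ lies in $\bar N_i$, and $\msf(\bar G_i\cap\bar N_i)$ may discard $f$. In that case $G_i=F_i$ is a forest. So every $e$ with $\bar e\in C$ is a bridge of $G_i$, although $\bar e$ is $2$-edge connected in $\bar G_i$.

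What you can prove, and what the paper's argument actually proves, is the statement with $G_{i-1}$ in place of $G_i$. Transferring it to $G_i$ for edges with $\bar e\notin\bar N_i$ is the job of \Cref{clm:iCovered}, not of this claim.

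\textbf{The ``later times'' extension does not go through.} The paper's proof only concerns the moment of (re-)initialization. Your convention, that an inserted edge becomes an off-forest edge of $\bar G_i$ with new terminals splitting a contracted edge, changes the algorithm. It also requires insertions into $\bar G_i$, which $\mathcal D_i$ cannot handle, since \Cref{thm:edgeSparsifier2conn} is for decremental graphs. In the paper, insertions only enter $G_i$ and are absorbed at later rebuilds (layer $\Lambda$ is rebuilt every step).

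Your assertion that sparsifier additions never touch pruned vertices also holds only for $\mathcal D_i$'s own additions. Edges that $\mathcal D_{i-1}$ (or any lower layer) adds to $G_{i-1}$ between rebuilds of layer $i$ are forwarded into $G_i$. They are not in $\bar E_i$ and can be incident to pruned vertices, so a $\bot$-edge can stop being a bridge of $G_i$. Hence, with insertions, the claim is not an invariant of the paper's algorithm.

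Only under deletions do your remarks on shrinking boundaries and broken contracted paths give a valid extension, comparing $\bar G_i$ with its current preimage $F_i\cup\bar E_i$.
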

\begin{proof}
Any edge that is incident to a degree $1$ vertex is clearly a bridge. Removing a bridge from a graph does not change the connectivity of any other edge. Thus, greedily removing degree $1$ vertices in $F_i$ that are not in $B_i \cup V(\bar{E}_i)$ (and thus also have degree $1$ in $G_{i-1}$) can only remove bridges of $G_{i-1}$.

For the remaining edges of $G_{i-1}$, some paths where all internal vertices have degree $2$ are contracted into a single edge to obtain $\bar{G}_i$. Note that the endpoints of an edge $e = xy$ are $2$-edge connected if and only if $e$ is on some cycle in $G_{i-1}$. But for a degree $2$ vertex $x$, both incident edges must be on any such cycle, and thus $x$ can be contracted into one of its neighbors. Since a one-to-one correspondence between cycles can be established, this also does not affect the $2$-edge connectivity of any other edge in the graph. The claim follows.
\end{proof}

\begin{claim}[Correctness of Covering Information]\label{clm:iCovered}
    For any layer $i$, we say that an edge $e \in E$ is $i$-covered if it is covered by any of the sets $\bar{N}_1, \bar{N}_2, \ldots, \bar{N}_i$. For every edge $e \in G_i$, if it is $i$-covered, then its endpoints are $2$-edge connected in $G$. Otherwise, it is $2$-edge connected in $G$ if and only if it is $2$-edge connected in $G_i$.
\end{claim}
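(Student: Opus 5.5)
We argue by induction on the layer $i$, with the case $i = 0$ trivial: $G_0 = G$ and no edge is $0$-covered, so the "otherwise" clause is a tautology. For the inductive step we assume the statement for layer $i-1$ and take an edge $e \in G_i \subseteq G_{i-1}$. Two ingredients connect layer $i-1$ to layer $i$. The first is \Cref{clm:reducedGraphCorrespondence}, which transfers $2$-edge connectivity between $G_{i-1}$ and the reduced graph $\bar{G}_i$; it is applied to the graph $G_{i-1}$ from which $\bar{G}_i$ is built, and its proof gives exactly this. The second is the guarantee of \Cref{thm:edgeSparsifier2conn} for $\mathcal{D}_i$ on $\bar{G}_i$, relating $\bar{G}_i$, the non-bridge set $\bar{N}_i$, and the sparsifier $\bar{H}_i$.

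\emph{Soundness of covering.} Suppose first that $e$ is covered by $\bar{N}_i$. Then either $e \in \bar{E}_i$ with $e \in \bar{N}_i$, or $e$ lies on the $F_\Lambda$-path of some $\bar{f} = xy \in \bar{N}_i \cap \bar{F}_i$. By \Cref{thm:edgeSparsifier2conn}, $\bar{f}$ lies on a cycle in $\bar{G}_i$. Uncontracting the degree-$2$ paths turns this cycle into a cycle of $G_{i-1}$ that contains the whole $F_i$-path represented by $\bar{f}$. By nestedness, $F_i \subseteq F_\Lambda$ and this path is the $x$-to-$y$ path in $F_\Lambda$, so every edge on it lies on a cycle of $G_{i-1} \subseteq G$; in particular its endpoints are $2$-edge connected in $G$. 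If instead $e$ is covered by some $\bar{N}_j$ with $j < i$, then $e$ is $(i-1)$-covered and the induction hypothesis, applied since $e \in G_{i-1}$, gives the conclusion.

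\emph{The uncovered case.} Now suppose $e$ is not $i$-covered. By induction, $e$ is $2$-edge connected in $G$ if and only if it is $2$-edge connected in $G_{i-1}$. It therefore remains to show that $e$ is $2$-edge connected in $G_{i-1}$ if and only if it is $2$-edge connected in $G_i$. The reverse direction is immediate from $G_i \subseteq G_{i-1}$. For the forward direction, \Cref{clm:reducedGraphCorrespondence} gives $\bar{e} \neq \bot$ and $\bar{e}$ $2$-edge connected in $\bar{G}_i$. Since $e$ is not covered by $\bar{N}_i$, we have $\bar{e} \notin \bar{N}_i$; in the forest case this holds because $\bar{e} \in \bar{N}_i$ would cover the whole path, including $e$. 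The second bullet of \Cref{thm:edgeSparsifier2conn} then makes $\bar{e}$ $2$-edge connected in $\bar{H}_i$. Finally, $G_i = F_i \cup (\bar{H}_i \setminus \bar{F}_i)$, and adding back the bridges pruned from $G_{i-1}$ and uncontracting the paths produces $G_i$ from the reduced graph of $\bar{H}_i$, exactly as in the proof of \Cref{clm:reducedGraphCorrespondence}. Hence the cycle through $\bar{e}$ in $\bar{H}_i$ lifts to a cycle through $e$ in $G_i$.

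The main obstacle is justifying this last lifting step in the dynamic setting, where $F_i$, $\bar{F}_i$ and $\bar{G}_i$ have undergone deletions since the last (re-)initialization. The point to check is that the correspondence of \Cref{clm:reducedGraphCorrespondence} is maintained under the update rules. A deletion of $e \in F_i$ deletes $\bar{e}$; since $\bar{e}$ stands for the entire path, the remaining path edges become bridges in $G_i$ while $\bar{e}$ is absent from $\bar{G}_i$, so their status is consistent. Inserted edges go into $G_i$ directly, and since they are absent from $\bar{G}_i$ they are uncovered at layer $i$. Here I would reduce to the statement at (re-)initialization and then argue that the update rules keep $G_i$ equal to the lift of $\bar{H}_i$ together with the inserted edges. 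The insertions need particular care: a newly inserted edge can create cycles in $G_i$ that have no counterpart in $\bar{G}_i$. My plan is to handle this by proving a slightly stronger statement, namely that $G_i \cup I$ and $G_{i-1} \cup I$ have the same $2$-edge connectivity on non-covered edges, where $I$ is the set of inserted edges. This works because inserting the same edge set into two graphs that agree on connectivity after removing any single non-covered edge preserves that agreement.
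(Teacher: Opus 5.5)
This is the paper's proof: induction on $i$, combining \Cref{clm:reducedGraphCorrespondence} (read, as you correctly do, as relating $G_{i-1}$ and $\bar G_i$) with the guarantees of \Cref{thm:edgeSparsifier2conn} for $\mathcal{D}_i$; your write-up is more detailed than the paper's two-line argument, which does not discuss edges inserted into $G_{i-1}$ since the last (re-)initialization at all. Your plan for those insertions works if the invariant you carry is the component-level one: for the insertion-free parts $G_i^\circ\subseteq G_{i-1}^\circ$ (same components, because $\bar H_i$ contains a maximal spanning forest of $\bar G_i$), $G_{i-1}^\circ\setminus e$ and $G_i^\circ\setminus e$ have identical connected components for every non-covered $e$ — this follows from your per-edge equivalence, since the subgraph's components refine the supergraph's and the component counts agree — and, unlike per-edge agreement, this property is preserved when the same set $I$ is added to both graphs.
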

\begin{proof}
We prove the claim by induction on $i$. For $i = 0$, the claim is vacuously true since $G_0 = G$. For $i > 0$, we have that $G_i$ is the pre-image of the sparsifier $\bar{H}_i$ w.r.t. $F_i$ of the reduced graph $\bar{G}_i$. By \Cref{clm:reducedGraphCorrespondence} and \Cref{thm:edgeSparsifier2conn}, the claim follows.
\end{proof}

We have, from \Cref{clm:iCovered} and $G_{\Lambda} = F_{\Lambda}$ spanning the same components as $G$ (by \Cref{inv:keyInv}), that every edge in $\bar{F}_{\Lambda}$ is covered if and only if it is $2$-edge connected in $G$. It remains to allude to the well-known fact (see, for example, \cite{frederickson1997ambivalent}) that every two vertices $x,y \in V$ are $2$-edge connected in $G$ if and only if, for a maximal spanning forest $F_{\Lambda}$ of $G$, each edge on the $x$-to-$y$ path in $F_{\Lambda}$ is $2$-edge connected. 

\paragraph{Runtime Analysis.} It remains to give a runtime analysis.

\begin{claim}[Runtime for Covering Information]
The above algorithm can be implemented to run with initialization time $\tilde{O}(m)$, worst-case update and query time $\tilde{O}(1)$. The algorithm succeeds w.h.p. and withstands an adaptive adversary.
\end{claim}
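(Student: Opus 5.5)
The plan splits the cost into three parts: maintaining the hierarchy of graphs $G_i$ and forests $F_i$; maintaining the covering marks on $F_{\Lambda}$ induced by the sets $\bar{N}_i$; and answering queries. The first part is identical to the MSF framework. The proof of \Cref{lma:runtime} and its de-amortization apply verbatim once $\mathcal{D}_i$ is the data structure of \Cref{thm:edgeSparsifier2conn}, which has the same $\tilde{O}(m)$ initialization time, $\tilde{O}(1)$ worst-case update time and $\eta$-recourse guarantees as \Cref{thm:edgeSparsifierMST}. Choosing $F_i$ to extend $F_{i-1}$ costs no more than computing an MSF on the reduced graph, since we can give the edges of $F_{i-1}$ weight $-\infty$. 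Hence maintaining the $G_i$, $F_i$, $\bar{G}_i$ and $\bar{N}_i$ takes $\tilde{O}(1)$ worst-case time per update after de-amortization. Success w.h.p. against an adaptive adversary is inherited, because the only randomness is in the expander decompositions inside each $\mathcal{D}_i$, and the adversary's view reveals nothing beyond what those guarantees already withstand.

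For the covering marks, I would store $F_{\Lambda}$ in a top-tree (or link-cut tree) structure supporting two operations in $O(\log n)$ time: path-add, which adds $\pm 1$ to a cover counter on every edge of an $x$-to-$y$ path; and path-min, which returns the minimum counter on a path. An edge of $F_\Lambda$ counts as covered iff its counter is positive. Non-tree edges $e \in \bar{E}_i \cap \bar{N}_i$ just carry a flag. Each $\bar{e} = xy \in \bar{N}_i \cap \bar{F}_i$ contributes a path-add of $+1$ on the $x$-to-$y$ path in $F_\Lambda$, and that path is well defined by nestedness $F_i \subseteq F_\Lambda$.

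Per update, \Cref{thm:edgeSparsifier2conn} guarantees that $\bar{N}_i$ loses only $\tilde{O}(1)$ edges, and each removal triggers one path-add of $-1$. The subtle part is that $F_\Lambda$ itself changes: a tree edge deleted from $F_i$ is deleted from every $F_j$ with $j \ge i$, and $\bar{e}$ is removed from $\bar{G}_i$. So I must argue that a path contribution is always withdrawn before, or together with, the destruction of its path. This holds because deleting $e \in F_i$ deletes $\bar{e}$ from $\bar{G}_i$ and therefore from $\bar{N}_i$. The only other paths through $e$ are those of edges $\bar{e}' \neq \bar{e}$ that contain the same tree edge, which cannot happen since contracted paths are disjoint. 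Edges inserted into $F_\Lambda$ appear only when layer $\Lambda$ or a lower layer is rebuilt.

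The main obstacle is therefore rebuilds. When layer $i$ is reinitialized, all counters contributed by layers $\ge i$ must be recomputed, giving $|\bar{N}_j| = \tilde{O}(m/2^j)$ path-adds per layer $j \ge i$. This matches the $\tilde{O}(m/2^i)$ rebuild budget, so it can be amortized over $x_i$ steps and then de-amortized by the same background-rebuild technique. To keep this clean, I would maintain separate counters per layer, or equivalently versioned copies through persistence, so that the counters of layers $< i$ stay valid while layer $i$ is rebuilt in the background and swapped in.

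Queries are then answered as follows. For the global query: check $|F_\Lambda| = n - c(G)$, which the connectivity structure provides, and check that the global minimum counter over $F_\Lambda$ is positive. For a pair $u,v$: check connectivity in $F_\Lambda$ and that path-min on the $u$-to-$v$ path is positive. Both take $O(\log n)$ time, and correctness follows from \Cref{clm:iCovered} as argued above.
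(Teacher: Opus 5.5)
Your route is the paper's. You keep a link-cut/top tree over $F_\Lambda$ with per-edge counters, apply a $\pm 1$ path-add for each $\bar e = xy \in \bar N_i \cap \bar F_i$, withdraw the mass of edges leaving $\bar N_i$ before $F_\Lambda$ is modified, get $\tilde O(1)$ such events per update from \Cref{thm:edgeSparsifier2conn}, and answer queries by path-min. Your $\tilde O(m/2^i)$ accounting for recomputing the contributions of layers $\ge i$ at a rebuild is more explicit than the paper, which only appeals to \Cref{lma:runtime}.

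There is one genuine gap. You let $e \in \bar E_i \cap \bar N_i$ ``just carry a flag'', while queries read only the counters on $F_\Lambda$. This treats $\bar E_i$ as edges outside $F_\Lambda$, but $\bar E_i$ is off-forest only relative to $F_i$. When a layer $j > i$ is rebuilt, $F_j$ extends the current, shrunken $F_{j-1}$ by edges of $G_{j-1} \setminus F_{j-1}$. These are surviving off-forest edges of lower layers, so such an $e$ can lie in $F_\Lambda$.

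Layer $i$ can also be the only certificate for such an $e$, since $\mathcal{D}_i$ keeps only a spanning forest of $\bar G_i \cap \bar N_i$. For example, suppose $e, e' \in \bar N_i$ both cross the cut of a deleted $f \in F_i$, and only $e$ among the edges across that cut survives in $\bar H_i$. Then $e$ is a bridge of $G_i$ and becomes the replacement edge in $F_{i+1}$. No higher layer covers it, its counter is $0$, and your query reports a non-bridge as a bridge.

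The fix is the paper's rule: a covered $\bar e = e \in \bar E_i$ adds one unit of mass to the edge $e$ itself. Store this unit on $e$'s node even while $e \notin F_\Lambda$, and remove it when $e$ leaves $\bar N_i$. That way it is present when $e$ is linked into $F_\Lambda$ at a rebuild of some layer $j > i$. Your rebuild step recomputes only layers $\ge j$ and would otherwise miss it.

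Two smaller slips. First, the global query must test $c(G) = 1$, i.e.\ $|F_\Lambda| = n-1$. The identity $|F_\Lambda| = n - c(G)$ holds at all times by the invariant and certifies nothing. Second, your disjointness argument for withdrawn paths only works per layer. A deleted tree edge $e$ lies on at most one contracted path at every layer $j$ with $e \in F_j$, not just one overall. All of these are withdrawn because the algorithm deletes the corresponding $\bar e$ from each such $\bar G_j$, and hence from $\bar N_j$. You should state it that way, since ``contracted paths are disjoint'' holds only within a single layer.
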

\begin{proof}
Since \Cref{lma:runtime} (and its de-amortization) extends, we only analyze the additional overhead from organizing covering information. 

To this end, we use a link-cut data structure to maintain the forest $F_{\Lambda}$ explicitly along with the covering information. That is, for every layer $i$ and edge $\bar{e} \in \bar{N}_i$, if $\bar{e} = xy \in \bar{F}_i$, we add one unit of mass to each edge on the $x$-to-$y$ path in $F_{\Lambda}$, and if it is an edge $\bar{e} = e \in F_{\Lambda}$, we simply add one unit of mass to this edge. We maintain with each edge $e \in F_{\Lambda}$ a counter $\Delta(e)$ that tracks the number of times that some unit of mass was added. Note that adding or subtracting one unit of mass to a path/edge can be done in $\tilde{O}(1)$ worst-case time via the link-cut data structure.

For queries, the link-cut data structure allows us to ask in $\tilde{O}(1)$ worst-case time for an edge $e$ in the forest with minimal mass $\Delta(e)$, or given vertices $x, y \in V$, for an edge $e$ of minimal mass $\Delta(e)$ along the $x$-to-$y$ path. In either case, if $\Delta(e) = 0$, it is not covered and we can output $e$ as a bridge; otherwise, we confirm $2$-edge connectivity. 

Finally, to update the mass $\Delta(e)$ in $F_{\Lambda}$ efficiently, we use the following strategy: we let all other data structures process the current update to $G$; then we learn about all edges $\bar{e} = xy$ that are removed from $\bar{N}_i$ on every layer $i$. There are only $\tilde{O}(1)$ such edges by \Cref{thm:edgeSparsifier2conn}, and we can remove the mass from all the corresponding paths/edges in $F_{\Lambda}$. We then process the updates to $F_{\Lambda}$; there should be only one such edge change. Finally, we add mass along all $x$-to-$y$ paths for edges $\bar{e} = xy$ that joined $\bar{N}_i$ at the current stage.

The above update strategy enforces that every update is along a tree path in $F_{\Lambda}$ and can thus be processed in $\tilde{O}(1)$ worst-case update time, as desired.
\end{proof}

%% file: appendix.tex
\section{Proofs of the centroid properties}
\label{app:centroid}

In this section, we prove two claims about centroids, which are used in \Cref{subsubsec:recourse} for the recourse analysis of our $2$-edge connectivity sparsifier.

\robustToMeasureChange*

\begin{proof}
    Consider first the case where the measure is decreased in a coordinate $u$, i.e., $\mu'(u) = \mu(u) - 1$. Clearly, only vertices from $C_{F', u}$ can be in $Q$ since all other components have no change to the incident measure. Further, if $\mu'(C_{F', u}) \leq 4\gamma/\phi + 3 = \tilde{O}(1)$, then the claim is vacuously true since every vertex in $C_{F', u}$ is incident to at least one edge in $F'$ and $\mu'$ dominates $\deg_{F'}$. Let us therefore assume that $\mu'(C_{F', u}) > 4\gamma/\phi + 3$. 

    Next, we argue that for any vertex $q$ with $\mu'(T_{F', \mu', q}) \leq 2\gamma/\phi$, we have $T_{F', \mu', q} = T_{F', \mu, q}$. This follows because by assumption $\mu(T_{F', \mu', q}) \leq \mu'(T_{F', \mu', q}) + 1 \leq 2\gamma/\phi + 1$ and since $\mu(C_{F', u}) \geq \mu'(C_{F', u}) - 1 > 4\gamma/\phi + 2$. Thus, $\mu(T_{F', \mu', q}) \leq \mu(C_{F', q})/2$. But this implies that the first edge on the $x$-to-centroid path in $F'$ coincides, and thus $T_{F', \mu', q} = T_{F', \mu, q}$. 

    It follows that for all $q \in Q$, $T_{F', \mu', q} = T_{F', \mu, q}$, and so we must have $\mu'(T_{F', \mu', q}) \neq \mu(T_{F', \mu', q})$. We subsequently exploit that the trees $T_{F', \mu', q}$ for $q \neq c_{F', \mu'}(q)$ form a laminar family. This yields that all vertices $q \in Q$ are on the $u$-to-$c_{F', \mu'}(u)$ path $u = q_1, q_2, \ldots, q_k = c_{F', \mu'}(u)$ since otherwise $u \not\in T_{F', \mu', q}$ and thus the incident measure is not changed. But note that we have $q_1, \ldots, q_i$ in $T_{F', \mu', q_i}$ and since each vertex $q_i$ for $i > 0$ has degree at least $1$ in $F'$, and $\mu'(q_i) = \mu(q_i) \geq \deg_{F'}(q_i) \geq 1$, we have that for $i > 2\gamma/\phi + 1$, $\mu'(T_{F', \mu', q_i}) > 2\gamma/\phi$ and thus $q_i \not\in Q$. It follows that at most $2\gamma/\phi+1$ vertices can be in $Q$ and since $2\gamma/\phi = \tilde{O}(1)$. See \Cref{fig:volum_dec} for an illustration of this concept. 

    The reasoning for the case where the measure increases in a single vertex is almost identical.
\end{proof}

\begin{figure}[ht]
\centering
\begin{tikzpicture}[
    scale=1,
    every node/.style={circle, draw, inner sep=1.5pt},
    tree edge/.style={thick},
    subtree node/.style={circle, draw, inner sep=0pt},
]

\node (x) at (0,-0.3) {};
\node (x1) at (-1,-1) {};
\node (x2) at (1,-1) {};
\node (y1) at (-2,-2) {};
\node (z5) at (-.5,-3) {};
\node (y2) at (-1,-2) {};
\node[subtree node] (z1) at (-2.3,-3) {};
\node[subtree node] (z2) at (-1.7,-3) {};

\node[subtree node] (z3) at (-2.3+3,-2) {};
\node[subtree node] (z4) at (-1.7+3,-2) {};

\draw[ultra thick, red] (x)--(x1);
\draw[tree edge] (x)--(x2);
\draw[tree edge] (x1)--(y1);
\draw[ultra thick, red] (x1)--(y2);
\draw[ultra thick, red] (z5)--(y2);

\draw[tree edge] (z1)--(z2);
\draw[tree edge] (y1)--(z2);
\draw[tree edge] (y1)--(z1);

\draw[tree edge] (z3)--(z4);
\draw[tree edge] (x2)--(z4);
\draw[tree edge] (x2)--(z3);

\node[draw=none] at (0, 0.1) {$c(u)$};
\node[draw=none] at (-.5,-3.3) {$u$};

\end{tikzpicture}
\caption{In the proof of \Cref{clm:robustToMeasureChange}, we argue that only trees $T_q$ are affected for $q$ on the $u$-to-$c(u)$ path. But since these trees are contained in each other, and since they contain all vertices (and their subtrees) that are closer to $u$ by construction, after distance $2\gamma/\phi$ from $u$, the trees become too heavy for their roots to belong to $Q$.}
\label{fig:volum_dec}
\end{figure}

\robustToForestChange*

\begin{proof}
    Let us first consider a deletion of an edge $f = uv$ from $F'$. Clearly, $u$ and $v$ are in the same connected component of $F'$, which is then partitioned into two components in $F''$, one containing $u$ and one containing $v$. Any vertex in a connected component $C \neq C_{F', u}$ is not incident to any change and cannot be in $Q$. For the rest of the proof, let us bound the number of vertices from $C_{F'', u}$ that are in $Q$ by $\tilde{O}(1)$. By symmetry, the same bound then extends to $C_{F'', v}$ and establishes the overall claim.

    Next, consider any vertex $q \in C_{F'', u}$ that is not on the $u$-to-$c_{F'', \mu}(u)$ path. It is not hard to see that then $T_{F'', \mu, q} = T_{F', \mu, q}$ and $u \not\in T_{F'', \mu, q}$, so the vertex $q$ cannot be in $Q$. Let us next consider the vertices $q_i$ on the $u$-to-$c_{F'', \mu}(u)$ path $u = q_1, q_2, \ldots, q_k = c_{F'', \mu}(u)$. Since the trees $T_{F'', \mu, q}$ for $q \neq c_{F'', \mu}(q)$ form a laminar family, we again have that for any $i$, $q_1, q_2, \ldots, q_i$ are contained in $T_{F'', \mu, q_i}$. But for $i \geq 2\gamma/\phi$,
    \[
        \mu(T_{F'', \mu, q_i}) \geq \mu(\{q_1, q_2, \ldots, q_i\}) \geq \deg_{F'}(\{q_1, q_2, \ldots, q_i\}) \geq i \geq 2\gamma/\phi
    \]
    and thus $q_i \not\in Q$. Thus, only the first $2\gamma/\phi$ vertices on the path can be contained in $Q$. Thus, at most $2\gamma/\phi = \tilde{O}(1)$ vertices from $C_{F'', u}$ can be in $Q$, as desired.

    The same reasoning almost seamlessly extends to arguing about inserting an edge into $F'$.
\end{proof}

%% file: refs.bib
@article{frederickson1985data,
  title={Data structures for on-line updating of minimum spanning trees, with applications},
  author={Frederickson, Greg N},
  journal={SIAM Journal on Computing},
  volume={14},
  number={4},
  pages={781--798},
  year={1985},
  publisher={SIAM}
}

@inproceedings{driscoll1986making,
  title={Making data structures persistent},
  author={Driscoll, James R and Sarnak, Neil and Sleator, Daniel Dominic and Tarjan, Robert Endre},
  booktitle={Proceedings of the eighteenth annual ACM symposium on Theory of computing},
  pages={109--121},
  year={1986}
}

@inproceedings{SW19,
author = {Thatchaphol Saranurak and Di Wang},
title = {Expander Decomposition and Pruning: Faster, Stronger, and Simpler},
booktitle = {Proceedings of the 2019 Annual ACM-SIAM Symposium on Discrete Algorithms (SODA)},
chapter = {},
year = {2019},
pages = {2616-2635},
doi = {10.1137/1.9781611975482.162},
URL = {https://epubs.siam.org/doi/abs/10.1137/1.9781611975482.162},
eprint = {https://epubs.siam.org/doi/pdf/10.1137/1.9781611975482.162}
}

@inproceedings{patrascu2007planning,
  title={Planning for fast connectivity updates},
  author={Patrascu, Mihai and Thorup, Mikkel},
  booktitle={48th Annual IEEE Symposium on Foundations of Computer Science (FOCS'07)},
  pages={263--271},
  year={2007},
  organization={IEEE}
}

@article{spira1975finding,
  title={On finding and updating spanning trees and shortest paths},
  author={Spira, Philip M. and Pan, A},
  journal={SIAM Journal on Computing},
  volume={4},
  number={3},
  pages={375--380},
  year={1975},
  publisher={SIAM}
}

@inproceedings{meierhans2026dynamic,
  title={Dynamic Connectivity with Expected Polylogarithmic Worst-Case Update Time},
  author={Meierhans, Simon and Gutenberg, Maximilian Probst},
  booktitle={Proceedings of the 2026 Annual ACM-SIAM Symposium on Discrete Algorithms (SODA)},
  pages={6184--6198},
  year={2026},
  organization={SIAM}
}

@article{chin1978algorithms,
  title={Algorithms for updating minimal spanning trees},
  author={Chin, Francis and Houck, David},
  journal={Journal of Computer and System Sciences},
  volume={16},
  number={3},
  pages={333--344},
  year={1978},
  publisher={Elsevier}
}

@inproceedings{chuzhoy2020deterministic,
  title={A deterministic algorithm for balanced cut with applications to dynamic connectivity, flows, and beyond},
  author={Chuzhoy, Julia and Gao, Yu and Li, Jason and Nanongkai, Danupon and Peng, Richard and Saranurak, Thatchaphol},
  booktitle={2020 IEEE 61st Annual Symposium on Foundations of Computer Science (FOCS)},
  pages={1158--1167},
  year={2020},
  organization={IEEE}
}

@inproceedings{kapron,
  title={Dynamic graph connectivity in polylogarithmic worst case time},
  author={Kapron, Bruce M and King, Valerie and Mountjoy, Ben},
  booktitle={Proceedings of the twenty-fourth annual ACM-SIAM symposium on Discrete algorithms},
  pages={1131--1142},
  year={2013},
  organization={SIAM}
}

@INPROCEEDINGS{NSW17,
  author={Nanongkai, Danupon and Saranurak, Thatchaphol and Wulff-Nilsen, Christian},
  booktitle={2017 IEEE 58th Annual Symposium on Foundations of Computer Science (FOCS)}, 
  title={Dynamic Minimum Spanning Forest with Subpolynomial Worst-Case Update Time}, 
  year={2017},
  volume={},
  number={},
  pages={950-961},
  doi={10.1109/FOCS.2017.92}}

@article{holm01, author = {Holm, Jacob and de Lichtenberg, Kristian and Thorup, Mikkel}, title = {Poly-logarithmic deterministic fully-dynamic algorithms for connectivity, minimum spanning tree, 2-edge, and biconnectivity}, year = {2001}, issue_date = {July 2001}, publisher = {Association for Computing Machinery}, address = {New York, NY, USA}, volume = {48}, number = {4}, issn = {0004-5411}, url = {https://doi.org/10.1145/502090.502095}, doi = {10.1145/502090.502095}, journal = {J. ACM}, month = jul, pages = {723–760}, numpages = {38} }

@inproceedings{HenzingerKing95, author = {Henzinger, Monika Rauch and King, Valerie}, title = {Randomized dynamic graph algorithms with polylogarithmic time per operation}, year = {1995}, isbn = {0897917189}, publisher = {Association for Computing Machinery}, address = {New York, NY, USA}, url = {https://doi.org/10.1145/225058.225269}, doi = {10.1145/225058.225269}, booktitle = {Proceedings of the Twenty-Seventh Annual ACM Symposium on Theory of Computing}, pages = {519–527}, numpages = {9}, location = {Las Vegas, Nevada, USA}, series = {STOC '95} }

@article{it97, author = {Eppstein, David and Galil, Zvi and Italiano, Giuseppe F. and Nissenzweig, Amnon}, title = {Sparsification—a technique for speeding up dynamic graph algorithms}, year = {1997}, issue_date = {Sept. 1997}, publisher = {Association for Computing Machinery}, address = {New York, NY, USA}, volume = {44}, number = {5}, issn = {0004-5411}, url = {https://doi.org/10.1145/265910.265914}, doi = {10.1145/265910.265914}, journal = {J. ACM}, month = sep, pages = {669–696}, numpages = {28} }

@inproceedings{NS17, author = {Nanongkai, Danupon and Saranurak, Thatchaphol}, title = {Dynamic spanning forest with worst-case update time: adaptive, Las Vegas, and $O(n^{1/2} - \epsilon)$-time}, year = {2017}, isbn = {9781450345286}, publisher = {Association for Computing Machinery}, address = {New York, NY, USA}, url = {https://doi.org/10.1145/3055399.3055447}, doi = {10.1145/3055399.3055447}, booktitle = {Proceedings of the 49th Annual ACM SIGACT Symposium on Theory of Computing}, pages = {1122–1129}, numpages = {8}, location = {Montreal, Canada}, series = {STOC 2017} }

@inproceedings{W17, author = {Wulff-Nilsen, Christian}, title = {Fully-dynamic minimum spanning forest with improved worst-case update time}, year = {2017}, isbn = {9781450345286}, publisher = {Association for Computing Machinery}, address = {New York, NY, USA}, url = {https://doi.org/10.1145/3055399.3055415}, doi = {10.1145/3055399.3055415}, booktitle = {Proceedings of the 49th Annual ACM SIGACT Symposium on Theory of Computing}, pages = {1130–1143}, numpages = {14}, location = {Montreal, Canada}, series = {STOC 2017} }

@article{frederickson1997ambivalent,
author = {Frederickson, Greg N.},
title = {Ambivalent Data Structures for Dynamic 2-Edge-Connectivity and $k$ Smallest Spanning Trees},
journal = {SIAM Journal on Computing},
volume = {26},
number = {2},
pages = {484-538},
year = {1997},
doi = {10.1137/S0097539792226825},

URL = { 
    
        https://doi.org/10.1137/S0097539792226825
    
    

},
eprint = { 
    
        https://doi.org/10.1137/S0097539792226825
    
    

}
}

@inproceedings{expander_hierarchy,
  title={The expander hierarchy and its applications to dynamic graph algorithms},
  author={Goranci, Gramoz and R{\"a}cke, Harald and Saranurak, Thatchaphol and Tan, Zihan},
  booktitle={Proceedings of the 2021 ACM-SIAM Symposium on Discrete Algorithms (SODA)},
  pages={2212--2228},
  year={2021},
  organization={SIAM}
}

@inproceedings{kejlberg2016faster,
  title={Faster Worst Case Deterministic Dynamic Connectivity},
  author={Kejlberg-Rasmussen, Casper and Kopelowitz, Tsvi and Pettie, Seth and Thorup, Mikkel},
  booktitle={24th Annual European Symposium on Algorithms (ESA 2016)},
  pages={53--1},
  year={2016},
  organization={Schloss Dagstuhl--Leibniz-Zentrum f{\"u}r Informatik}
}

@inproceedings{holm2015faster,
  title={Faster fully-dynamic minimum spanning forest},
  author={Holm, Jacob and Rotenberg, Eva and Wulff-Nilsen, Christian},
  booktitle={Algorithms-ESA 2015: 23rd Annual European Symposium, Patras, Greece, September 14-16, 2015, Proceedings},
  pages={742--753},
  year={2015},
  organization={Springer}
}

@article{nagamochi1992linear,
  title={A linear-time algorithm for finding a sparse k-connected spanning subgraph of ak-connected graph},
  author={Nagamochi, Hiroshi and Ibaraki, Toshihide},
  journal={Algorithmica},
  volume={7},
  number={1},
  pages={583--596},
  year={1992},
  publisher={Springer}
}

@inproceedings{galil1991fully,
  title={Fully dynamic algorithms for edge connectivity problems},
  author={Galil, Zvi and Italiano, Giuseppe F},
  booktitle={Proceedings of the twenty-third Annual ACM Symposium on Theory of Computing},
  pages={317--327},
  year={1991}
}

@inproceedings{thorup2000near,
  title={Near-optimal fully-dynamic graph connectivity},
  author={Thorup, Mikkel},
  booktitle={Proceedings of the thirty-second annual ACM symposium on Theory of computing},
  pages={343--350},
  year={2000}
}

@inproceedings{holm2025fully,
  title={Fully dynamic biconnectivity in  $\tilde{O}(\log^2 n)$ time},
  author={Holm, Jacob and Nadara, Wojciech and Rotenberg, Eva and Soko{\l{}}owski, Marek},
  booktitle={Proceedings of the 57th Annual ACM Symposium on Theory of Computing},
  pages={156--165},
  year={2025}
}

@inproceedings{henzinger1995certificates,
  title={Certificates and fast algorithms for biconnectivity in fully-dynamic graphs},
  author={Henzinger, Monika R and La Poutr{\'e}, Han},
  booktitle={European Symposium on Algorithms},
  pages={171--184},
  year={1995},
  organization={Springer}
}

@article{henzinger2000improved,
  title={Improved data structures for fully dynamic biconnectivity},
  author={Henzinger, Monika R},
  journal={SIAM Journal on Computing},
  volume={29},
  number={6},
  pages={1761--1815},
  year={2000},
  publisher={SIAM}
}

@article{henzinger1995fully,
  title={Fully dynamic biconnectivity in graphs},
  author={Henzinger, Monika Rauch},
  journal={Algorithmica},
  volume={13},
  number={6},
  pages={503--538},
  year={1995},
  publisher={Springer}
}

@inproceedings{el2025fully,
  title={Fully dynamic approximate minimum cut in subpolynomial time per operation},
  author={El-Hayek, Antoine and Henzinger, Monika and Li, Jason},
  booktitle={Proceedings of the 2025 Annual ACM-SIAM Symposium on Discrete Algorithms (SODA)},
  pages={750--784},
  year={2025},
  organization={SIAM}
}

@inproceedings{de2025tree,
  title={Tree-packing revisited: Faster fully dynamic min-cut and arboricity},
  author={De Vos, Tijn and Christiansen, Aleksander BG},
  booktitle={Proceedings of the 2025 Annual ACM-SIAM Symposium on Discrete Algorithms (SODA)},
  pages={700--749},
  year={2025},
  organization={SIAM}
}

@inproceedings{jin2024fully,
  title={Fully dynamic min-cut of superconstant size in subpolynomial time},
  author={Jin, Wenyu and Sun, Xiaorui and Thorup, Mikkel},
  booktitle={Proceedings of the 2024 Annual ACM-SIAM Symposium on Discrete Algorithms (SODA)},
  pages={2999--3026},
  year={2024},
  organization={SIAM}
}

@article{thorup2007fully,
  title={Fully-dynamic min-cut},
  author={Thorup, Mikkel},
  journal={Combinatorica},
  volume={27},
  number={1},
  pages={91--127},
  year={2007},
  publisher={Springer}
}

@article{galil1992fully,
  title={Fully dynamic algorithms for 2-edge connectivity},
  author={Galil, Zvi and Italiano, Giuseppe F},
  journal={SIAM Journal on Computing},
  volume={21},
  number={6},
  pages={1047--1069},
  year={1992},
  publisher={SIAM}
}

@inproceedings{el2026deterministic,
  title={Deterministic and Exact Fully-dynamic Minimum Cut of Superpolylogarithmic Size in Subpolynomial Time},
  author={El-Hayek, Antoine and Henzinger, Monika and Li, Jason},
  booktitle={Proceedings of the 2026 Annual ACM-SIAM Symposium on Discrete Algorithms (SODA)},
  pages={613--663},
  year={2026},
  organization={SIAM}
}

@inproceedings{chalermsook2021vertex,
  title={Vertex sparsification for edge connectivity},
  author={Chalermsook, Parinya and Das, Syamantak and Kook, Yunbum and Laekhanukit, Bundit and Liu, Yang P and Peng, Richard and Sellke, Mark and Vaz, Daniel},
  booktitle={Proceedings of the 2021 ACM-SIAM Symposium on Discrete Algorithms (SODA)},
  pages={1206--1225},
  year={2021},
  organization={SIAM}
}

@inproceedings{chen23simple,
author = {Li Chen and Rasmus Kyng and Maximilian Probst Gutenberg and Sushant Sachdeva},
title = {A Simple Framework for Finding Balanced Sparse Cuts via {APSP}},
booktitle = {2023 Symposium on Simplicity in Algorithms (SOSA)},
chapter = {},
year = {2023},
pages = {42-55},
doi = {10.1137/1.9781611977585.ch5},
URL = {https://epubs.siam.org/doi/abs/10.1137/1.9781611977585.ch5},
eprint = {https://epubs.siam.org/doi/pdf/10.1137/1.9781611977585.ch5}
}

@inproceedings{meierhans2025expanderpruningpolylogarithmicworstcase,
      title={Expander Pruning with Polylogarithmic Worst-Case Recourse and Update Time}, 
      author={Simon Meierhans and Maximilian Probst Gutenberg and Thatchaphol Saranurak},
      year={2025},
    booktitle = {Proceedings of the 2026 Annual ACM-SIAM Symposium on Discrete Algorithms (SODA)}
}

@article{harel1982line,
  title={On line maintenance of the connected components of dynamic graphs},
  author={Harel, Dov},
  journal={Unpublished manuscript},
  volume={5},
  pages={95--97},
  year={1982}
}

@misc{gibb2015dynamicgraphconnectivityimproved,
      title={Dynamic graph connectivity with improved worst case update time and sublinear space}, 
      author={David Gibb and Bruce Kapron and Valerie King and Nolan Thorn},
      year={2015},
      eprint={1509.06464},
      archivePrefix={arXiv},
      primaryClass={cs.DS},
      url={https://arxiv.org/abs/1509.06464}, 
}

@misc{wang2015improvedrandomizeddatastructure,
      title={An Improved Randomized Data Structure for Dynamic Graph Connectivity}, 
      author={Zhengyu Wang},
      year={2015},
      eprint={1510.04590},
      archivePrefix={arXiv},
      primaryClass={cs.DS},
      url={https://arxiv.org/abs/1510.04590}, 
}

@article{huang2023fully,
    title      = {Fully Dynamic Connectivity in {$O(\log n(\log\log n)^2)$} Amortized Expected Time},
    author     = {Shang-En Huang and Dawei Huang and Tsvi Kopelowitz and Seth Pettie and Mikkel Thorup},
    url        = {https://theoretics.episciences.org/9645},
    doi        = {10.46298/theoretics.23.6},
    journal    = {TheoretiCS},
    issn       = {2751-4838},
    volume     = {Volume 2},
    eid        = 6,
    year       = {2023},
    month      = {May},
}

@inproceedings{WulffNilsen13, author = {Wulff-Nilsen, Christian}, title = {Faster deterministic fully-dynamic graph connectivity}, year = {2013}, isbn = {9781611972511}, publisher = {Society for Industrial and Applied Mathematics}, address = {USA}, booktitle = {Proceedings of the Twenty-Fourth Annual ACM-SIAM Symposium on Discrete Algorithms}, pages = {1757–1769}, numpages = {13}, location = {New Orleans, Louisiana}, series = {SODA '13} }

@inproceedings{eppstein90planarmst, author = {Eppstein, David and Italiano, Giuseppe F. and Tamassia, Roberto and Tarjan, Robert E. and Westbrook, Jeffery and Yung, Moti}, title = {Maintenance of a minimum spanning forest in a dynamic planar graph}, year = {1990}, isbn = {0898712513}, publisher = {Society for Industrial and Applied Mathematics}, address = {USA}, booktitle = {Proceedings of the First Annual ACM-SIAM Symposium on Discrete Algorithms}, pages = {1–11}, numpages = {11}, location = {San Francisco, California, USA}, series = {SODA '90} }

@article{EPPSTEIN1994237,
title = {Offline Algorithms for Dynamic Minimum Spanning Tree Problems},
journal = {Journal of Algorithms},
volume = {17},
number = {2},
pages = {237-250},
year = {1994},
issn = {0196-6774},
doi = {https://doi.org/10.1006/jagm.1994.1033},
url = {https://www.sciencedirect.com/science/article/pii/S0196677484710339},
author = {D. Eppstein}
}

@article{westbrook1992bridge,
	author = {Westbrook, Jeffery and Tarjan, Robert E.},
	date = {1992/06/01},
	doi = {10.1007/BF01758773},
	id = {Westbrook1992},
	isbn = {1432-0541},
	journal = {Algorithmica},
	number = {1},
	pages = {433--464},
	title = {Maintaining bridge-connected and biconnected components on-line},
	url = {https://doi.org/10.1007/BF01758773},
	volume = {7},
	year = {1992}}

@inproceedings{holm2017bridge,
author = {Jacob Holm and Eva Rotenberg and Mikkel Thorup},
title = {Dynamic Bridge-Finding in $\tilde{O}(\log^2 n)$ Amortized Time},
booktitle = {Proceedings of the 2018 Annual ACM-SIAM Symposium on Discrete Algorithms (SODA)},
year={2018},
chapter = {},
pages = {35-52},
doi = {10.1137/1.9781611975031.3},
URL = {https://epubs.siam.org/doi/abs/10.1137/1.9781611975031.3},
eprint = {https://epubs.siam.org/doi/pdf/10.1137/1.9781611975031.3}
}

@INPROCEEDINGS{jin2021fully,
  author={Jin, Wenyu and Sun, Xiaorui},
  booktitle={2021 IEEE 62nd Annual Symposium on Foundations of Computer Science (FOCS)}, 
  title={Fully Dynamic $s$-$t$ Edge Connectivity in Subpolynomial Time (Extended Abstract)}, 
  year={2022},
  volume={},
  number={},
  pages={861-872},
  doi={10.1109/FOCS52979.2021.00088}}
